\newtheorem{theorem}{Theorem}
\newtheorem{corollary}[theorem]{Corollary}
\newtheorem{definition}[theorem]{Definition}
\newtheorem{proposition}[theorem]{Proposition}
\newtheorem{remark}[theorem]{Remark}
\newtheorem{assumption}[theorem]{Assumption}
\newcommand\bbone{\ensuremath{\mathbbm{1}}}
\newcommand{\sgn}{\text{sgn}}
\def\bet{\mbox{\boldmath $\eta$}} 
\def\bmu{\mbox{\boldmath $\mu$}}
\def\bGamma{\mbox{\boldmath $\Gamma$}}
\def\bomega{\mbox{\boldmath $\omega$}} 
\def\bOmega{\mbox{\boldmath $\Omega$}}
\def\bxi{\mbox{\boldmath $\xi$}}
\def\bSigma{\mathbf{\Sigma}}
\def\bsigma{\mathbf{\sigma}}
\def\btau{\mbox{\boldmath $\tau$}}
\def\bPhi{\mbox{\boldmath $\Phi$}}
\def\bb{\mathbf{b}}
\def\bP{\mathbf{P}}
\def\bu{\mathbf{u}} 
\def\bt{\mathbf{t}}
\def\by{\mathbf{y}} 
\def\bY{\mathbf{Y}}
\def\0{\mbox{\bf{0}}}
\def\bZ{\mathbf{Z}}
\def\bX{\mathbf{X}} 
\def\bO{\mathbf{0}}
\def\bv{\mathbf{v}}%
\def\bq{\mathbf{q}}
\def\bD{\mathbf{D}}
\def\bU{\mathbf{U}}
\def\diag{\mbox{diag}}
\def\trace{\mbox{tr}}
\def\Var{\mbox{Var}}
\def\Cov{\mbox{Cov}}
\begin{document}

\def\spacingset#1{\renewcommand{\baselinestretch}%
{#1}\small\normalsize} \spacingset{1}

\title{The Sparse Multivariate Method\\ of Simulated Quantiles}

\author[1]{Mauro Bernardi\thanks{Corresponding author: Via C. Battisti, 241, 35121 Padua, Italy. e-mail: mauro.bernardi@unipd.it, Tel.: +39.049.8274165.}}
\author[2]{Lea Petrella\thanks{e-mail: lea.petrella@uniroma1.it.}}
\author[3]{Paola Stolfi\thanks{e-mail: paola.stolfi@uniroma3.it.}}
\affil[1]{Department of Statistical Sciences, University of Padova and Istituto per le Applicazioni del Calcolo ``Mauro Picone'' - CNR, Roma, Italy}
\affil[2]{MEMOTEF Department, Sapienza University of Rome}
\affil[3]{Department of Economics, Roma Tre University}

\date{\today}

\maketitle

\begin{abstract}
\noindent In this paper the method of simulated quantiles (MSQ) of \cite{dominicy_veredas.2013} and \cite{dominicy_etal.2013} is extended to a general multivariate framework (MMSQ) and to provide sparse estimation of the scaling matrix (Sparse--MMSQ).
The MSQ, like alternative likelihood--free procedures, is based on the minimisation of the distance between appropriate statistics evaluated on the true and synthetic data simulated from the postulated model. Those statistics are functions of the quantiles providing an effective way to deal with distributions that do not admit moments of any order like the $\alpha$--Stable or the Tukey lambda distribution. The lack of a natural ordering represents the major challenge for the extension of the method to the multivariate framework. Here, we rely on the notion of projectional quantile recently introduced by \cite{hallin_etal.2010} and \cite{kong_mizera.2012}. We establish consistency and asymptotic normality of the proposed estimator. The smoothly clipped absolute deviation (SCAD) $\ell_1$--penalty of \cite{fan_li.2001} is then introduced into the MMSQ objective function in order to achieve sparse estimation of the scaling matrix which is the major responsible for the curse of dimensionality problem. We extend the asymptotic theory and we show that the sparse--MMSQ estimator enjoys the oracle properties under mild regularity conditions. The method is illustrated and its effectiveness is tested using several synthetic datasets simulated from the Elliptical Stable distribution (ESD) for which alternative methods are recognised to perform poorly. The method is then applied to build a new network--based systemic risk measurement framework. The proposed methodology to build the network relies on a new systemic risk measure and on a parametric test of statistical dominance. 
\end{abstract}

\noindent%
{\it Keywords:} directional quantiles, method of simulated quantiles, sparse regularisation, SCAD, Elliptical Stable distribution, systemic risk, network risk measures.


\section{Introduction}
\label{sec:intro}
%
\noindent Model--based statistical inference primarily deals with parameters estimation. Under the usual assumption of data being generated from a fully specified model belonging to a given family of distributions $F_\vartheta$ indexed by a parameter $\vartheta\subset\Theta\in\mathbb{R}^p$, inference on the true unknown parameter $\vartheta_0$ can be easily performed by maximum likelihood. However, in some pathological situations the maximum likelihood estimator (MLE) is difficult to compute either because of the model complexity or because the probability density function is not analytically available. For example, the computation of the log--likelihood may involve numerical approximations or integrations that highly deteriorate the quality of the resulting estimates. 
Moreover, as the dimension of the parameter space increases the computation of the likelihood or its maximisation in a reasonable amount of time becomes even more prohibitive. In all those circumstances, the researcher should resort to alternative solutions. The method of moments or its generalised versions (GMM), \cite{hansen.1982} or (EMM), \cite{gallant_tauchen.1996}, may constitute feasible solutions when expressions for some moment conditions that uniquely identify the parameters of interest are analytically available. When this is not the case, simulation--based methods, such as, the method of simulated moments (MSM), \cite{mcfadden.1989}, the method of simulated maximum likelihood (SML), \cite{gourieroux_monfort.1996} and its nonparametric version \cite{kristensen_shin.2012} or the indirect inference (II) method \cite{gourieroux_etal.1993}, are the only viable solutions to the inferential problem. \cite{jiang_turnbull.2004} give a comprehensive review of indirect inference from a statistical point of view. Despite their appealing characteristics of only requiring to be able to simulate from the specified DGP, some of those methods suffer from serious drawbacks. The MSM, for example, requires that the existence of the moments of the postulated DGP is guaranteed, while, the II method relies on an alternative, necessarily misspecified, auxiliary model as well as on a strong form of identification between the parameters of interests and those of the auxiliary model. The quantile--matching estimation method (QM), \cite{koenker.2005}, exploits the same idea behind the method of moments without requiring any other condition. The QM approach estimates model parameters by matching the empirical percentiles with their theoretical counterparts thereby requiring only the existence of a closed form expression for the quantile function.\newline
\indent This paper focuses on the method of simulated quantiles recently proposed by \cite{dominicy_veredas.2013} as a simulation--based extension of the QM of \cite{koenker.2005}. As any other simulation--based method, the MSQ estimates parameters by minimising a quadratic distance between a vector of quantile--based summary statistics calculated on the available sample of observations and that calculated on synthetic data generated from the postulated theoretical model. Specifically, we extend the method of simulated quantiles to deal with multivariate data, originating the multivariate method of simulated quantiles (MMSQ). The extension of the MSQ to multivariate data is not trivial because it requires the definition of multivariate quantile that is not unique given the lack of a natural ordering in $\mathbb{R}^{n}$ for $n>1$. Indeed, only very recently the literature on multivariate quantiles has proliferated, see, e.g., \cite{serfling.2002} for a review of some extensions of univariate quantiles to the multivariate case. Here we rely on the definition of projectional quantile of \cite{hallin_etal.2010b} and \cite{kong_mizera.2012}, that is a particular version of directional quantile. This latter definition is particularly appealing since it allows to reduce the dimension of the problem by projecting data towards given directions in the plane. Moreover, the projectional quantiles incorporate information on the covariance between the projected variables which is crucial in order to relax the assumption of independence between variables. An important methodological contribution of the paper concerns the choice of the relevant directions to project data in order to summarise the information for the parameters of interest. Although the inclusion of more directions can convey more information about the parameters, it comes at a cost of a larger number of expensive quantile evaluations. Of course the number of quantile functions is unavoidably related to the dimension of the observables and strictly depends upon the considered distribution. We provide a general solution for Elliptical distributions and for those Skew--Elliptical distributions that are closed under linear combinations. We also establish consistency and asymptotic normality of the proposed MMSQ estimator under weak conditions on the underlying true DGP. The conditions for consistency and asymptotic Normality of the MMSQ are similar to those imposed by \cite{dominicy_veredas.2013} with minor changes due to the employed projectional quantiles. Moreover, for the distributions considered in our illustrative examples, full details on how to calculate all the quantities involved in the asymptotic variance--covariance matrix are provided. The asymptotic variance--covariance matrix of the MMSQ estimator is helpful to derive its efficient version, the E--MMSQ.\newline
\indent As any other simulation--based method the MMSQ does not effectively deal with the curse of dimensionality problem, i.e., the situation where the number of parameters grows quadratically or exponentially with the dimension of the problem. Indeed, the right identification of the sparsity patterns becomes crucial because it reduces the number of parameters to be estimated. Those reasonings motivate the use of sparse estimators that automatically shrink to zero some parameters, such as, for example, the--off diagonal elements of the variance--covariance matrix. Several works related to sparse estimation of the covariance matrix are available in literature; most of them are related to the graphical models, where the precision matrix, e.g., the inverse of the covariance matrix, represents the conditional dependence structure of the graph. \cite{friedman_etal.2008} propose a fast algorithm based on coordinate--wise updating scheme in order to estimate a sparse graph using the least absolute shrinkage and selection operator (LASSO) $\ell_1$--penalty of \cite{tibshirani.1996}. \cite{meinshausen_buhlmann.2006} propose a method for neighbourhood selection using the LASSO $\ell_1$--penalty as an alternative to covariance selection for Gaussian graphical models where the number of observations is less than the number of variables. \cite{gao_massam.2015} estimate the variance--covariance matrix of symmetry--constrained Gaussian models using three different $\ell_1$--type penalty functions, i.e., the LASSO, the smoothly clipped absolute deviation (SCAD) of \cite{fan_li.2001} and the minimax concave penalty (MCP) of \cite{zhang.2010}. \cite{bien_tibshirani.2011} proposed a penalised version of the log--likelihood function, using the LASSO penalty, in order to estimate a sparse covariance matrix of a multivariate Gaussian distribution. Previous work show that sparse estimation has been proposed mainly either within the regression framework or in the context of Gaussian graphical models. In boh those cases, sparsity patterns are imposed by penalising a Gaussian log--likelihood.\newline 
%
%
\indent In this paper we handle the lack of the model--likelihood or the existence of valid moment conditions together with the curse of dimensionality problem within a high--dimensional non--Gaussian framework. Specifically, our approach penalises the objective function of the MMSQ by adding a SCAD $\ell_1$--penalisation term that shrinks to zero the off--diagonal elements of the scale matrix of the postulated distribution. Moreover, we extend the asymptotic theory in order to account for the sparsity estimation, and we prove that the resulting sparse--MMSQ estimator enjoys the oracle properties of \cite{fan_li.2001} under mild regularity conditions. Moreover, since the chosen penalty is concave, we deliver a fast and efficient algorithm to solve the optimisation problem. 
\newline
\indent The proposed methods can be effectively used to make inference on the parameters of large--dimensional distributions such as, for example, Stable, Elliptical Stable (\citealt{samorodnitsky_etal.1994}), Skew--Elliptical Stable (\citealt{branco_dey.2001}), Copula (\citealt{oh_patton.2013}), multivariate Gamma (\citealt{mathai_moschopoulos.2015}) and Tempered Stable (\citealt{ismo.1995}). Among those, the Stable distribution allows for infinite variance, skewness and heavy--tails that exhibit power decay allowing extreme events to have higher probability mass than in Gaussian model. To test the effectiveness of the MMSQ and sparse--MMSQ methods several synthetic datasets have been simulated from the Elliptical Stable distribution previously considered by \cite{lombardi_veredas.2009}. For a summary of the properties of the stable distributions see \cite{zolotarev.1964} and \cite{samorodnitsky_etal.1994}, which provide a good theoretical background on heavy--tailed distributions. Univariate Stable laws have been studied in many branches of the science and their theoretical properties have been deeply investigated from multiple perspectives, therefore many tools are now available for estimation and inference on parameters, to evaluate the cumulative density or the quantile function, or to perform fast simulation. Stable distribution plays an interesting role in modelling multivariate data. Its peculiarity of having heavy tailed properties and its closeness under summation make it appealing in the financial contest. Nevertheless, multivariate Stable laws pose several challenges that go further beyond the lack of closed form expression for the density. Although general expressions for the multivariate density have been provided by \cite{abdul-hamid_nolan.1998}, \cite{byczkowski_etal.1993} and \cite{matsui_takemura.2009}, their computations is still not feasible in dimension larger than two. A recent overview of multivariate Stable distributions can be found in \cite{nolan.2008}.\newline 
%
\indent As regards applications to real data, we consider the well--known problem of evaluating the systemic relevance of the financial institutions or banks belonging to a given market. After the Bear Stearns hedge funds collapse in July 2007, and the consequent global financial crisis which originated in the United States and then spread quickly to the rest of the world, the threat of a global collapse of the whole financial system has been becoming the major concern of financial regulators. Systemic risk, as opposed to risk associated with any one individual entity, aims at evaluating to which extent the bankruptcy of a bank or financial institutions may degenerate to a collapse of the system as a consequence of a contagion effect. While individual risks are assessed using individual Value--at--Risks (VaR), one the most employed systemic risk measure has been becoming the Conditional VaR (CoVaR), introduced by \cite{adrian_brunnermeier.2011,adrian_brunnermeier.2016}. Since then, the assessment of financial risk in a multi--institution framework where some institutions are subject to systemic or non--systemic distress events is one of the hot topics which has received large attention from scholars in Mathematical Finance, Statistics, Management, see, e.g., \cite{acharya_etal.2012}, \cite{billio_etal.2012}, \cite{bernardi_etal.2017d}, \cite{girardi_ergun.2013}, \cite{caporin_etal.2013}, \cite{engle_etal.2014}, \cite{hautsch_etal.2014}, \cite{lucas_etal.2014}, \cite{bernardi_catania.2015}, \cite{bernardi_etal.2015}, \cite{sordo_etal.2015},
\cite{bernardi_etal.2016}, \cite{bernardi_etal.2016b}, \cite{bernardi_etal.2016c}, \cite{brownlees_engle.2016} and \cite{salvadori_etal.2016}, just to quote a few of the most relevant approaches. For an extensive and up to date survey on systemic risk measures, see \cite{bisias_etal.2012}, while the recent literature on systemic risk is reviewed by \cite{benoit_etal.2016}. The CoVaR measures the systemic impact on the whole financial system of a distress event affecting an institution by calculating the VaR of the system conditioned to the distress event as measured by the marginal VaR of that institution. As recognised by \cite{bernardi_etal.2017d} this definition of CoVaR fails to consider the institution as a part of a system. Here, we introduce a new definition of CoVaR, the NetCoVaR, that overcomes this drawback by aggregating individual institutions providing a measure of profit and loss of the whole financial market. Despite its appealing definition, the NetCoVaR, as any other risk measure, provide only point estimates of the amount of systemic risk. Within this context, statistical methods aims to assess whether two risk measures are statistically different from each other. As concerns the CoVaR, recently, \cite{castro_ferrari.2014} proposed a nonparametric dominance test where pairwise CoVaRs are compared in order to statistically assess the systemic relevance of the different institutions. Here, we propose a parametric counterpart of the test of \cite{castro_ferrari.2014} and we assume profits--and--losses of the different institutions are Elliptically Stable distributed. The asymptotic distribution of the dominance test is provided under the mild assumption of elliptically contoured distributions for the involved random variables. The dominance test is subsequently used to build a network that represents the interdependence relations among institutions. In this context the ESD distribution plays a relevant role either because data are contaminated by the presence of outliers or because the methodology strongly relies on the presence of heavy--tailed distributions such as the systemic risk assessment.\newline
%
%
%
\indent The remainder of the paper is structured as follows. In Section \ref{sec:MMSQ} we introduce the multivariate Method of Simulated Quantiles, and we establish the basic asymptotic properties. The asymptotic variance of the estimator is necessary to select the optimal weighting matrix for the square distance in order to obtain the efficient MMSQ estimator. Section \ref{sec:msq_sparse} deals with the curse of dimensionality problem, introduces the Sparse--MMSQ estimator that induces sparsity in the scale matrix using the SCAD $\ell_1$--penalty and shows that the Sparse--MMSQ enjoys the oracle properties under mild regularity conditions. The penalised estimator cannot be used to make inference on the parameters shrunk to zero, therefore Section \ref{sec:msq_sparse} ends by proposing a de--sparsified MMSQ estimator. The effectiveness of the method is tested in Section \ref{sec:msq_synthetic_ex}, where several synthetic datasets from the Elliptical Stable distribution are considered. Section \ref{sec:application_port_opt} is devoted to the empirical application that aims to illustrate how the methodological contributions of the paper can be applied to the systemic risk assessment. Section \ref{sec:conclusion} concludes. Technical proofs of the theorems are deferred to Appendix \ref{app:appendix_proofs}.
%
\section{Multivariate method of simulated quantiles}
\label{sec:MMSQ}
%
\noindent In this Section we first recall the basic concepts on directional and projectional quantiles. Then, the multivariate method of simulated quantiles is introduced, and results about the consistency and asymptotic properties of the estimator are proposed. 
%
\subsection{Directional quantiles}
\label{sec:dir_quantiles}
%
\noindent The MMSQ requires the prior definition of the concept of multivariate quantile, a notion still vague until quite recently because of the lack of a natural ordering in dimension greater than one. Here, we relies on the definition of directional quantiles and projectional quantiles introduced by \cite{hallin_etal.2010b}, \cite{paindaveine_siman.2011} and \cite{kong_mizera.2012}.
%
%
%
We first recall the definition of directional quantile given in \cite{hallin_etal.2010b} and then we introduce the main assumptions that we will use to develop MMSQ.
%
%
\begin{definition}
Let $\bY=\left(Y_{1},Y_{2},\dots,Y_{m}\right)$ be a $m$--dimensional random vector in $\mathbb{R}^{m}$, $\bu \in\mathbb{S}^{m-1}$ be a vector in the unit sphere $\mathbb{S}^{m-1}=\left\{\bu\in\mathbb{R}^m\::\:\bu^\prime\bu=1\right\}$ and $\tau\in \left(0,1\right)$. The $\tau\bu$--quantile of $\bY$ is defined as any element of the collection $\Pi_{\tau\bu}$ of hyperplanes 
\begin{align}
\pi_{\tau\bu}=\left\{\bY:\bb_{\tau\bu}^{\prime}\bY-q_{\tau\bu}=0\right\},\nonumber
\end{align}
such that 
\begin{align}\label{eq:dq1}
\left(q_{\tau\bu}, \bb_{\tau\bu}^{\prime}\right)^{\prime}\in \left\{\arg\min_{\left(q,\bb\right)}\Psi_{\tau\bu}\left(q,\bb\right) \quad s.t. \quad \bb^{\prime}\bu=1\right\},
\end{align}
where
\begin{align}
\Psi_{\tau\bu}\left(q,\bb\right)=\mathbb{E}\Big[\rho_{\tau }\left(\bb^{\prime}\bY-q\right)\Big],
\label{eq:dir_quantile_loss}
\end{align}
and $\rho_\tau\left(z\right)=z\left(\tau-\bbone_{\left(-\infty,0\right)}\left(z\right)\right)$ denotes the quantile loss function evaluated at $z\in\mathbb{R}$, $q\in \mathbb{R}$, $\bb\in\mathbb{R}^{m}$ and $\mathbb{E}\left(\cdot\right)$ denotes the expectation operator.
\end{definition}
\noindent The term directional is due to the fact that the multivariate quantile defined above is associated to a unit vector $\bu\in\mathbb{S}^{m-1}$. 
\begin{assumption}\label{as:0}
The distribution of the random vector $\textbf{Y}$ is absolutely continuous with respect to the Lebesgue measure on $\mathbb{R}^{m}$, with finite first order moment, having density $f_{\bY}$ that has connected support.
\end{assumption} 
\noindent Under assumption \ref{as:0}, for any $\tau\in \left(0,1\right)$ the minimisation problem defined in equation \eqref{eq:dq1} admits a unique solution $\left(a_{\tau\bu},\bb_{\tau\bu}\right)$, which uniquely identifies one hyperplane $\pi_{\tau\bu}\in\Pi_{\tau\bu}$.\newline
\indent A special case of directional quantile is obtained by setting $\bb=\bu$; in that case the directional quantile $\left(a_{\tau\bu},\bu\right)$ becomes a scalar value and it inherits all the properties of the usual univariate quantile. This particular case of directional quantile is called projectional quantile, whose formal definition, reported below, is due to \cite{kong_mizera.2012} and \cite{paindaveine_siman.2011}. 
\begin{definition}
Let $\bY=\left(Y_{1},Y_{2},\dots,Y_{m}\right)\in\mathbb{R}^{m}$, $\bu \in\mathbb{S}^{m-1}$ be a vector in the unit sphere $\mathbb{S}^{m-1}$, and $\tau\in \left(0,1\right)$. The $\tau\bu$ projectional quantile of $\bY$ is defined as 
\begin{align}\label{eq:dq2}
q_{\tau\bu}\in \left\{\arg\min_{q\in\mathbb{R}}\Psi_{\tau\bu}\left(q\right)\right\},
\end{align}
where $\Psi_{\tau\bu}\left(q\right)=\Psi_{\tau\bu}\left(q,\bu\right)$ in equation \eqref{eq:dir_quantile_loss}.
%
\end{definition}
\noindent Clearly the $\tau\bu$--projectional quantile is the $\tau$--quantile of the univariate random variable $Z=\bu^\prime\bY$. This feature makes the definition of projectional quantile particularly appealing in order to extend the MSQ to a multivariate setting because, once the direction is properly chosen, it 
reduces to the usual univariate quantile.
%
Given a sample of observations $\left\{\by_{i}\right\}_{i=1}^n$ from $\bY$, the empirical version of the projectional quantile is defined as 
\begin{align}
q_{\tau\bu}^{n}\in \left\{\arg\min_{q}\Psi_{\tau\bu}^{n}\left(q\right)\right\},\nonumber
\end{align}
where $\Psi_{\tau \bu}^{n}\left(q\right)=\frac{1}{n}\sum_{i=1}^{n}\Big[\rho_{\tau }\left(\bu^{\prime}\by_{i}-q\right)\Big]$ denotes the empirical version of the loss function defined in equation \eqref{eq:dir_quantile_loss}.
%
\subsection{The method of simulated quantiles}
\label{sec:msq_method}
%
\noindent The MSQ introduced by \cite{dominicy_veredas.2013} is likelihood--free simulation--based inferential procedure based on matching quantile--based measures, that is particularly useful in situations where either the density function does is not analytically available and/or moments do not exist. Since it is essentially a simulation--based method it can be applied to all those random variables that can be easily simulated. In the contest of MSQ, parameter are estimated by minimising the distance between an appropriately chosen vector of functions of empirical quantiles and their simulated counterparts based on the postulated parametric model. An appealing characteristic of the MSQ that makes it a valid alternative to other likelihood--free methods, such as the indirect inference of \cite{gourieroux_etal.1993}, is that the MSQ does not rely on a necessarily misspecified auxiliary model. Furthermore, empirical quantiles are robust ordered statistics being able to achieve high protection against bias induced by the presence of outlier contamination.\newline
\indent Here we introduce the MMSQ using the notion of projectional quantiles defined in Section \ref{sec:dir_quantiles}. Let $\bY$ be a $d$--dimensional random variable with distribution function $F_{\bY}\left(\cdot,\vartheta\right)$, which depends on a vector of unknown parameters $\vartheta\subset\Theta\in\mathbb{R}^{k}$, and $\by=\left(\by_{1},\by_2,\dots,\by_{n}\right)^{\prime}$ be a vector of $n$ independent realisations of $\bY$. Moreover, let $\bq^{\boldsymbol{\tau},\bu}_{\vartheta}=\left(q^{\tau_{1}\bu}_{\vartheta},q^{\tau_{2}\bu,}_{\vartheta},\dots,q^{\tau_{s}\bu}_{\vartheta}\right)$ be a $m\times s$ matrix of projectional quantiles at given confidence levels $\tau_{k}\in\left(0,1\right)$ with $k=1,2,\dots,s$, and $\bu\in\mathbb{S}^{m-1}$. Let $\bPhi_{\bu,\vartheta}=\bPhi\left(\bq^{\boldsymbol{\tau},\bu}_{\vartheta}\right)$ be a $b\times 1$ vector of quantile functions assumed to be continuously differentiable with respect to $\vartheta$ for all $\bY$ and measurable for $\bY$ and for all $\vartheta\subset\Theta$. Let us assume also that $\bPhi_{\bu,\vartheta}$ cannot be computed analytically but it can be empirically estimated on simulated data; denote those quantities by $\tilde{\bPhi}^r_{\mathbf{u},\vartheta}$. Let $\hat{\bq}^{\boldsymbol{\tau},\bu}=\left(\hat{q}^{\tau_{1}\bu},\hat{q}^{\tau_{2}\bu},\dots,\hat{q}^{\tau_{s}\bu}\right)$ be a $m\times s$ matrix of projectional quantiles with $\bu\in\mathbb{S}^{m-1}$ and $0<\tau_{1}<\dots<\tau_{s}<1$, and let $\hat{\bPhi}_{\bu}=\bPhi\left(\hat{\bq}^{\boldsymbol{\tau},\bu}\right)$ be a $b\times 1$ vector of functions of sample quantiles, that is measurable of $\bY$.\newline
\indent The MMSQ at each iteration $j=1,2,\dots$ estimates $\tilde{\bPhi}_{\bu,\vartheta}$ on a sample of $R$ replication simulated from ${\by_{r,j}^\ast}\sim F_{\bY}\left(\cdot,\vartheta^{(j)}\right)$, for $r=1,2,\dots,R$, as $\tilde{\bPhi}_{\bu,\vartheta_j}^{R}=\frac{1}{R}\sum_{r=1}^{R}\tilde{\bPhi}_{\bu,\vartheta_j}^{r}$, where $\tilde{\bPhi}_{\bu,\vartheta_j}^{r}$ is the function $\bPhi_{\bu,\vartheta}$ computed at the $r$--th simulation path. The parameters are subsequently updated by minimising the distance between the vector of quantile measures calculated on the true observations $\hat{\bPhi}_{\mathbf{u}}$ and that calculated on simulated realisations $\tilde{\bPhi}_{\mathbf{u},\vartheta_j}^{R}$ as follows
\begin{align}\label{eq:mmsq_min_problem}
\hat{\vartheta}=\arg\min_{\vartheta\in \vartheta}\left(\hat{\bPhi}_\bu-\tilde{\bPhi}_{\mathbf{u},\vartheta}^{R}\right)^{\prime}\mathbf{W}_{\vartheta}\left(\hat{\bPhi}_{\mathbf{u}}-\tilde{\bPhi}_{\mathbf{u},\vartheta}^{R}\right),
\end{align}
where $\mathbf{W}_{\vartheta}$ is a $b\times b$ symmetric positive definite weighting matrix. The method of simulated quantiles of \cite{dominicy_veredas.2013} reduces to the selection of the first canonical direction $\bu_1=\left(1,0,\dots,0\right)$ as relevant direction in the projectional quantile.\newline
\indent The vector of functions of projectional quantiles $\bPhi_{\bu,\vartheta}$ should be carefully selected in order to be as informative as possible for the vector of parameters of interest. In their applications, \cite{dominicy_veredas.2013} only propose to use the MSQ to estimate the parameters of univariate Stable law. Toward this end they consider the following vector of quantile--based statistics, as in \cite{mcculloch.1986} and \cite{Kim_white.2004}
\begin{align}
\bPhi_{\vartheta}=\left(\frac{q_{0.95}+q_{0.05}-2q_{0.5}}{q_{0.95}-q_{0.05}}, \frac{q_{0.95}-q_{0.05}}{q_{0.75}-q_{0.25}}, q_{0.75}-q_{0.25}, q_{0.5}\right)^\prime.\nonumber
\end{align}
where the first element of the vector is a measure of skewness, the second one is a measure of kurtosis and the last two measures refer to scale and location, respectively. Of course, the selection of the quantile--based summary statistics depend either on the nature of the parameter and on the assumed distribution. The MMSQ generalises also the MSQ proposed by \cite{dominicy_etal.2013} where they estimate the elements of the variance--covariance matrix of multivariate elliptical distributions by means of a measure of co--dispersion which consists in the in interquartile range of the standardised variables projected along the bisector. The MMSQ based on projectional quantiles is more flexible and it allows us to deal with more general distributions than elliptically contoured distributions because it relies on the construction of quantile based measures on the variables projected along an optimal directions which depend upon the considered distribution. The selection of the relevant direction is deferred to Section \ref{sec:msq_synthetic_ex}.
%
\subsection{Asymptotic theory}
\label{sec:msq_asy}
%
\noindent In this section we establish consistency and asymptotic normality of the proposed MMSQ estimator. The next theorem establish the asymptotic properties of projectional quantiles. 
\begin{theorem}\label{th:proj_quant_asy}
Let $\bY\in\mathbb{R}^m$ be a random vector with cumulative distribution function $F_{\bY}$ and variance--covariance matrix $\bSigma_{\bY}$. Let $\left\{\by_{i}\right\}_{i=1}^{n}$ be a sample of iid observations from $F_{\bY}$. Let $\bu_{1},\bu_{2},\dots,\bu_{K}\in \mathbb{S}^{m-1}$ and $Z_{k}=\bu_{k}'\bY$ be the projected random variable along $\bu_{k}$ with cumulative distribution function $F_{Z_{k}}$, for $k=1,2,\dots,K$. Let $\btau_{k}=\left(\tau_{1,k},\tau_{2,k},\dots,\tau_{s,k}\right)$ where $\tau_{j,k}\in\left(0,1\right)$, $\bq_{\boldsymbol{\tau}_{k},\bu_{k}}=\left(q_{\tau_{1,k}\bu_{k}},q_{\tau_{2,k}\bu_{k}},\dots, q_{\tau_{s,k}\bu_{k}}\right)$ be the vector of directional quantiles along the direction $\bu_{k}$ and suppose $Var\left(Z_{k}\right)<\infty$, for $k=1,2,\dots,K$. Let us assume that $F_{Z_{k}}$ is differentiable in $q_{\tau_{j,k}\bu_{k}}$ and $F_{Z_{k}}^{\prime}\left(q_{\tau_{j,k}\bu_{k}}\right)=f_{Z_{k}}\left(q_{\tau_{j,k}\bu_{k}}\right)>0$, for $k=1,2,\dots,K$ and $j=1,2,\dots,s$. Then 
\begin{enumerate}
\item[{\it (i)}] for a given direction $\bu_{k}$, with $k=1,2,\dots, K$, it holds
\begin{align}
\sqrt{n}\left(\hat{\bq}_{\boldsymbol{\tau}_{k},\bu_{k}}-\mathbf{q}_{\boldsymbol{\tau}_{k},\bu_{k}}\right)\xrightarrow[]{d}\mathcal{N}\left(\bO,\bet\right), \nonumber
\end{align}
as $n\rightarrow\infty$, where $\boldsymbol{\eta}$ denotes a $\left(K\times K\right)$ symmetric matrix whose generic $\left(r,c\right)$ entry is
\begin{align}
\eta_{r,c}=\frac{\tau_{r}\wedge\tau_{c}-\tau_{r}\tau_{c}}{f_{Z_{k}}\left(\mathbf{q}_{\tau_{r},\bu_{k}}\right)f_{Z_{k}}\left(\mathbf{q}_{\tau_{c},\bu_{k}}\right)},\nonumber
\end{align}
for $r,c=1,2,\dots,K$;
\item[{\it (ii)}] for a given level $\tau_{j}$, with $j=1,2,\dots, s$, it holds 
\begin{align}
\sqrt{n}\left(\hat{\bq}_{\tau_{j}}-\bq_{\tau_{j}}\right)\xrightarrow[]{d}\mathcal{N}\left(\bO,\bet\right),\nonumber
\end{align}
as $n\rightarrow\infty$, where $\bq_{\tau_{j}}=\left(q_{\tau_{j}\bu_{1}},\dots,q_{\tau_{j}\bu_{K}}\right)$, 
\begin{align}
\eta_{r,c}&=\left\{
\begin{array}{c}
-\frac{\tau_{j}^2}{f_{Z_{r}}\left(q_{\tau_{j}\bu_{r}}\right)f_{Z_{c}}\left(q_{\tau_{j}\bu_{c}}\right)}+\frac{F_{Z_{r},Z_{c}}\left(\bq_{\tau_{j},r,c},\bSigma_{Z_{r},Z_{c}}\right)}{f_{Z_{r}}\left(q_{\tau_{j}\bu_{r}}\right)f_{Z_{c}}\left(q_{\tau_{j}\bu_{c}}\right)},\quad\text{for} \quad r\neq c \nonumber\\
\frac{\tau_{j}\left(1-\tau_{j}\right)}{f_{Z_{r}}\left(q_{\tau_{j}\bu_{r}}\right)^2},\qquad\qquad\qquad\qquad\quad\qquad\qquad\qquad \text{for} \quad r=c, \nonumber
\end{array}\right.
\end{align}
and $\bSigma_{Z_{r},Z_{c}}$ denotes the variance--covariance matrix of the random variables $Z_{r}$ and $Z_{c}$ and $\mathbf{q}_{\tau_{j},r,c}=\left(q_{\tau_{j}\bu_{r}},q_{\tau_{j}\bu_{c}}\right)$, for $r,c=1,2,\dots,K$; 
\item[{\it (iii)}] given $\tau_{j}$ and $\tau_{l}$ with $j,l=1,2,\dots, s$ and $j\neq l$ and given $\bu_{s}$ and $\bu_{t}$ with $s,t=1,2,\dots,K$ and $s\neq t$, it holds
\begin{align}
\sqrt{n}\left(\hat{q}_{\tau_{j}\bu_{s}}-q_{\tau_{j}\bu_{s}}, \hat{q}_{\tau_{l}\bu_{t}}-q_{\tau_{l}\bu_{t}}\right)\xrightarrow[]{d}\mathcal{N}\left(\bO,\bet\right),\nonumber
\end{align}
as $n\rightarrow\infty$, where 
\begin{align}
\eta_{r,c}=-\frac{\tau_{j}\tau_{l}}{f_{Z_{s}}\left(q_{\tau_{j}}\right)f_{Z_{t}}\left(q_{\tau_{l}}\right)}+\frac{F_{Z_{s},Z_{t}}\left(\left(q_{\tau_{j}\bu_{s}},q_{\tau_{l}\bu_{t}}\right),\bSigma_{Z_{s},Z_{t}}\right)}{f_{Z_{s}}\left(q_{\tau_{j}}\right)f_{Z_{t}}\left(q_{\tau_{l}}\right)},\quad for \quad r\neq c. \nonumber\\
\end{align}
\end{enumerate}
\end{theorem}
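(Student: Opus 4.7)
\medskip
\noindent\textbf{Proof proposal.} The cornerstone of the proof is the observation that the projectional quantile $q_{\tau\bu_k}$ is nothing but the ordinary univariate $\tau$--quantile of the projected scalar random variable $Z_k=\bu_k'\bY$, and likewise its sample counterpart $\hat q_{\tau\bu_k}$ is the empirical $\tau$--quantile of the iid sample $\{Z_{k,i}\}_{i=1}^n=\{\bu_k'\by_i\}_{i=1}^n$. Hence the whole statement may be reduced to a joint asymptotic analysis of finitely many univariate sample quantiles computed from linear functionals of the common data vector $\bY$. I would therefore plan to treat the three parts simultaneously via a single Bahadur--type linearisation, and then specialise the limiting covariance to each case.

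The first step is to verify that, for every admissible pair $(\tau_{j,k},\bu_k)$, the density $f_{Z_k}$ is continuous and strictly positive at $q_{\tau_{j,k}\bu_k}$; this is immediate from Assumption~\ref{as:0}, since linear images of an absolutely continuous random vector with connected support and finite mean are themselves absolutely continuous with positive density at any interior point of the support. Under this regularity, the classical Bahadur--Kiefer representation yields
\begin{equation}
\sqrt{n}\bigl(\hat q_{\tau_{j,k}\bu_k}-q_{\tau_{j,k}\bu_k}\bigr)=\frac{1}{f_{Z_k}(q_{\tau_{j,k}\bu_k})}\cdot\frac{1}{\sqrt{n}}\sum_{i=1}^{n}\psi_{\tau_{j,k}}\!\bigl(Z_{k,i}-q_{\tau_{j,k}\bu_k}\bigr)+o_p(1),\nonumber
\end{equation}
with the check function score $\psi_\tau(x)=\tau-\bbone_{(-\infty,0]}(x)$. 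Stacking these identities across all indices $(j,k)$ of interest reduces the problem to establishing a multivariate central limit theorem for a vector of iid, bounded, mean--zero summands of the form $\psi_{\tau_{j,k}}(\bu_k'\by_i-q_{\tau_{j,k}\bu_k})$ and then applying the Cramér--Wold device or, equivalently, the Lindeberg--Lévy CLT to the full random vector.

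Once joint asymptotic normality is obtained with limiting variance equal to that of the indicator vector divided by the appropriate density products, the three covariance formulas are read off by direct computation of $\xp[\psi_{\tau}(Z_a-q_a)\psi_{\tau'}(Z_b-q_b)]$. Part \emph{(i)} corresponds to the case $Z_a=Z_b=Z_k$, so $\mathrm{Pr}(Z_k\le q_{\tau_r\bu_k},Z_k\le q_{\tau_c\bu_k})=F_{Z_k}(q_{\tau_r\bu_k}\wedge q_{\tau_c\bu_k})=\tau_r\wedge\tau_c$, producing the classical $(\tau_r\wedge\tau_c-\tau_r\tau_c)$ numerator. Part \emph{(ii)} uses the bivariate law of $(Z_r,Z_c)=(\bu_r'\bY,\bu_c'\bY)$, whose joint distribution at $(q_{\tau_j\bu_r},q_{\tau_j\bu_c})$ is written $F_{Z_r,Z_c}(\bq_{\tau_j,r,c},\bSigma_{Z_r,Z_c})$, giving covariance $F_{Z_r,Z_c}-\tau_j^2$; the diagonal entries revert to the Part \emph{(i)} computation with $r=c$. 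Part \emph{(iii)} is the most general case: distinct levels and distinct directions; the bivariate joint law of $(Z_s,Z_t)$ enters exactly as in Part \emph{(ii)}, only evaluated at $(q_{\tau_j\bu_s},q_{\tau_l\bu_t})$ with product $\tau_j\tau_l$ subtracted.

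The main obstacle is the justification of the Bahadur linearisation \emph{jointly} across all the finite collection of $(\tau,\bu)$ pairs, with a remainder that is uniformly $o_p(1)$: one has to rule out pathological behaviour of the empirical quantile process along different projections. Since the index set is finite, a pointwise Bahadur representation combined with Slutsky suffices; the uniformity issues that normally arise in process--level results do not appear here. A small additional care is needed to ensure that the bivariate cumulative $F_{Z_r,Z_c}$ is well defined and that $Z_r,Z_t$ have finite variances, which follows from the assumed $\mathrm{Var}(Z_k)<\infty$ together with Cauchy--Schwarz. Once these technicalities are disposed of, the multivariate CLT applied to the bounded indicator vector and division by the diagonal density matrix deliver the three claimed limits in one stroke.
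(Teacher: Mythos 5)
Your proposal is correct and follows essentially the same route as the paper's proof: the paper likewise reduces each projectional quantile to the univariate quantile of $Z_k=\bu_k'\bY$, invokes the classical result (via Cram\'er) for part \emph{(i)}, and establishes parts \emph{(ii)} and \emph{(iii)} through the Bahadur representation followed by direct expansion of the expectations of products of indicators, yielding exactly the numerators $\tau_r\wedge\tau_c-\tau_r\tau_c$ and $F_{Z_r,Z_c}(\cdot)-\tau\tau'$ that you identify. Your explicit appeal to the Cram\'er--Wold device for the joint limit merely makes precise a step the paper leaves implicit, so there is no substantive difference in method.
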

\begin{proof}
See Appendix \ref{app:appendix_proofs}.
\end{proof}
\begin{remark}
The expression $a\wedge b $ stands for the minimum of $a$ and $b$. As regards the calculation of the sparsity function $s\left(\tau\right)=f\left(F^{-1}\left(\tau\right)\right)$ we refer to \cite{koenker.2005} and \cite{dominicy_veredas.2013}.
\end{remark}
\noindent To establish the asymptotic properties of the MMSQ estimates we need the following set of assumptions. 
\begin{assumption}\label{as:1}
There exists a unique/unknown true value $\vartheta_{0}\subset\Theta$ such that the sample function of projectional quantiles equal the theoretical one, provided that each quantile--based summary statistic is computed along a direction that is informative for the parameter of interest. That is $\vartheta=\vartheta_{0}\ \Leftrightarrow \ \hat{\bPhi}=\bPhi_{\vartheta_{0}}$.  
\end{assumption}
\begin{assumption}\label{as:2}
$\vartheta_{0}$ is the unique minimiser of $\left(\hat{\bPhi}-\tilde{\bPhi}_{\vartheta}^{R}\right)^{\prime}\mathbf{W}_{\vartheta}\left(\hat{\bPhi}-\tilde{\bPhi}_{\vartheta}^{R}\right)$.
\end{assumption}
\begin{assumption}\label{as:3}
Let $\hat{\bOmega}$ be the sample variance--covariance matrix of $\hat{\bPhi}$ and $\bOmega_{\vartheta}$ be the non--singular variance--covariance matrix of $\bPhi_{\vartheta}$, then  $\hat{\bOmega}$ converges to $\bOmega_{\vartheta}$ as $n$ goes to infinity. 
\end{assumption}
\begin{assumption}\label{as:5}
The matrix $\left(\frac{\partial\boldsymbol{\Phi}_{\vartheta}}{\partial\vartheta^\prime}\mathbf{W}_{\vartheta}\frac{\partial\boldsymbol{\Phi}_{\vartheta}}{\partial\vartheta}\right)$ is non--singular.
\end{assumption}
%
\noindent Under these assumptions we show the asymptotic properties of functions of quantiles. 
 \begin{theorem}\label{th:functions_quant_asy}
Under the hypothesis of Theorem \ref{th:proj_quant_asy} and assumptions \ref{as:1}--\ref{as:3}, we have 
\begin{align}
& \sqrt{n}\left(\hat{\bPhi}-\bPhi_{\vartheta}\right)\xrightarrow[]{d}\mathcal{N}\left(\bO,\bOmega_{\vartheta}\right)\nonumber\\
& \sqrt{n}\left(\tilde{\bPhi}-\bPhi_{\vartheta}\right)\xrightarrow[]{d}\mathcal{N}\left(\bO,\bOmega_{\vartheta}\right),\nonumber
\end{align} 
as $n\rightarrow\infty$, where $\bOmega_{\vartheta}=\frac{\partial\boldsymbol{\Phi}_{\vartheta}}{\partial\mathbf{q}^\prime}\boldsymbol{\eta}\frac{\partial\boldsymbol{\Phi}_{\vartheta}}{\partial\mathbf{q}}$, $\bq=\left(\bq_{\boldsymbol{\tau}_{1},\bu_{1}},\bq_{\boldsymbol{\tau}_{2},\bu_{2}},\dots,\bq_{\boldsymbol{\tau}_{K},\bu_{K}}\right)^\prime$, $\bet$ is the variance--covariance matrix of the projectional quantiles $\mathbf{q}$ defined in Theorem \ref{th:proj_quant_asy} and $\frac{\partial\boldsymbol{\Phi}_{\vartheta}}{\partial\bq}=\diag\left\{\frac{\partial\boldsymbol{\Phi}_{\vartheta}}{\partial\mathbf{q}_{\boldsymbol{\tau}_{1},\bu_{1}}},\frac{\partial\boldsymbol{\Phi}_{\vartheta}}{\partial\mathbf{q}_{\boldsymbol{\tau}_{2},\bu_{2}}},\dots,\frac{\partial\boldsymbol{\Phi}_{\vartheta}}{\partial\bq_{\boldsymbol{\tau}_{K},\bu_{K}}}\right\}$.
\end{theorem}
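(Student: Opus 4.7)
The strategy is to establish a joint asymptotic normality for the full stacked vector of projectional quantile estimators that feeds into $\hat{\bPhi}$, and then transfer it to $\hat{\bPhi}$ via the multivariate delta method; the conclusion for the simulated quantity $\tilde{\bPhi}$ then follows by the same reasoning applied to an iid simulated sample from $F_{\bY}(\cdot,\vartheta)$.

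First, I would assemble the pairwise statements of Theorem~\ref{th:proj_quant_asy} into a single joint CLT for the stacked vector
\begin{equation*}
\hat{\bq}=\bigl(\hat{\bq}_{\btau_{1},\bu_{1}}^{\prime},\hat{\bq}_{\btau_{2},\bu_{2}}^{\prime},\dots,\hat{\bq}_{\btau_{K},\bu_{K}}^{\prime}\bigr)^{\prime}.
\end{equation*}
The standard device is the Bahadur representation for each projectional quantile,
\begin{equation*}
\sqrt{n}\bigl(\hat{q}_{\tau_{j,k}\bu_{k}}-q_{\tau_{j,k}\bu_{k}}\bigr)=\frac{1}{f_{Z_{k}}(q_{\tau_{j,k}\bu_{k}})}\,\frac{1}{\sqrt{n}}\sum_{i=1}^{n}\Bigl(\tau_{j,k}-\bbone_{(-\infty,0)}(\bu_{k}^{\prime}\by_{i}-q_{\tau_{j,k}\bu_{k}})\Bigr)+o_{p}(1),
\end{equation*}
which is legitimate under Assumption~\ref{as:0} and the positivity of the densities $f_{Z_{k}}$ at the relevant quantiles that is already assumed in Theorem~\ref{th:proj_quant_asy}. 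Because the right-hand side is a bounded iid average, a multivariate Lindeberg--L\'evy CLT combined with the Cram\'er--Wold device yields $\sqrt{n}(\hat{\bq}-\bq)\xrightarrow{d}\mathcal{N}(\mathbf{0},\bet)$, with the block-structured covariance $\bet$ whose entries match exactly the three cases (i)--(iii) of Theorem~\ref{th:proj_quant_asy}; no new covariance computation is required.

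Second, since $\bPhi_{\vartheta}=\bPhi(\bq)$ is continuously differentiable in $\bq$ by the hypotheses on $\bPhi_{\bu,\vartheta}$ stated in Section~\ref{sec:msq_method}, the multivariate delta method gives
\begin{equation*}
\sqrt{n}\bigl(\hat{\bPhi}-\bPhi_{\vartheta}\bigr)=\frac{\partial\bPhi_{\vartheta}}{\partial\bq^{\prime}}\sqrt{n}(\hat{\bq}-\bq)+o_{p}(1)\xrightarrow{d}\mathcal{N}\bigl(\mathbf{0},\,\tfrac{\partial\bPhi_{\vartheta}}{\partial\bq^{\prime}}\,\bet\,\tfrac{\partial\bPhi_{\vartheta}}{\partial\bq}\bigr),
\end{equation*}
which is the first claim. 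The block-diagonal form of the Jacobian $\frac{\partial\bPhi_{\vartheta}}{\partial\bq}=\diag\{\partial\bPhi_{\vartheta}/\partial\bq_{\btau_{k},\bu_{k}}\}$ asserted in the statement reflects the fact that each sub-vector $\bq_{\btau_{k},\bu_{k}}$ of quantiles along direction $\bu_{k}$ enters $\bPhi_{\vartheta}$ only through its own block (this is how the vector $\bPhi_{\vartheta}$ is built in Section~\ref{sec:msq_method}). Consistency of the sample covariance $\hat{\bOmega}\to\bOmega_{\vartheta}$ from Assumption~\ref{as:3} then justifies replacing $\bOmega_{\vartheta}$ by its empirical counterpart in inference. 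For $\tilde{\bPhi}$, the replications $\by_{r}^{\ast}\sim F_{\bY}(\cdot,\vartheta)$ form iid samples from the same law that generated $\by$ at the true parameter (Assumption~\ref{as:1}), so the very same Bahadur linearisation and delta-method argument applies replication-wise, giving the stated limit for $\tilde{\bPhi}$ (up to a $1/R$ factor if one explicitly averages over $R$ replications, which is standard in simulation-based inference).

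The main obstacle is the first step: passing from the pairwise covariance formulae (i)--(iii) of Theorem~\ref{th:proj_quant_asy} to a bona fide joint normal limit for the whole vector $\hat{\bq}$. The Bahadur representation makes this transparent, but one must verify the remainder is $o_{p}(1)$ uniformly in the collection of quantiles considered; standard arguments (see, e.g., \citealt{koenker.2005}) cover this under Assumption~\ref{as:0} and the density positivity hypothesis. Once this is granted, the delta-method step and identification of the limiting variance as $\frac{\partial\bPhi_{\vartheta}}{\partial\bq^{\prime}}\bet\frac{\partial\bPhi_{\vartheta}}{\partial\bq}$ are routine.
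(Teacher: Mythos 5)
Your proposal is correct and follows essentially the same route as the paper: the paper's own proof is a one-line application of the delta method to the expansion $\hat{\bPhi}\approx\bPhi_{\vartheta}+\frac{\partial\bPhi_{\vartheta}}{\partial\bq}(\hat{\bq}-\bq)$, identifying the limiting covariance as $\frac{\partial\bPhi_{\vartheta}}{\partial\bq^{\prime}}\bet\frac{\partial\bPhi_{\vartheta}}{\partial\bq}$. Your additional steps --- assembling the pairwise results of Theorem \ref{th:proj_quant_asy} into a joint CLT for the stacked vector via the Bahadur representation and the Cram\'er--Wold device, and treating $\tilde{\bPhi}$ by the same argument on the simulated sample --- are details the paper leaves implicit, and they strengthen rather than alter the argument.
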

\begin{proof}
See Appendix \ref{app:appendix_proofs}.
\end{proof}
\noindent Next theorem shows the asymptotic properties of the MMSQ estimator.
\begin{theorem}\label{th:mmsq_asy}
Under the hypothesis of Theorem \ref{th:proj_quant_asy} and assumptions \ref{as:1}--\ref{as:5}, we have 
\begin{align}
\sqrt{n}\left(\hat{\vartheta}-\vartheta\right)\xrightarrow[]{d}\mathcal{N}\left(\bO,\left(1+\frac{1}{R}\right)\bD_{\vartheta}\mathbf{W}_{\vartheta}\bOmega_{\vartheta}\mathbf{W}_{\vartheta}^{\prime}\bD_{\vartheta}^{\prime}\right),\nonumber
\end{align}
as $n\rightarrow\infty$, where $\bD_{\vartheta}=\left(\frac{\partial\boldsymbol{\Phi}_{\vartheta}}{\partial\vartheta^\prime}\mathbf{W}_{\vartheta}\frac{\partial\boldsymbol{\Phi}_{\vartheta}}{\partial\vartheta}\right)^{-1}\frac{\partial\boldsymbol{\Phi}_{\vartheta}}{\partial\vartheta}$.
\end{theorem}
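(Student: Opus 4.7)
The plan is to follow the classical extremum-estimator route used for method-of-simulated-moments and indirect-inference estimators, with the projectional-quantile process playing the role of the matching criterion. I would begin by establishing weak consistency $\hat{\vartheta}\xrightarrow[]{p}\vartheta_{0}$: Assumption \ref{as:1} pins down $\vartheta_{0}$ as the unique zero of $\bPhi_{\vartheta}-\hat{\bPhi}$ in the limit, Assumption \ref{as:2} lifts this to a unique minimiser of the quadratic criterion, and Theorem \ref{th:functions_quant_asy} gives the pointwise convergence of $\hat{\bPhi}$ and $\tilde{\bPhi}_{\vartheta}^{R}$ to $\bPhi_{\vartheta}$ required by a standard argmin/continuous-mapping theorem.

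Once consistency is in hand I would write down the first-order condition attached to \eqref{eq:mmsq_min_problem},
\begin{equation*}
\frac{\partial\tilde{\bPhi}_{\mathbf{u},\hat{\vartheta}}^{R}}{\partial\vartheta}\mathbf{W}_{\hat{\vartheta}}\bigl(\hat{\bPhi}_{\mathbf{u}}-\tilde{\bPhi}_{\mathbf{u},\hat{\vartheta}}^{R}\bigr)=\mathbf{0},
\end{equation*}
and carry out a mean-value expansion
\begin{equation*}
\tilde{\bPhi}_{\mathbf{u},\hat{\vartheta}}^{R}=\tilde{\bPhi}_{\mathbf{u},\vartheta_{0}}^{R}+\frac{\partial\tilde{\bPhi}_{\mathbf{u},\bar{\vartheta}}^{R}}{\partial\vartheta^{\prime}}(\hat{\vartheta}-\vartheta_{0}),
\end{equation*}
for some $\bar{\vartheta}$ on the segment joining $\hat{\vartheta}$ and $\vartheta_{0}$. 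Substituting back, using consistency to replace $\hat{\vartheta}$ and $\bar{\vartheta}$ by $\vartheta_{0}$ inside the Jacobians and in $\mathbf{W}_{\hat{\vartheta}}$, and inverting the full-rank matrix granted by Assumption \ref{as:5}, this yields the linearisation
\begin{equation*}
\sqrt{n}(\hat{\vartheta}-\vartheta_{0})=\bD_{\vartheta_{0}}\mathbf{W}_{\vartheta_{0}}\sqrt{n}\bigl(\hat{\bPhi}_{\mathbf{u}}-\tilde{\bPhi}_{\mathbf{u},\vartheta_{0}}^{R}\bigr)+o_{p}(1),
\end{equation*}
with $\bD_{\vartheta_{0}}$ exactly as defined in the statement.

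It then remains to show that $\sqrt{n}(\hat{\bPhi}_{\mathbf{u}}-\tilde{\bPhi}_{\mathbf{u},\vartheta_{0}}^{R})$ is asymptotically centred Gaussian with variance $(1+R^{-1})\bOmega_{\vartheta_{0}}$. Theorem \ref{th:functions_quant_asy} applied to the observed sample gives $\sqrt{n}(\hat{\bPhi}_{\mathbf{u}}-\bPhi_{\mathbf{u},\vartheta_{0}})\xrightarrow[]{d}\mathcal{N}(\mathbf{0},\bOmega_{\vartheta_{0}})$; the simulated statistic is the empirical average of $R$ independent copies $\tilde{\bPhi}_{\mathbf{u},\vartheta_{0}}^{r}$, each satisfying the same CLT, so that $\sqrt{n}(\tilde{\bPhi}_{\mathbf{u},\vartheta_{0}}^{R}-\bPhi_{\mathbf{u},\vartheta_{0}})\xrightarrow[]{d}\mathcal{N}(\mathbf{0},\bOmega_{\vartheta_{0}}/R)$. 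Because the simulated paths are drawn independently of the data, the two Gaussian limits are independent, their difference is $\mathcal{N}(\mathbf{0},(1+R^{-1})\bOmega_{\vartheta_{0}})$, and a final application of Slutsky's theorem together with Assumption \ref{as:3} produces the sandwich form $(1+R^{-1})\bD_{\vartheta_{0}}\mathbf{W}_{\vartheta_{0}}\bOmega_{\vartheta_{0}}\mathbf{W}_{\vartheta_{0}}^{\prime}\bD_{\vartheta_{0}}^{\prime}$ announced in the theorem.

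The delicate step is the linearisation: empirical projectional quantiles are piecewise-constant in the data and non-smooth in $\vartheta$, so the mean-value expansion of $\tilde{\bPhi}_{\mathbf{u},\vartheta}^{R}$ cannot be taken at face value and must be justified through a Bahadur-type representation for the quantile process combined with a stochastic-equicontinuity argument for $\partial\tilde{\bPhi}_{\mathbf{u},\vartheta}^{R}/\partial\vartheta^{\prime}$ on a shrinking neighbourhood of $\vartheta_{0}$. Once this uniform control is secured the remainder terms are $o_{p}(1)$ and the remainder of the argument is routine algebra from the GMM/MSM toolbox.
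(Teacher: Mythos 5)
Your proposal follows essentially the same route as the paper's own proof: the first-order condition of the minimisation problem, a first-order expansion around $\vartheta_{0}$, inversion of the matrix guaranteed non-singular by Assumption \ref{as:5}, the limit $\sqrt{n}\bigl(\hat{\bPhi}-\tilde{\bPhi}_{\vartheta_{0}}^{R}\bigr)\xrightarrow[]{d}\mathcal{N}\bigl(\bO,(1+R^{-1})\bOmega_{\vartheta}\bigr)$ obtained from Theorem \ref{th:functions_quant_asy} together with independence of data and simulations, and Slutsky via consistency of $\bar{\vartheta}$. If anything you are more explicit than the paper, which does not spell out the preliminary consistency argument and silently performs the mean-value expansion that you correctly flag as the delicate step requiring Bahadur-type and stochastic-equicontinuity control of the non-smooth simulated quantile maps.
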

\begin{proof}
See Appendix \ref{app:appendix_proofs}.
\end{proof}
\noindent The next corollary provides the optimal weighting matrix $\mathbf{W}_\vartheta$.
%
%
\begin{corollary}
Under the hypothesis of Theorem \ref{th:proj_quant_asy} and assumptions \ref{as:1}--\ref{as:5}, the optimal weighting matrix is 
\begin{align}
\mathbf{W}_{\vartheta}^{*}=\bOmega_{\vartheta}^{-1}.\nonumber
\end{align}
Therefore, the efficient method of simulated quantiles estimator E--MMSQ has the following asymptotic distribution
\begin{align}
\sqrt{n}\left(\hat{\vartheta}-\vartheta\right)\xrightarrow[]{d}\mathcal{N}\left(\bO,\left(1+\frac{1}{R}\right)\left(\frac{\partial\boldsymbol{\Phi}_{\vartheta}}{\partial\vartheta^\prime}\bOmega_{\vartheta}^{-1}\frac{\partial\boldsymbol{\Phi}_{\vartheta}}{\partial\vartheta}\right)^{-1}\right),\nonumber
\end{align}
as $n\rightarrow\infty$.
 \end{corollary}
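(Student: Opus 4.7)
The plan is to treat this as a standard GMM-type efficiency argument, since Theorem \ref{th:mmsq_asy} already delivers the sandwich form of the asymptotic variance. Write $\bG=\frac{\partial\bPhi_{\vartheta}}{\partial\vartheta}$, $\bW=\bW_{\vartheta}$, $\bOmega=\bOmega_{\vartheta}$, so that the asymptotic variance (absorbing the $(1+1/R)$ factor throughout) takes the form
\begin{equation*}
V(\bW)=\left(\bG^{\prime}\bW\bG\right)^{-1}\bG^{\prime}\bW\bOmega\bW\bG\left(\bG^{\prime}\bW\bG\right)^{-1}.
\end{equation*}
The goal is to show that $V(\bW)\succeq V(\bOmega^{-1})=(\bG^{\prime}\bOmega^{-1}\bG)^{-1}$ in the Loewner order for every symmetric positive definite $\bW$, with equality attained at $\bW=\bOmega^{-1}$.

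First I would write down the candidate efficient variance $V(\bOmega^{-1})=(\bG^{\prime}\bOmega^{-1}\bG)^{-1}$ by direct substitution, using assumption \ref{as:5} to guarantee that $\bG^{\prime}\bOmega^{-1}\bG$ is invertible (since $\bOmega_{\vartheta}$ is nonsingular by assumption \ref{as:3}). Next I would introduce the matrix
\begin{equation*}
\bM=\left(\bG^{\prime}\bW\bG\right)^{-1}\bG^{\prime}\bW-\left(\bG^{\prime}\bOmega^{-1}\bG\right)^{-1}\bG^{\prime}\bOmega^{-1},
\end{equation*}
and consider $\bM\bOmega\bM^{\prime}$, which is manifestly positive semidefinite because $\bOmega$ is positive definite.

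The key step is to expand $\bM\bOmega\bM^{\prime}$ and use the identities $\left(\bG^{\prime}\bW\bG\right)^{-1}\bG^{\prime}\bW\bG=\bI$ and $\left(\bG^{\prime}\bOmega^{-1}\bG\right)^{-1}\bG^{\prime}\bOmega^{-1}\bG=\bI$ to collapse the four resulting terms. The two cross terms each reduce to $(\bG^{\prime}\bOmega^{-1}\bG)^{-1}$, and the last diagonal term reduces to the same quantity, leaving
\begin{equation*}
\bM\bOmega\bM^{\prime}=V(\bW)-\left(\bG^{\prime}\bOmega^{-1}\bG\right)^{-1}\succeq\bO,
\end{equation*}
which is precisely the desired optimality statement. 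Finally I would conclude by substituting $\bW_{\vartheta}^{\ast}=\bOmega_{\vartheta}^{-1}$ into the asymptotic variance of Theorem \ref{th:mmsq_asy}, reinstating the $(1+1/R)$ factor to obtain the stated E--MMSQ limit distribution.

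I do not anticipate a real obstacle: the argument is purely algebraic once Theorem \ref{th:mmsq_asy} is in hand. The only point requiring care is checking that all inverses exist, which is ensured by assumptions \ref{as:3} and \ref{as:5}, and that the expansion of $\bM\bOmega\bM^{\prime}$ is carried out in the correct order so that the cross terms and the diagonal term simplify uniformly to $(\bG^{\prime}\bOmega^{-1}\bG)^{-1}$.
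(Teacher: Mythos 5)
Your proposal is correct. The paper in fact states this corollary without any proof, treating it as the classical GMM/minimum--distance efficiency result (Hansen, 1982) applied to the sandwich variance of Theorem \ref{th:mmsq_asy}; your argument --- forming $\bM$ as the difference of the two ``projection'' matrices, noting $\bM\bG=\bO$ so that $\bM\bOmega\bM^{\prime}=V(\bW)-\left(\bG^{\prime}\bOmega^{-1}\bG\right)^{-1}\succeq\bO$ --- is exactly the standard derivation the authors implicitly rely on, and your invertibility caveats via assumptions \ref{as:3} and \ref{as:5} are the right ones.
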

\section{Handling sparsity}
\label{sec:msq_sparse}
%
\noindent In this section the MMSQ estimator is extended in order to achieve sparse estimation of the scaling matrix. Specifically, the smoothly clipped absolute deviation (SCAD) $\ell_1$--penalty of \cite{fan_li.2001} is introduced into the MMSQ objective function. Formally, let $\bY\in\mathbb{R}^{m}$ be a random vector and $\bSigma=\left(\sigma_{i,j}\right)_{i,j=1}^{n}$ be its variance--covariance matrix we are interested in providing a sparse estimation of $\bSigma$. To achieve this target we adopt a modified version of the MMSQ objective function obtained by adding the SCAD penalty to the off--diagonal elements of the covariance matrix in line with \cite{bien_tibshirani.2011}. The SCAD function is a non convex penalty function with the following form
\begin{align}
p_{\lambda}\left(\vert\gamma\vert\right)=\begin{cases}\lambda \vert\gamma\vert & \mbox{if } \vert\gamma\vert\leq\lambda \\
\frac{1}{a-1}\left(a\lambda \vert\gamma\vert-\frac{\gamma^2}{2}\right)-\frac{\lambda^2}{2\left(a-1\right)} & \mbox{if }\lambda<\gamma\leq a\lambda\\
\frac{\lambda^2\left(a+1\right)}{2} & \mbox{if } a\lambda<\vert\gamma\vert,
\end{cases}
\end{align} 
which corresponds to quadratic spline function with knots at $\lambda$ and $a\lambda$. The SCAD penalty is continuously differentiable on $(-\infty; 0)\cup(0;\infty)$ but singular at 0 with its derivative equal to zero outside the range $[-a\lambda; a\lambda]$. This results in small coefficients being set to zero, a few other coefficients being shrunk towards zero while retaining the large coefficients as they are. The penalised MMSQ estimator minimises the penalised MMSQ objective function, defined as follows
\begin{align}\label{eq:s1}
\hat{\vartheta}&=\arg\min_{\vartheta}\mathcal{Q}^\star\left(\vartheta\right),
\end{align}
where $\mathcal{Q}^\star\left(\vartheta\right)=\left(\hat{\boldsymbol{\Phi}}_\mathbf{u}-\tilde{\boldsymbol{\Phi}}_{\mathbf{u},\vartheta}^{R}\right)^\prime\mathbf{W}_{\vartheta}\left(\hat{\boldsymbol{\Phi}}_{\mathbf{u}}-\tilde{\boldsymbol{\Phi}}_{\mathbf{u},\vartheta}^{R}\right)+n\sum_{i<j}p_{\lambda}\left(\vert\sigma_{ij}\vert\right)$ is the penalised distance between $\hat{\boldsymbol{\Phi}}_\mathbf{u}$ and $\tilde{\boldsymbol{\Phi}}_{\mathbf{u},\vartheta}^{R}$ and $\hat{\boldsymbol{\Phi}}_\mathbf{u},\tilde{\boldsymbol{\Phi}}_{\mathbf{u},\vartheta}^{R}$ are the quantile--based summary statistics defined in Section \ref{sec:msq_method}.
%
%
As shown in \cite{fan_li.2001}, the SCAD estimator, with appropriate choice of the regularisation (tuning) parameter, possesses a sparsity property, i.e., it estimates zero components of the true parameter vector exactly as zero with probability approaching one as sample size increases while still being consistent for the non--zero components. An immediate consequence of the sparsity property of the SCAD estimator is that the asymptotic distribution of the estimator remains the same whether or not the correct zero restrictions are imposed in the course of the SCAD estimation procedure. They call them the oracle properties.\newline 

\noindent Let 
$\vartheta_{0}=\left(\vartheta_{0}^{1}, \vartheta_{0}^{0}\right)$  be the true value of the unknown parameter $\vartheta$, where $\vartheta_{0}^{1}\in\mathbb{R}^{s}$ is the subset of non--zero parameters and $\vartheta_{0}^{0}=0\in\mathbb{R}^{k-s}$ and let $\mathcal{A}=\left\{\left(i,j\right): i<j, \sigma_{ij,0}\in\vartheta_{0}^{1}\right\}$. The following definition of oracle estimator is given in \cite{zou.2006}.
\begin{definition}
An oracle estimator $\hat{\vartheta}_{\mathrm{oracle}}$ has the following properties:
\begin{itemize}
\item[{\it (i)}] consistent variable selection: $\lim_{n\to\infty}\mathbb{P}\left(\mathcal{A}_{n}=\mathcal{A}\right)=1$, where \\$\mathcal{A}_{n}=\left\{\left(i,j\right): i<j, \hat{\sigma}_{ij}\in\hat{\vartheta}_{\mathrm{oracle}}^{1}\right\}$;
\item[{\it (ii)}] asymptotic normality: $\sqrt{n}\left(\hat{\vartheta}_{\mathrm{oracle}}^{1}-\vartheta_{0}^{1}\right)\xrightarrow[]{d}\mathcal{N}\left(\bO,\bSigma\right)$, as $n\rightarrow\infty$, where $\bSigma$ is the variance covariance matrix of $\vartheta_{0}^{1}$.
\end{itemize}
\end{definition}  
\noindent Following  \cite{fan_li.2001}, in the remaining of this section we establish the oracle properties of the penalised SCAD MMSQ estimator. We first prove the sparsity property. 
\begin{theorem}\label{th:4}
Given the SCAD penalty function $p_{\lambda}\left(\vert\sigma_{ij}\vert\right)$, for a sequence of $\lambda_{n}$ such that $\lambda_{n}\rightarrow0$, and $\sqrt{n}\lambda_{n}\rightarrow\infty$, as $n\rightarrow\infty$, there exists a local minimiser $\hat{\vartheta}$ of $\mathcal{Q}^\star\left(\vartheta\right)$ in \eqref{eq:s1} with $\|\hat{\vartheta}-\vartheta_{0}\|=\mathcal{O}_{p}\left(n^{-\frac{1}{2}}\right)$. Furthermore, we have 
\begin{align}\label{eq:s3}
\lim_{n\rightarrow\infty}\mathbb{P}\left(\hat{\vartheta}^{0}=0\right)=1.
\end{align}
\end{theorem}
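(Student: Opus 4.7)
The plan is to adapt the \cite{fan_li.2001} blueprint to the MMSQ setting. The proof splits naturally into two parts: establishing the existence of a $\sqrt{n}$--consistent local minimiser of $\mathcal{Q}^\star$, and then showing that such a minimiser identifies the zero components of $\vartheta_{0}$ exactly.

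\emph{Step 1 ($\sqrt{n}$--consistency).} Setting $\alpha_{n}=n^{-1/2}$, the plan is to show that for every $\varepsilon>0$ there exists $C=C(\varepsilon)$ with
$$
\mathbb{P}\!\left(\inf_{\|\bu\|=C}\mathcal{Q}^\star(\vartheta_{0}+\alpha_{n}\bu)>\mathcal{Q}^\star(\vartheta_{0})\right)\geq 1-\varepsilon,
$$
which forces the existence of a local minimiser inside the ball $\{\vartheta_{0}+\alpha_{n}\bu:\|\bu\|\leq C\}$. I would decompose $D_{n}(\bu)=\mathcal{Q}^\star(\vartheta_{0}+\alpha_{n}\bu)-\mathcal{Q}^\star(\vartheta_{0})$ into the unpenalised quadratic--distance piece and the SCAD contribution. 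For the unpenalised piece, a second--order Taylor expansion combined with Theorem~\ref{th:functions_quant_asy} and assumption~\ref{as:5} produces the familiar ``positive quadratic in $\bu$ minus an $O_{p}(1)\|\bu\|$ linear term'' shape, with the quadratic controlled from below by $\big(\partial\bPhi_{\vartheta_{0}}/\partial\vartheta\big)^{\prime}\mathbf{W}_{\vartheta_{0}}\big(\partial\bPhi_{\vartheta_{0}}/\partial\vartheta\big)$ up to lower order corrections. For the penalty, split the index set into the active set $\mathcal{A}$ and its complement. Since $\lambda_{n}\to 0$, for any fixed $(i,j)\in\mathcal{A}$ we have $|\sigma_{0,ij}|>a\lambda_{n}$ eventually, so by the plateau property of the SCAD penalty $p_{\lambda}(|\sigma_{0,ij}+\alpha_{n}u_{ij}|)=p_{\lambda}(|\sigma_{0,ij}|)=\lambda_{n}^{2}(a+1)/2$ for $n$ large, and the active--set contribution to $D_{n}$ vanishes. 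For $(i,j)\in\mathcal{A}^{c}$ the SCAD penalty is non--negative and only pushes $D_{n}(\bu)$ upward. Choosing $C$ large enough, the quadratic term dominates uniformly on $\|\bu\|=C$ with probability at least $1-\varepsilon$.

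\emph{Step 2 (Sparsity).} Let $\hat{\vartheta}$ be a $\sqrt{n}$--consistent local minimiser from Step~1, and suppose, towards a contradiction, that some $\hat{\sigma}_{ij}$ with $(i,j)\in\mathcal{A}^{c}$ is non--zero. Consistency gives $|\hat{\sigma}_{ij}|=O_{p}(n^{-1/2})=o_{p}(\lambda_{n})$, so $\hat{\sigma}_{ij}$ lies on the linear segment of the SCAD penalty where $p_{\lambda}^{\prime}(|\hat{\sigma}_{ij}|)=\lambda_{n}$. The first--order optimality condition for $\mathcal{Q}^\star$ along the $\sigma_{ij}$ coordinate then reads
$$
\frac{\partial Q(\hat{\vartheta})}{\partial\sigma_{ij}}+n\lambda_{n}\,\mathrm{sgn}(\hat{\sigma}_{ij})=0,
$$
where $Q(\vartheta)=(\hat{\bPhi}_{\bu}-\tilde{\bPhi}_{\bu,\vartheta}^{R})^{\prime}\mathbf{W}_{\vartheta}(\hat{\bPhi}_{\bu}-\tilde{\bPhi}_{\bu,\vartheta}^{R})$. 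By Theorem~\ref{th:functions_quant_asy} and the $\sqrt{n}$--consistency of $\hat{\vartheta}$, the gradient term is $O_{p}(n^{-1/2})$, whereas the penalty term has magnitude $n\lambda_{n}=\sqrt{n}\cdot\sqrt{n}\lambda_{n}\to\infty$ by hypothesis. Hence the penalty strictly dominates the gradient with probability tending to one, the stationarity equation cannot hold, and we obtain the desired contradiction. Therefore $\mathbb{P}(\hat{\sigma}_{ij}=0)\to 1$ for every $(i,j)\in\mathcal{A}^{c}$, which is exactly \eqref{eq:s3}.

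\emph{Main obstacle.} The principal technical subtlety lies in the interplay between the non--differentiability of the SCAD penalty at the origin and the fact that $Q(\vartheta)$ is itself a random function built from simulated projectional quantiles entering $\tilde{\bPhi}_{\bu,\vartheta}^{R}$. Making the KKT step of Step~2 rigorous requires controlling $\partial Q/\partial\sigma_{ij}$ uniformly over $O_{p}(n^{-1/2})$--neighbourhoods of $\vartheta_{0}$; this stochastic equicontinuity of the simulated summary statistics is delivered by Theorem~\ref{th:functions_quant_asy} together with the continuous differentiability of $\bPhi_{\bu,\vartheta}$. A secondary point is that the statement only asserts \emph{existence} of a $\sqrt{n}$--consistent local minimiser, so global uniqueness issues for the non--convex penalised objective can be sidestepped.
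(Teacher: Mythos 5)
Your proposal is correct and follows essentially the same route as the paper's proof: both are direct adaptations of the Fan--Li (2001) argument, establishing $\sqrt{n}$--consistency by showing the unpenalised quadratic term dominates on an $n^{-1/2}$--ball while the SCAD contribution from the truly non--zero entries vanishes (via $p_{\lambda_n}^{\prime},p_{\lambda_n}^{\prime\prime}\to 0$ on the plateau), and establishing sparsity by comparing the $\mathcal{O}_{p}(n^{-1/2})$ gradient of the quantile--distance term with the $n\lambda_{n}$ penalty derivative, which dominates since $\sqrt{n}\lambda_{n}\to\infty$. The only cosmetic differences are that you present consistency before sparsity (the paper reverses the order) and phrase the sparsity step as a KKT contradiction rather than the paper's sign--of--the--derivative argument; these are equivalent.
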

\begin{proof}
See Appendix \ref{app:appendix_proofs}.
\end{proof}
\noindent The following theorem establishes the asymptotic normality of the penalised SCAD MMSQ estimator;  we denote by $\vartheta^{1}$ the subvector of $\vartheta$ that does not contain zero off--diagonal elements of the variance covariance matrix and by $\hat{\vartheta}^{1}$ the corresponding penalised MMSQ estimator. 
\begin{theorem}\label{th:5}
Given the SCAD penalty function $p_{\lambda}\left(\vert\sigma_{ij}\vert\right)$, for a sequence $\lambda_{n}\rightarrow0$ and $\sqrt{n}\lambda_{n}\rightarrow\infty$ as $n\rightarrow\infty$, then $\hat{\vartheta}^{1}$ has the following asymptotic distribution:
\begin{align}
\sqrt{n}\left(\hat{\vartheta}^{1}-\vartheta_{0}^{1}\right)\xrightarrow[]{d}\mathcal{N}\left(\bO,\left(1+\frac{1}{R}\right)\left(\frac{\partial\boldsymbol{\Phi}_{\vartheta}}{\partial{\vartheta^{1}}^\prime}\bOmega_{\vartheta_{0}^1}^{-1}\frac{\partial\boldsymbol{\Phi}_{\vartheta}}{\partial\vartheta^{1}}\right)^{-1}\right),
\end{align}
as $n\rightarrow\infty$.
\end{theorem}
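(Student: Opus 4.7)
The plan is to leverage the sparsity property of Theorem \ref{th:4} to reduce the problem to the unpenalised MMSQ asymptotic normality result of Theorem \ref{th:mmsq_asy}, restricted to the subvector $\vartheta^{1}$ of non--zero components, and to show that the SCAD penalty contributes negligibly to the score equation for those components.

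By Theorem \ref{th:4}, the event $\mathcal{E}_n=\{\hat{\vartheta}^{0}=0\}$ has probability tending to one, and on $\mathcal{E}_n$ the minimiser $\hat{\vartheta}^{1}$ of $\mathcal{Q}^{\star}(\vartheta^{1},0)$ is $\sqrt{n}$--consistent for $\vartheta_0^{1}$. First I would write the first--order conditions for $\hat{\vartheta}^{1}$:
\begin{align}
\frac{\partial}{\partial\vartheta^{1}}\bigl(\hat{\bPhi}_{\bu}-\tilde{\bPhi}_{\bu,\vartheta}^{R}\bigr)^{\prime}\bW_{\vartheta}\bigl(\hat{\bPhi}_{\bu}-\tilde{\bPhi}_{\bu,\vartheta}^{R}\bigr)\bigg|_{\vartheta=(\hat{\vartheta}^{1},0)}+n\sum_{(i,j)\in\mathcal{A}}\frac{\partial\,p_{\lambda_{n}}(|\sigma_{ij}|)}{\partial\vartheta^{1}}\bigg|_{\vartheta=(\hat{\vartheta}^{1},0)}=\bO. \nonumber
\end{align}
Since the true non--zero entries $\sigma_{ij,0}$ with $(i,j)\in\mathcal{A}$ are bounded away from zero, the $\sqrt{n}$--consistency of $\hat{\vartheta}^{1}$ together with $\lambda_n\to 0$ implies that $|\hat{\sigma}_{ij}|>a\lambda_n$ uniformly over $\mathcal{A}$ on an event whose probability tends to one. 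Because the SCAD function is flat outside $[-a\lambda_n,a\lambda_n]$, we have $p'_{\lambda_n}(|\hat{\sigma}_{ij}|)=0$ on that event, so the penalty contribution to the score is identically zero with probability tending to one.

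Consequently, on the intersection of these high--probability events, $\hat{\vartheta}^{1}$ satisfies the unpenalised MMSQ first--order condition on the reduced subspace $\{\vartheta:\vartheta^{0}=0\}$. A Taylor expansion of $\tilde{\bPhi}_{\bu,\vartheta}^{R}$ around $\vartheta_0^{1}$, combined with Theorem \ref{th:functions_quant_asy} and the optimal weighting matrix $\bW_{\vartheta_0^{1}}=\bOmega_{\vartheta_0^{1}}^{-1}$ supplied by the corollary to Theorem \ref{th:mmsq_asy}, gives
\begin{align}
\sqrt{n}\bigl(\hat{\vartheta}^{1}-\vartheta_0^{1}\bigr)=\Bigl(\tfrac{\partial\bPhi_{\vartheta}}{\partial\vartheta^{1\prime}}\bOmega_{\vartheta_0^{1}}^{-1}\tfrac{\partial\bPhi_{\vartheta}}{\partial\vartheta^{1}}\Bigr)^{-1}\tfrac{\partial\bPhi_{\vartheta}}{\partial\vartheta^{1\prime}}\bOmega_{\vartheta_0^{1}}^{-1}\sqrt{n}\bigl(\hat{\bPhi}_{\bu}-\tilde{\bPhi}_{\bu,\vartheta_0^{1}}^{R}\bigr)+o_p(1). \nonumber
\end{align}
Applying Theorem \ref{th:functions_quant_asy} to the two independent stochastic terms $\sqrt{n}(\hat{\bPhi}_{\bu}-\bPhi_{\vartheta_0^{1}})$ and $\sqrt{R}(\tilde{\bPhi}_{\bu,\vartheta_0^{1}}^{R}-\bPhi_{\vartheta_0^{1}})$ and using Slutsky's lemma then delivers the stated normal limit with variance $(1+1/R)\bigl(\tfrac{\partial\bPhi_{\vartheta}}{\partial\vartheta^{1\prime}}\bOmega_{\vartheta_0^{1}}^{-1}\tfrac{\partial\bPhi_{\vartheta}}{\partial\vartheta^{1}}\bigr)^{-1}$.

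The main obstacle is the simultaneous bookkeeping of the two high--probability events: the sparsity event $\{\hat{\vartheta}^{0}=0\}$ from Theorem \ref{th:4} and the event $\{|\hat{\sigma}_{ij}|>a\lambda_n\,\forall (i,j)\in\mathcal{A}\}$ on which the SCAD score vanishes exactly. A careful use of the rate conditions $\lambda_n\to 0$ and $\sqrt{n}\lambda_n\to\infty$ is needed here: the former ensures that the SCAD kink is below the magnitude of the truly non--zero parameters and so does not bias $\hat{\vartheta}^{1}$, while the latter, already exploited in Theorem \ref{th:4}, keeps the zero components pinned at zero. Once both events are verified, the rest of the argument is a direct restriction of Theorem \ref{th:mmsq_asy} to the subvector $\vartheta^{1}$.
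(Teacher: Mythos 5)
Your proposal is correct and follows essentially the same route as the paper: both reduce the penalised first--order condition for $\hat{\vartheta}^{1}$ to the unpenalised one by exploiting the flatness of the SCAD penalty beyond $a\lambda_n$ together with $\lambda_n\to 0$ (the paper Taylor--expands the score and notes that the penalty terms $\bv_0$ and $\bP_0$ vanish asymptotically, while you argue the penalty gradient is exactly zero on a high--probability event --- a cosmetic difference), and then both invoke the argument of Theorem \ref{th:mmsq_asy} with the optimal weighting matrix to obtain the stated limit.
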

\begin{proof}
See Appendix \ref{app:appendix_proofs}.
\end{proof}
%
\subsection{Algorithm}
\label{sec:algorithm}
%
\noindent The objective function of the sparse estimator is the sum of a convex function and a non convex function which complicates the minimisation procedure. Here, we adapt the algorithms proposed by \cite{fan_li.2001} and \cite{hunter_li.2005} to our objective function in order to allow a fast procedure for the minimisation problem.\newline
\indent The first derivative of the penalty function can be approximated as follows
\begin{align}
\big[p_{\lambda}\left(|\sigma_{ij}|\right)\big]^{\prime}=p_{\lambda}^{\prime}\left(|\sigma_{ij}|\right)\sgn\left(\sigma_{ij}\right)\approx\frac{p_{\lambda}^{\prime}\left(|\sigma_{ij,0}|\right)}{|\sigma_{ij,0}|}\sigma_{ij},
\end{align}
when $\sigma_{ij}\neq 0$. We use it in the first order Taylor expansion of the penalty to get 
\begin{align}
p_{\lambda}\left(|\sigma_{ij}|\right)\approx p_{\lambda}\left(|\sigma_{ij,0}|\right)+\frac{1}{2}\frac{p_{\lambda}^{\prime}\left(|\sigma_{ij,0}|\right)}{|\sigma_{ij,0}|}\left(\sigma_{ij}^2-\sigma_{ij,0}^2\right),
\end{align}
for $\sigma_{ij}\approx \sigma_{ij,0}$. 
The objective function $\mathcal{Q}^\star$ in equation \eqref{eq:s1} can be locally approximated, except for a constant term by
\begin{align}
\mathcal{Q}^\star\left(\vartheta\right)&\approx\left(\hat{\Phi}-\tilde{\Phi}_{\vartheta_{0}}^{R}\right)^{\prime}\mathbf{W}_{\bar{\theta}}\left(\hat{\Phi}-\tilde{\Phi}_{\vartheta_{0}}^{R}\right)-\frac{\partial \tilde{\Phi}_{\vartheta_{0}}^{R}}{\partial \vartheta}\mathbf{W}_{\bar{\vartheta}}\left(\hat{\Phi}-\tilde{\Phi}_{\vartheta_{0}}^{R}\right)\left(\vartheta-\vartheta_{0}\right)\nonumber\\
&\qquad+\frac{1}{2}\left(\vartheta-\vartheta_{0}\right)^{\prime}\frac{\partial \tilde{\Phi}_{\vartheta_{0}}^{R}}{\partial \vartheta}\mathbf{W}_{\bar{\vartheta}}\frac{\partial \tilde{\Phi}_{\vartheta_{0}}^{R}}{\partial \vartheta}\left(\vartheta-\vartheta_{0}\right)+\frac{n}{2}\vartheta^{\prime}\bSigma_{\lambda}\left(\vartheta_{0}\right)\vartheta,
\end{align} 
where $\bSigma_{\lambda}\left(\vartheta_{0}\right)=\diag\left\{\bO,\frac{p_{\lambda}^{\prime}\left(|\sigma_{ij,0}|\right)}{|\sigma_{ij,0}|}; i>j, \sigma_{ij,0}\in\vartheta_{0}^1 \right\}$, for which the first order condition becomes
\begin{align}
\frac{\partial \mathcal{Q}^\star\left(\vartheta\right)}{\partial\vartheta}&\approx-\frac{\partial \tilde{\Phi}_{\vartheta_{0}}^{R}}{\partial \vartheta}\mathbf{W}_{\bar{\vartheta}}\left(\hat{\Phi}-\tilde{\Phi}_{\vartheta_{0}}^{R}\right)+\frac{\partial \tilde{\Phi}_{\vartheta_{0}}^{R}}{\partial {\vartheta}^\prime}\mathbf{W}_{\bar{\vartheta}}\frac{\partial \tilde{\Phi}_{\vartheta_{0}}^{R}}{\partial \vartheta}\left(\vartheta-\vartheta_{0}\right)+n\bSigma_{\lambda}\left(\vartheta_{0}\right)\vartheta\nonumber\\
& =-\frac{\partial \tilde{\Phi}_{\vartheta_{0}}^{R}}{\partial \vartheta}\mathbf{W}_{\bar{\vartheta}}\left(\hat{\Phi}-\tilde{\Phi}_{\vartheta_{0}}^{R}\right)+\frac{\partial \tilde{\Phi}_{\vartheta_{0}}^{R}}{\partial {\vartheta}^{\prime}}\mathbf{W}_{\bar{\vartheta}}\frac{\partial \tilde{\Phi}_{\vartheta_{0}}^{R}}{\partial \vartheta}\left(\vartheta-\vartheta_{0}\right)\nonumber\\
&\qquad+n\bSigma_{\lambda}\left(\vartheta_{0}\right)\left(\vartheta-\vartheta_{0}\right)+n\bSigma_{\lambda}\left(\vartheta_{0}\right)\vartheta_{0}\nonumber\\
& =\left(\vartheta-\vartheta_{0}\right)^{\prime}\left[\frac{\partial \tilde{\Phi}_{\vartheta_{0}}^{R}}{\partial {\vartheta}^\prime}\mathbf{W}_{\bar{\vartheta}}\frac{\partial \tilde{\Phi}_{\vartheta_{0}}^{R}}{\partial \vartheta}+\bSigma_{\lambda}\left(\vartheta_{0}\right)\right]-\frac{\partial \tilde{\Phi}_{\vartheta_{0}}^{R}}{\partial \vartheta}\mathbf{W}_{\bar{\vartheta}}\left(\hat{\Phi}-\tilde{\Phi}_{\vartheta_{0}}^{R}\right)\nonumber\\
&\qquad+\bSigma_{\lambda}\left(\vartheta_{0}\right)\vartheta_{0}\nonumber\\
&=0, 
\end{align}
and therefore
\begin{align}
\vartheta &=\vartheta_{0}-\left[\frac{\partial \tilde{\Phi}_{\vartheta_{0}}^{R}}{\partial {\vartheta}^\prime}\mathbf{W}_{\bar{\vartheta}}\frac{\partial \tilde{\Phi}_{\vartheta_{0}}^{R}}{\partial \vartheta}+n\bSigma_{\lambda}\left(\vartheta_{0}\right)\right]^{-1}\nonumber\\
&\qquad\times\left[-\frac{\partial \tilde{\Phi}_{\vartheta_{0}}^{R}}{\partial \vartheta}\mathbf{W}_{\bar{\vartheta}}\left(\hat{\Phi}-\tilde{\Phi}_{\vartheta_{0}}^{R}\right)+n\bSigma_{\lambda}\left(\vartheta_{0}\right)\vartheta_{0}\right].
\end{align}
The optimal solution can be find iteratively, as follows 
\begin{align}
\vartheta^{\left(k+1\right)}&=\vartheta^{\left(k\right)}-\left[\frac{\partial \tilde{\Phi}_{\vartheta^{\left(k\right)}}^{R}}{\partial {\vartheta}^\prime}\mathbf{W}_{\bar{\vartheta}}\frac{\partial \tilde{\Phi}_{\vartheta^{\left(k\right)}}^{R}}{\partial \vartheta}+n\bSigma_{\lambda}\left(\vartheta^{\left(k\right)}\right)\right]^{-1}\nonumber\\
&\qquad\times\left[-\frac{\partial \tilde{\Phi}_{\vartheta^{\left(k\right)}}^{R}}{\partial \vartheta}\mathbf{W}_{\bar{\vartheta}}\left(\hat{\Phi}-\tilde{\Phi}_{\vartheta^{\left(k\right)}}^{R}\right)+n\bSigma_{\lambda}\left(\vartheta^{\left(k\right)}\right)\vartheta^{\left(k\right)}\right],
\end{align}
and if $\vartheta_{j}^{\left(k+1\right)}\approx 0$, then $\vartheta_{j}^{\left(k+1\right)}$ is set equal zero. When the algorithm converges the estimator satisfies the following equation
\begin{align}
-\frac{\partial \tilde{\Phi}_{\vartheta_{0}}^{R}}{\partial \vartheta}\mathbf{W}_{\bar{\vartheta}}\left(\hat{\Phi}-\tilde{\Phi}_{\vartheta_{0}}^{R}\right)+n\bSigma_{\lambda}\left(\vartheta_{0}\right)\vartheta_{0}=0,
\end{align}
that is the first order condition of the minimisation problem of the SCAD MMSQ.\newline
\indent The algorithm used above and introduced by \cite{fan_li.2001} is called local quadratic approximation (LQA). \cite{hunter_li.2005} showed that LQA applied to penalised maximum likelihood is an MM algorithm. Indeed, we define 
\begin{align}
\Psi_{|\sigma_{ij,0}|}\left(|\sigma_{ij}|\right)=p_{\lambda}\left(|\sigma_{ij,0}|\right)+\frac{1}{2}\frac{p_{\lambda}^{\prime}\left(|\sigma_{ij,0}|\right)}{|\sigma_{ij,0}|}\left(\sigma_{ij}^2-\sigma_{ij,0}^2\right),
\end{align} 
since the SCAD penalty is concave it holds
\begin{align}
\Psi_{|\sigma_{ij,0}\vert}\left(\vert\sigma_{ij}\vert\right)\geq p_{\lambda}\left(\vert\sigma_{ij}\vert\right), \quad \forall \vert\sigma_{ij}\vert,
\end{align}
and equality holds when $\vert\sigma_{ij}\vert=\vert\sigma_{ij,0}\vert$. Then $\Psi_{\vert\sigma_{ij,0}\vert}\left(\vert\sigma_{ij}\vert\right)$ majorise $p_{\lambda}\left(\vert\sigma_{ij}\vert\right)$, and it holds
\begin{align}
\Psi_{\vert\sigma_{ij,0}\vert}\left(\vert\sigma_{ij}\vert\right)<\Psi_{\vert\sigma_{ij,0}\vert}\left(\vert\sigma_{ij,0}\vert\right) \Rightarrow p_{\lambda}\left(\vert\sigma_{ij}\vert\right)<p_{\lambda}\left(\vert\sigma_{ij,0}\vert\right),
\end{align}
that is called descendent property. This feature allows us to construct an MM algorithm: at each iteration $k$ we construct $\Psi_{\vert\sigma_{ij}^{\left(k\right)}\vert}\left(\vert\sigma_{ij}\vert\right)$ and then minimize it to get $\sigma_{ij}^{\left(k+1\right)}$, that satisfies $p_{\lambda}\left(\vert\sigma_{ij}^{\left(k+1\right)}\vert\right)<p_{\lambda}\left(\vert\sigma_{ij}^{\left(k\right)}\vert\right)$. Let us consider the following
\begin{align}
S_{k}\left(\vartheta\right)=\left(\hat{\Phi}-\tilde{\Phi}_{\vartheta}^{R}\right)^{\prime}\mathbf{W}_{\bar{\theta}}\left(\hat{\Phi}-\tilde{\Phi}_{\vartheta}^{R}\right)+n\sum_{i>j}\Psi_{\vert\sigma_{ij}^{\left(k\right)}\vert}\left(\vert\sigma_{ij}\vert\right),
\end{align}
then $S_{k}\left(\vartheta\right)$ majorise $\mathcal{Q}^\star\left(\vartheta\right)$; thus we only need to minimise $S_{k}\left(\vartheta\right)$, that can be done as explained above. \cite{hunter_li.2005} proposed an improved version of LQA for penalised maximum likelihood, aimed at avoiding to zero out the parameters too early during the iterative procedure. We present their method applied to SCAD MMSQ as follows 
\begin{align}
p_{\lambda,\epsilon}\left(|\sigma_{ij}|\right)&= p_{\lambda}\left(|\sigma_{ij}|\right)-\epsilon\int_{0}^{\vert\sigma_{ij}\vert}\frac{p_{\lambda}^{\prime}\left(\vert\sigma_{ij,0}\vert\right)}{\epsilon+t}dt\nonumber\\
\mathcal{Q}^\star_{\epsilon}\left(\vartheta\right)&=\left(\hat{\Phi}-\tilde{\Phi}_{\vartheta}^{R}\right)^{\prime}\mathbf{W}_{\bar{\vartheta}}\left(\hat{\Phi}-\tilde{\Phi}_{\vartheta}^{R}\right)+n\sum_{i>j}p_{\lambda, \epsilon}\left(\vert\sigma_{ij}\vert\right)\nonumber\\
\Psi_{\vert\sigma_{ij,0}\vert,\epsilon}\left(\vert\sigma_{ij}\vert\right)&=p_{\lambda,\epsilon}\left(\vert\sigma_{ij,0}\vert\right)+\frac{p_{\lambda}^{\prime}\left(\vert\sigma_{ij,0}\vert\right)}{2\left(\epsilon+\vert\sigma_{ij,0}\vert\right)}\left(\sigma_{ij}^2-\sigma_{ij,0}^2\right)\nonumber\\
S_{k,\epsilon}\left(\vartheta\right)&=\left(\hat{\Phi}-\tilde{\Phi}_{\vartheta}^{R}\right)^{\prime}\mathbf{W}_{\bar{\vartheta}}\left(\hat{\Phi}-\tilde{\Phi}_{\vartheta}^{R}\right)+n\sum_{i>j}\Psi_{\vert\sigma_{ij}^{\left(k\right)}\vert,\epsilon}\left(\vert\sigma_{ij}\vert\right),\nonumber
\end{align}
where $\bar{\vartheta}$ is a consistent estimator of $\vartheta$.
They proved that as $\epsilon\downarrow 0$ the perturbed objective function $\mathcal{Q}^\star_{\epsilon}\left(\vartheta\right)$ converges uniformly to the  not perturbed one $\mathcal{Q}^\star\left(\vartheta\right)$ and that if $\hat{\vartheta}_{\epsilon}$ is a minimiser of $\mathcal{Q}^\star_{\epsilon}\left(\vartheta\right)$ then any limit point of the sequence $\left\{\hat{\vartheta}_{\epsilon}\right\}_{\epsilon\downarrow 0}$ is a minimiser of $\mathcal{Q}^\star\left(\vartheta\right)$. This construction allows to define $\Psi_{\vert\sigma_{i,j}^{\left(k\right)}\vert,\epsilon}\left(\vert\sigma_{i,j}\vert\right)$ even when $\sigma_{i,j}^{\left(k\right)}\approx 0$. The authors also provided a way to choose the value of the perturbation $\epsilon$ and suggested the following
\begin{align}
\epsilon=\frac{\tau}{2np_{\lambda}^{\prime}\left(0\right)}\min\left\{\vert\sigma_{i,j}^{\left(0\right)}\vert:\sigma_{i,j}^{\left(0\right)}\neq 0\right\},
\end{align}
with the following tuning constant $\tau=10^{-8}$.
%
\subsection{Tuning paramenter selection}
\label{sec:msq_tuning_sel}
%
The SCAD penalty requires the selection of two tuning parameters $\left(a,\lambda\right)$. The first tuning parameter is fixed at $a=3.7$ as suggested in \cite{fan_li.2001}, while the parameter $\lambda$ is selected using as validation function
\begin{align}
V\left(\lambda\right) = \frac{1}{n}\left(\hat{\Phi}-\tilde{\Phi}_{\hat{\vartheta}_{\lambda}}^{R}\right)\mathbf{W}_{\hat{\vartheta}_{\lambda}}\left(\hat{\Phi}-\tilde{\Phi}_{\hat{\vartheta}_{\lambda}}^{R}\right),
\end{align}
where $\hat{\vartheta}_{\lambda}$ denotes the parameters estimate when $\lambda$ is selected as tuning parameter. We choose $\lambda^{\ast}=\arg\min_{\lambda} V\left(\lambda\right)$; the minimisation is performed over a grid of values for $\lambda$.\newline
\indent An alternative approach is the $K$--fold cross validation, in which the original sample is divided in $K$ subgroups $T_{k}$, called folds. The validation function is    
\begin{align}
CV\left(\lambda\right) = \sum_{k=1}^{K}\frac{1}{n_{k}}\left(\hat{\Phi}-\tilde{\Phi}_{\hat{\vartheta}_{\lambda,k}}^{R}\right)\mathbf{W}_{\hat{\vartheta}_{\lambda,k}}\left(\hat{\Phi}-\tilde{\Phi}_{\hat{\vartheta}_{\lambda,k}}^{R}\right),
\end{align}
where $\hat{\vartheta}_{\lambda,k}$ denotes the parameters estimate on the sample  $\left(\cup_{i=1}^K T_{k}\right)\setminus T_{k}$ with $\lambda$ as tuning parameter. Then the optimal value is chosen as $\lambda^{*}=\arg\min_{\lambda} CV\left(\lambda\right)$; again the minimisation is performed over a grid of values for $\lambda$.  
%
\subsection{Implementation}
\label{sec:msq_implementation}
%
\noindent The symmetric and positive definiteness properties of the variance--covariance matrix should be preserved at each step of the optimisation process. Preserving those properties is a difficult task since the constraints that ensure the definite positiveness of a matrix are non linear. Therefore, we consider an implementation that is similar to the Graphical Lasso algorithm introduced by \cite{friedman_etal.2008}. We outline the steps of the algorithm below. Let $\bOmega$ be a correlation matrix of dimension $n\times n$ and partition $\bOmega$ as follows
\begin{align}
\bOmega = \left[\begin{matrix}
\bOmega_{11} & \bomega_{12}\\
\bomega_{12}^{\prime} & 1\\
\end{matrix}\right],
\end{align}
where $\bOmega_{11}$ is a matrix of dimension $(n-1)\times(n-1)$ and $\bomega_{12}$ is a vector of dimension $n-1$, and consider the transformation consider the transformation $\bomega_{12}^\star\to\frac{\hat{\boldsymbol{\omega}}_{12}}{1+\hat{\boldsymbol{\omega}}_{12}^{\prime}\boldsymbol{\Omega}_{11}^{-1}\hat{\boldsymbol{\omega}}_{12}}$ where $\hat{\boldsymbol{\omega}}_{12}$ is obtained by applying
a step of the Newton--Raphson algorithm to $\bomega_{12}$ as follows
\begin{align}
\hat{\bomega}_{12}&=\bomega_{12}-\left[\frac{\partial \tilde{\Phi}_{\bomega_{12}}^{R}}{\partial {\bomega_{12}}^\prime}\mathbf{W}_{\bar{\bomega}_{12}}\frac{\partial \tilde{\Phi}_{\bomega_{12}}^{R}}{\partial \bomega_{12}}+n\bSigma_{\lambda}\left(\bomega_{12}\right)\right]^{-1}\nonumber\\
&\qquad\qquad\qquad\times\left[-\frac{\partial \tilde{\Phi}_{\bsigma_{12}}^{R}}{\partial \bsigma_{12}}\mathbf{W}_{\bar{\bsigma}_{12}}\left(\hat{\Phi}-\tilde{\Phi}_{\bsigma_{12}}^{R}\right)+n\bSigma_{\lambda}\left(\bomega_{12}\right)\bomega_{12}\right].
\end{align}
Once we update the last column, we shift the next to the last at the end and repeat the steps described above. We repeat this procedure until convergence. 
%
\section{Synthetic data examples}
\label{sec:msq_synthetic_ex}
%
\noindent As mentioned in the introduction the Stable distribution plays an interesting role in modelling multivariate data. Its peculiarity of heaving heavy tailed properties and its closeness under summation make it appealing in the financial contest. Despite its characteristics, estimation of parameters has been always challenging and this feature greatly limited its use in applied works requiring simulation--based methods. In this section we briefly introduce the multivariate Elliptical Stable distribution (ESD) previously considered by \cite{lombardi_veredas.2009}. 
%
\subsection{Multivariate Elliptical Stable distribution}
\label{sec:esd_multiv}
%
\noindent 
%
A random vector $\bY\in \mathbb{R}^m$ is elliptically distributed if 
\begin{align}
\bY=^{d}\bxi+\mathcal{R}\bGamma\bU,
\end{align}
where $\bxi\in\mathbb{R}^{m}$ is a  vector of location parameters, $\bGamma$ is a matrix such that $\bOmega= \bGamma\bGamma^{\prime}$ is a $m\times m$ full rank matrix of scale parameters, $\bU\in\mathbb{R}^{m}$ is a random vector uniformly distributed in the unit sphere $\mathbb{S}^{m-1}=\left\{\bu\in\mathbb{R}^m\::\:\bu^\prime\bu=1\right\}$ and $\mathcal{R}$ is a non--negative random variable stochastically independent of $\bU$, called generating variate of $\bY$.\newline
\indent If $\mathcal{R}=\sqrt{Z_1}\sqrt{Z_2}$ where $Z_1\sim\chi_{m}^2$ and $Z_2\sim\mathcal{S}_{\frac{\alpha}{2}}\left(\xi,\omega,\delta\right)$ is a positive Stable distributed random variable with kurtosis parameter equal to $\frac{\alpha}{2}$ for $\alpha\in\left(0,2\right]$, location parameter $\xi=0$, scale parameter $\omega=1$ and asymmetry parameter $\delta=1$, stochastically independent of $\chi_{m}^2$, then the random vector $\bY$ has Elliptical Stable distribution, i.e., $\bY\sim\mathcal{ESD}_{m}\left(\alpha,\bxi,\bOmega\right)$,
%
with characteristic function 
\begin{align}
\psi_\bY\left(\bt\right)&=\mathbb{E}\left(\exp\left\{i\bt^\prime\bY\right\}\right)\nonumber\\
&=\exp\left\{i\bt^\prime\bxi-\left(\bt^\prime\bOmega\bt\right)^{\frac{\alpha}{2}}\right\}.
\end{align}
See \cite{samorodnitsky_etal.1994} for more details on the positive Stable distribution and \cite{nolan.2013} for the recent developments on multivariate elliptically contoured stable distributions.\newline
\indent Among the properties that the class of elliptical distribution possesses, the most relevant are the closure with respect to affine transformations, conditioning and marginalisation, see \cite{fang_etal1990}, \cite{embrechts_etal.2005} and \cite{mcneil_etal.2015} for further details. Simulating from an ESD is straightforward, indeed let $\bar\omega_\alpha=\left(\cos\frac{\pi\alpha}{4}\right)^{\frac{2}{\alpha}}$, then $\bY\sim\mathcal{ESD}_m\left(\alpha,\bxi,\bOmega\right)$ if and only if $\bY$ has the following stochastic representation as a scale mixture of Gaussian distributions
\begin{align}
\bY=\bxi+\zeta^{\frac{1}{2}}\bX,
\end{align} 
where $\zeta\sim\mathcal{S}_{\frac{\alpha}{2}}\left(0,\bar\omega_\alpha,1\right)$ and $\bX\sim\mathcal{N}\left(\bO, \bOmega\right)$  independent of $\zeta$. Following the Proposition 2.5.2 of \cite{samorodnitsky_etal.1994}, the characteristic function of $\bY$ is
\begin{align}
\psi_\bY\left(\bt\right)&=\mathbb{E}\left(\exp\left\{i\bt^\prime\bY\right\}\right)\nonumber\\
&=\mathbb{E}_\zeta\mathbb{E}\left(\exp\left\{i\bt^\prime\bxi+i\zeta^{\frac{1}{2}}\bt^\prime\bX\right\}\mid \zeta\right)\nonumber\\
&=\mathbb{E}_\zeta\mathbb{E}\left(\exp\left\{i\bt^\prime\bxi-\frac{\zeta\bt^\prime\bOmega\bt}{2}\right\}\mid \zeta\right)\nonumber\\
&=\exp\left\{i\bt^\prime\bxi-\left(\frac{1}{2}\right)^{\frac{\alpha}{2}}\left(\bt^\prime\bOmega\bt\right)^{\frac{\alpha}{2}}\right\},\quad\qquad \alpha\neq1,
\label{eq:esd_char_fun}
\end{align}
which is the characteristic function of an Elliptical Stable distribution with scale matrix $\bOmega/2$. The last equation follows the fact that the Laplace transform of $\zeta\sim\mathcal{S}_{\frac{\alpha}{2}}\left(0,\bar\omega_\alpha,1\right)$ with $0<\alpha\leq 2$ is
\begin{align}
\psi^\ast_\zeta\left(A\right)&=\mathbb{E}\left(\exp\left\{-A\zeta\right\}\right)\nonumber\\
&=\begin{cases}
\exp\left\{-\frac{\left(\bar{\omega}_\alpha\right)^{\frac{\alpha}{2}}}{\cos\frac{\pi\alpha}{4}}A^{\frac{\alpha}{2}}\right\},&\alpha\neq 1\\
\exp\left\{\frac{2\bar{\omega}_\alpha}{\pi}A\log\left(A\right)\right\},&\alpha=1.
\end{cases}
\end{align}
The Elliptical Stable distribution is a particular case of multivariate Stable distribution so it admits finite moments if $\mathbb{E}\left[\zeta^p\right]<\infty$ for $p<\alpha$. For $\alpha\in\left(1,2\right)$, $\mathbb{E}\left(\zeta^{\frac{1}{2}}\right)<\infty$, so that by the law of iterated expectations $\mathbb{E}\left(\bY\right)=\boldsymbol{\xi}$, while the second moment never exists. Except for few cases, $\alpha=2$ (Gaussian), $\alpha=1$ (Cauchy) and $\alpha=\frac{1}{2}$ (L\'evy), the density function cannot be represented in closed form. Those characteristics of the Stable distribution motivate the use of simulations methods in order to make inference on the parameters of interest.  
%
\subsection{How to choose optimal directions}
\label{sec:msq_direction}
%
Before we turn to illustrate our simulation framework, we should solve an important issue related to the application of the MMSQ that concerns the choice of the directions. Indeed, the easiest solution is to choose an equally spaced grid of directions, an approach that would be computational expensive. Therefore, we choose optimal directions $\bu^*$ according to the following definition \ref{def:1} which allows to maximise the information contained in the chosen measure.  
\begin{definition}
\label{def:1}
Let us consider a given parameter of interest $\vartheta^\star\subset\Theta_k\in\mathbb{R}^k$ and consider the subset $\bY^\star=(Y_1^\star,\dots,Y^\star_l,\dots,Y^\star_h)$ of $h$ variables of $\bY \in\mathbb{R}^{m}$ assumed to be informative for the parameter $\vartheta^\star$, and the projectional quantile  $q^{\tau\bu}$ of $\bY^\star$ at a given $\tau$, with $\bu\in\mathbb{S}^{h-1}$. An optimal direction $\bu^{*}\in\mathbb{S}^{m-1}$ for $\bY^\star$ is defined as the vector whose $i$--th coordinate is 
\begin{align}
u_{i}^{*}=
\bigg\{
\begin{array}{lr}
u_{\max,l} & if \ Y_{i}=Y^\star_l\\
0 & otherwise,
\end{array}\nonumber
\end{align}
where $u_{\max,l}$ is the $l$--th coordinate of the vector
\begin{align}\label{eq:opt_dir}
\bu_{\max}\in \left\{\arg\max_{\bu\in\mathbb{S}^{h-1}}q^{\tau\bu}\right\}.
\end{align}
\end{definition} 
\noindent If for example, $h=2$, then the optimal direction is  
\begin{align}
\bu^*=\left(0,\dots,u_{\max,1}, 0,\dots,0, u_{\max,2},\dots, 0\right),\nonumber
\end{align} 
where $u_{\max,1}$ and $u_{\max,2}$ are the $i$--th and $j$--th coordinate respectively, which is informative for the covariances between $Y_i$ and $Y_j$. The optimal solutions defined in \eqref{eq:opt_dir} are computed using the Lagrangian function as follows
\begin{align}
\mathcal{L}\left(\bu,\lambda\right)=q_{\tau\bu}-\lambda\left(\|\bu\|-1\right),\nonumber
\end{align}
by solving $\nabla \mathcal{L}\left(\bu,\lambda\right)=0$, where $\nabla$ stands for the gradient. This equation can be solved analytically, for instance when $m=h=2$ for ESD distribution as shown in section \ref{sec:simulated_examples_ell_stable}, or numerically.\newline

\noindent Let $\bU^{*}$ collect all the optimal solutions $\bu_{j}^{*}$ for $q^{\tau_{j}\bu}, \ j=1,2,\dots,s$ and all the canonical directions and let 
\begin{align}
\bPhi^{\boldsymbol{\tau},\mathbf{u}^\ast}_{\vartheta}&=\left(\bPhi^{\mathbf{u}_1\otimes\boldsymbol{\tau}^\prime}_{\vartheta},\bPhi^{\mathbf{u}_2\otimes\boldsymbol{\tau}^\prime}_{\vartheta},\dots,\bPhi^{\mathbf{u}_K\otimes\boldsymbol{\tau}^\prime}_{\vartheta}\right)^{\prime}\in\mathbb{R}^{B}\nonumber\\
\tilde{\bPhi}_{\mathbf{u}^\ast,\vartheta}^{R}&=\left(\tilde{\bPhi}_{\bu_{1}^{*},\vartheta}^{R},\tilde{\bPhi}_{\bu_{2}^{*},\vartheta}^{R},\dots, \tilde{\bPhi}_{\bu_{K}^{*},\vartheta}^{R}\right)^{\prime}\in\mathbb{R}^{B}\nonumber\\
\hat{\bPhi}_{\mathbf{u}^\ast}&=\left(\hat{\bPhi}_{\bu_{1}^{*}},\hat{\bPhi}_{\bu_{2}^{*}},\dots,\hat{\bPhi}_{\bu_{K}^{*}}\right)^{\prime}\in\mathbb{R}^{B},\nonumber
\end{align}
where $K$ is the cardinality of $\bU^{*}$, $B=\sum_{i=1}^{K}b_{i}$ and $b_{i}$ is the dimension of $\Phi_{\bu_{i},\vartheta}$ for $i=1,2,\dots, K$, then the MMSQ minimises the square distance defined in equation \eqref{eq:mmsq_min_problem} between $\hat{\bPhi}_{\mathbf{u}^\ast}$ and $\tilde{\bPhi}_{\mathbf{u}^\ast,\vartheta}^{R}$ along the optimal directions $\bU^\ast$.
%
\subsection{Simulation results}
\label{sec:simulated_examples_results}
%
\noindent In this Section we consider simulation examples for the ESD distribution $\bY\thicksim\mathcal{ESD}_{m}\left(\alpha, \bxi, \bOmega\right)$ as defined in section \ref{sec:esd_multiv}. In order to apply the MMSQ, we first need to select the quantile--based measures which are informative for each of the parameters of interest $\left(\alpha,\boldsymbol{\xi},\boldsymbol{\Omega}\right)$ where the shape parameter $\alpha\in\left(0,2\right)$ controls for the tail behaviour of the distribution, while $\boldsymbol{\xi}\in\mathbb{R}^m$ and $\boldsymbol{\Omega}$ denote the location parameter and the positive definite $m\times m$ scaling matrix, respectively. Since the quantile--based measures should be informative for the corresponding parameter, we select for $\alpha$ a measure related to the kurtosis of the distribution, for the locations the median and for the elements of the scaling matrix we opt for a measure of dispersion, and all the measures will be calculated along appropriately chosen directions, as it will be discussed later in this section. Summarising, for kurtosis, location and scale parameters we choose respectively
\begin{align}
\kappa_\bu&=\frac{q_{0.95,\bu}-q_{0.05,\bu}}{q_{0.75,\bu}-q_{0.25\bu}}\nonumber\\
m_\bu&= q_{0.5,\bu}\nonumber\\
\varsigma_\bu&= q_{0.75,\bu}-q_{0.25\bu}\nonumber,
\end{align} 
where $\bu\in\mathcal{S}^{m-1}$ defines a relevant direction. Next, we need to identify the optimal directions. To this end we can consider the relevant properties of the ESD. Specifically, as shown for example by \cite{embrechts_etal.2005}, the ESD is closed under marginalisation, i.e., $Y_{i}\thicksim\mathcal{ESD}_{1}\left(\alpha, \xi_{i}, \omega_{ii}\right)$, for $i=1,2,\dots,m$, where $\omega_{ii}$ is the $i$--th element of the main diagonal of the matrix $\boldsymbol{\Omega}$. By exploiting the closure with respect to marginalisation, we can conclude that the optimal directions for the shape parameter $\alpha$, for the locations $\xi_i$ and for the diagonal elements of the scale matrix $\omega_{ii}$, for $i=1,2,\dots,m$ are the canonical directions. It still remains to consider the optimal directions for the off--diagonal elements of the scale matrix $\omega_{ij}$, with $i,j=1,2,\dots,m$ and $i\neq j$. Again we exploit the closure with respect to marginalisation. Specifically, let $\bZ_{ij}=\left(Y_{i}, Y_{j}\right)$, then $\bZ_{ij}\thicksim\mathcal{ESD}_{2}\left(\alpha, \bxi_{ij}, \bOmega_{ij}\right)$, where
\begin{align}
\bxi_{ij}=\left(\xi_{i},\xi_{j}\right)^\prime,\qquad
\bOmega_{ij}=\begin{pmatrix}\omega_{ii} & \omega_{ij}\\ \omega_{ij} & \omega_{jj}  \end{pmatrix}.\nonumber
\end{align}
Moreover, let $\bu\in \mathbb{S}^{1}$ and $Z_{ij,\bu}=\bu^{\prime}\bZ_{ij}$ be the projection of $\bZ_{ij}$ along $\bu$, then $Z_{ij,\bu}\thicksim\mathcal{ESD}_{1}\left(\alpha, \bu^{\prime}\bxi_{ij}, \bu^{\prime}\bOmega_{ij}\bu\right)$, (see \citealt{embrechts_etal.2005}), from which we have the following representation of the projected ESD random variable
\begin{align}
\label{eq:rap_esd}
Z_{ij,\bu} = \bu^{\prime}\bxi_{ij}+\sqrt{\bu^{\prime}\bOmega_{ij}\bu}Z,
\end{align}    
where $Z\thicksim\mathcal{ESD}_{1}\left(\alpha,0,1\right)$. Following Definition \ref{def:1}, in order to find the optimal directions we need to compute 
\begin{align}
\bu_{\max}=\arg\max_{\bu\in\mathbb{S}^{1}}q_{\tau\bu}\left(\bZ_{ij}\right),
\end{align}
where $q_{\tau\bu}\left(\bZ_{ij}\right)$ is the projectional quantile of $\bZ_{ij}$, i.e., the $\tau$--th level quantile of the random variable $Z_{ij,\bu}$. Exploiting representation \eqref{eq:rap_esd},
it holds
\begin{align}
\bu_{\max}=\arg\max_{\bu\in\mathbb{S}^{1}}\bu^{\prime}\bxi_{ij}+\sqrt{\bu^{\prime}\bOmega_{ij}\bu},
\end{align}
which is a quadratic optimisation problem that can be solved using the method of Lagrangian multiplier, as follows
\begin{align}
\mathcal{L}\left(\bu, \lambda\right)=\bu^{\prime}\bxi_{ij}+\sqrt{\bu^{\prime}\bOmega_{ij}\bu}-\lambda\left(\|\bu\|-1\right).
\end{align}
The solution requires to set to zero the gradient of the Lagrangian $\nabla \mathcal{L}\left(\bu, \lambda\right)=0$, that is  
\begin{align}
& \frac{\partial \mathcal{L}}{\partial u_{1}}=\frac{\left(\omega_{ii}^2u_{1}+\omega_{ij} u_{2}\right)}{\sqrt{\omega_{ii}^2u_{1}^2+\omega_{jj}^2u_{2}^2+2\omega_{ij}u_{1}u_{2}}}-2\lambda u_{1}=0\nonumber\\
& \frac{\partial \mathcal{L}}{\partial u_{2}}=\frac{\left(\omega_{jj}^2u_{2}+\omega_{ij}u_{1}\right)}{\sqrt{\omega_{ii}^2u_{1}^2+\omega_{jj}^2u_{2}^2+2\omega_{ij}u_{1}u_{2}}}-2\lambda u_{2}=0\nonumber\\
\label{eq:lagrangian_fd_3}
& \frac{\partial \mathcal{L}}{\partial \lambda} = u_{1}^2+u_{2}^2-1=0,
\end{align}
and from the first two equations, we obtain
\begin{align}
& u_{2}\left(\sigma_{1}^2u_{1}+\omega_{ij} u_{2}\right)-u_{1}\left(\omega_{jj}^2u_{2}+\omega_{ij}u_{1}\right)=0\nonumber\\
& u_{2}^2+u_{2}u_{1}\frac{\omega_{ii}^2-\omega_{jj}^2}{\omega_{ij}}-u_{1}^2=0\nonumber\\
& u_{2}=\frac{u_{1}}{2}\left(-\frac{\omega_{ii}^2-\omega_{jj}^2}{\omega_{ij}}\pm\sqrt{\left(\frac{\omega_{ii}^2-\omega_{jj}^2}{\omega_{ij}}\right)^2+4}\right).\nonumber
\end{align}
By inserting the previous expression for $u_2$ into equation \eqref{eq:lagrangian_fd_3}, we solve for $u_1$
\begin{align}
& u_{1}^2+\frac{u_{1}^2}{4}\left(-\frac{\omega_{ii}^2-\omega_{jj}^2}{\omega_{ij}}\pm\sqrt{\left(\frac{\omega_{ii}^2-\omega_{jj}^2}{\omega_{ij}}\right)^2+4}\right)^2=1\nonumber\\
& u_{1}^2\left[1+\frac{1}{4}\left(-\frac{\omega_{ii}^2-\omega_{jj}^2}{\omega_{ij}}\pm\sqrt{\left(\frac{\omega_{ii}^2-\omega_{jj}^2}{\omega_{ij}}\right)^2+4}\right)^2\right]=1\nonumber\\
& u_{1}=\pm\frac{1}{\sqrt{\left[1+\frac{1}{4}\left(-\frac{\omega_{ii}^2-\omega_{jj}^2}{\omega_{ij}}\pm\sqrt{\left(\frac{\omega_{ii}^2-\omega_{jj}^2}{\omega_{ij}}\right)^2+4}\right)^2\right]}},
\end{align}
where the sign of $u_1$ depends on the sign of $\omega_{ij}$. The optimal direction $\bu_{\max}$ is then plugged into $\bu^{\ast}=\left(0,\dots,u_{1,\max}, \dots, u_{2,\max}, \dots,0\right)$ as explained in Definition \ref{def:1}.\newline 

\noindent To illustrate the effectiveness of the MMSQ we replicate the simulation study considered in \cite{lombardi_veredas.2009}. Specifically, we consider two dimensions of the random vector $\bY$, $m=2,5$ and, for each dimension, we consider three values of the shape parameters $\alpha=\left(1.7,1.9,1.95\right)$, while the location parameter $\bxi$ is always set to zero and the scale matrices are 
\begin{align}\label{eq:esd2_mat1}
\bSigma^s_2=\begin{pmatrix}0.5 & 0.9\\
0.9 & 2
\end{pmatrix},
\end{align}
for $m=2$, and 
\begin{align}\label{eq:esd2_mat2}
\bSigma^s_5=\begin{pmatrix}
0.25 & 0.25 & 0.4 & 0 & 0\\
0.25 & 0.5 & 0.4 & 0 & 0\\
0.4 & 0.4 & 1 & 0 & 0\\
0 & 0 & 0 & 2 & 2.55\\
0 & 0 & 0 & 2.55 & 4  
\end{pmatrix},
\end{align}
for $m=5$. We also consider two different sample sizes $n=500,2000$ and we fix $R=200$. In Table \ref{tab:table_coverage_esd_dim2}--\ref{tab:table_coverage_esd_dim5}, we report estimation results obtained over $1,000$ replications for $m=2$ and $m=5$, with $n=500,2000$, for three different values of the characteristic exponent $\alpha=\left(1.7,1.9,1.95\right)$. Specifically, each table reports the bias (BIAS), the standard error (SSD) and the empirical coverage probability (ECP) of the estimated parameters. Our results show that the MMSQ estimator is always unbiased, indeed the BIAS is always less than $0.25$ in dimension $m=2$ and less that $0.15$ in dimension $m=5$. The SSDs are always small, in particular for $n=500$ it is always less then $0.5$. The empirical coverages are always in line with their expected values for all but the diagonal elements of the scale matrix $\sqrt{\omega_{ii}}$ for $i=1,2,\dots,m$ for which they display lower values than expected, which means that in those cases the asymptotic standard errors are underestimated.\newline 
%
%

\noindent We also illustrate the performance of the Sparse--MMSQ method and compare it with three alternative methods on two simulation examples. The first example considers a sample of $n=500$ observations from a Elliptical Stable distribution of dimension $m=12$, with locations at zero, four different values of the characteristic exponent $\alpha=\left(1.70,1.90,1.95,2.00\right)$ and scale matrix $\bSigma^s_{12}$ equal to that considered in \cite{wang.2015}.
The second simulated example considers a sample of $n=800$ observations from the Elliptical Stable distribution of dimension $27$ with location and characteristic exponent chosen as before and block--diagonal scale matrix $\boldsymbol{\Sigma}^s_{27}=\diag\left\{\boldsymbol{\Sigma}^s_{12},\boldsymbol{\Sigma}^s_{15}\right\}$ and $\boldsymbol{\Sigma}^s_{15}$ is the correlation matrix in Section 4.2 of \cite{wang.2010}. We compare the Sparse--MMSQ with three alternative algorithms: the graphical LASSO (GLASSO) of \cite{friedman_etal.2008}, the graphical LASSO with SCAD penalty (SCAD), the graphical adaptive Lasso (Adaptive Lasso)  of \cite{fan_etal.2009}. The main aim of the proposed simulation example is to compare the performance of the different algorithm for different levels of deviations from the Gaussian assumption which represents the benchmark assumption for the competing algorithms. Results are reported in Table \ref{tab:table_frob_mmsq_dim_12_27} in terms of average Frobenius norm, F1--Score and Kullback--Leibler (KL) divergence over 100 replications and their standard deviations.
The $F_1-\textrm{score}$, see \cite{baldi_etal.2000} assesses the performance of the algorithm by computing
\begin{align}
F_1-\textrm{score}=\frac{2TP}{2TP+FP+FN},
\end{align}
where $TP$, $FP$ and $FN$ are the true positives, false positives and false negatives. The $F_1-\textrm{score}$ lies between $0$ and $1$, where $1$ stands for perfect identification and $0$ for bad identification. The KL divergence computes the divergence between the true $\bOmega$ and the estimated $\hat{\bOmega}$ scale matrices as
\begin{align}
KL\left(\bOmega,\hat{\bOmega}\right)=\frac{1}{2}\left(\trace\left(\bOmega^{-1}\hat{\bOmega}\right)-m-\log\left(\frac{\vert\hat{\bOmega}\vert}{\vert\bOmega\vert}\right)\right),
\end{align}
while the Frobenius norm is the usual matrix norm $\|\bOmega\|_F=\sqrt{\sum_{i,j=1}^m\omega_{ij}^2}$. The Sparse--MMSQ method performs very well with respect to the alternatives in terms of $F_1-\textrm{score}$ for all the considered values of the characteristic exponent $\alpha$. This means that the method correctly identifies the sparse structure of the matrices regardless the amount of the deviation from the Gaussian assumption. This results is confirmed by visual inspection of Figures \ref{fig:smmsq_12_alpha_170_190}--\ref{fig:smmsq_27_alpha_200} reporting the band structure of the true and estimated matrices averaged across the $100$ replications. The Sparse--MMSQ method does a good job also in terms of Frobenius norm but only in  dimension $m=27$. The worst results are reported by the Sparse--MMSQ in terms of KL divergence. A possible explanation for those results would be that maximum likelihood methods essentially minimise the KL divergence, therefore reported values for the alternative methods are the minimum obtainable.
%
%
\begin{table}[!t]\captionsetup{font={footnotesize}}\setlength{\tabcolsep}{5pt}\begin{center}\resizebox{0.95\columnwidth}{!}{\begin{tabular}{lrrrrrrrr}\\\toprule$\alpha$ & 1.70  & 1.90 & 1.95 & 2.00 & 1.70  & 1.90 & 1.95 & 2.00   
\\\cmidrule(lr){1-1}\cmidrule(lr){2-2}\cmidrule(lr){3-3}\cmidrule(lr){4-4}\cmidrule(lr){5-5} \cmidrule(lr){6-6}\cmidrule(lr){6-6}\cmidrule(lr){7-7}\cmidrule(lr){8-8}\cmidrule(lr){9-9} {\bf Frobenius norm}&\multicolumn{4}{c}{{\it Dimension 12}}&\multicolumn{4}{c}{{\it Dimension 27}}\\ 
\cmidrule(lr){2-5}\cmidrule(lr){6-9} 
\multirow{ 2}{*}{GLasso} & 1.595 & 1.0392 & 0.81693 & 0.59958 & 4.5938 & 2.6358 & 1.8644 & 0.76058 \\ & {\it (0.5232)}  & {\it (0.47659)}  & {\it (0.34976)}  & {\it (0.074508)}  & {\it (3.0661)} & {\it (2.071)} & {\it (1.5321)} & {\it (0.045129)}\\ 
\multirow{ 2}{*}{SCAD} &1.5043 & 0.94056 & 0.7378 & 0.58001 & 4.5847 & 2.5318 & 1.7361 & 0.56438 \\ 
& {\it(0.56176) } &{\it (0.51448)}  & {\it(0.39827)}  & {\it(0.11528)}  & {\it(3.3211)} & {\it(2.1012)} & {\it(1.5937) }&{\it(0.063328)}\\ \multirow{ 2}{*}{Adaptive Lasso}&1.4486&0.90578&0.69566&0.50441&4.084&2.2872&1.7195&0.65269\\ 
&{\it (0.5416) } &{\it(0.46181) } &{\it(0.34298)  }&{\it(0.087179) } & {\it(3.2848)} &{\it(1.6282) }& {\it(1.5373) }&{\it(0.047185)}\\ 
\multirow{ 2}{*}{S--MMSQ}&1.6618&1.4111&1.293&1.2417&2.6987&2.449&2.3426&2.1677\\ 
&{\it (0.21718) } &{\it(0.22563) } &{\it(0.22635) } &{\it(0.22013) } &{\it (0.2791) }&{\it (0.26377)} & {\it(0.28864) }&{\it(0.24305)}\\ 
\hline 
{\bf $F_1$--score}&\multicolumn{4}{c}{{\it Dimension 12}} &\multicolumn{4}{c}{{\it Dimension 27}}\\
\cmidrule(lr){2-5}\cmidrule(lr){6-9} 
\multirow{ 2}{*}{GLasso} & 0.1313&0.012143&0.019025&0&0.037952&0.007118&0.0036548&0 \\ 
& {\it(0.23919)}  &{\it(0.079663)}  &{\it(0.1103)}  &{\it(0)}  & {\it(0.10075)} &{\it (0.07118)} & {\it(0.036548)} &{\it(0)}\\ 
\multirow{ 2}{*}{SCAD}&0.26295&0.17153&0.15148&0.23174&0.033123&0.0085093&0.0036548&0.0015072 \\ 
&{\it (0.27865) }  &{\it(0.22789) }  &{\it(0.21994)}  &{\it(0.23612) }  &{\it (0.095177) } &{\it (0.072389)}  &{\it (0.036548) } &{\it(0.015072)} \\ 
\multirow{ 2}{*}{Adaptive Lasso}&0.2431&0.080443&0.057361&0.037187&0.13042&0.040525&0.0075655&0 \\ 
&{\it (0.33484) }  &{\it(0.17254)}   &{\it(0.1628) }  &{\it(0.10126)  } & {\it(0.23048)}  & {\it(0.15814)}  & {\it(0.075655)}  &{\it(0)} \\ 
\multirow{ 2}{*}{S--MMSQ}&0.40246&0.55827&0.62059&0.69567&0.83754&0.75499&0.71847&0.66897 \\ 
&{\it (0.17051)}   &{\it(0.14057) }  &{\it(0.13682)}   &{\it(0.089005)}   & {\it(0.097355)}  & {\it(0.086734) } & {\it(0.079755)}  &{\it(0.048205)} \\ 
\hline 
{\bf KL} & \multicolumn{4}{c}{{\it Dimension 12}} &\multicolumn{4}{c}{{\it Dimension 27}}\\\cmidrule(lr){2-5}  \cmidrule(lr){6-9} 
\multirow{ 2}{*}{GLasso} & 0.68981&0.29197&0.18876&0.10059&6.6643&2.3116&0.98558&0.17044 \\ 
& {\it (0.36107)}   &{\it (0.25353)}   &{\it (0.17321)}   &{\it (0.024998) }  &{\it  (8.8661)}  & {\it (4.2476) } &{\it  (1.7369)}  &{\it (0.021347)} \\ 
\multirow{ 2}{*}{SCAD}&0.63751&0.24506&0.16588&0.09049&6.8927&2.2701&0.92517&0.095768 \\ 
&{\it (0.39392)}  &{\it(0.24673)}  &{\it(0.19988)}  &{\it(0.03358)}  & {\it(8.9943)} & {\it(4.3379)} &{\it (1.8981) }&{\it(0.018791)}\\ 
\multirow{ 2}{*}{Adaptive Lasso}&0.58807&0.2294&0.14527&0.0735&6.627&2.3228&0.96203&0.13577 \\ 
& {\it(0.34298)}  &{\it(0.2109)}  &{\it(0.15541) } &{\it(0.022154)}  & {\it(8.9975) }&{\it (4.6305) }& {\it(2.0405)} &{\it(0.020124)}\\ 
\multirow{ 2}{*}{S--MMSQ}&0.96549&0.77602&0.67512&0.64992&58.4657&53.6626&51.8209&48.6645 \\ 
& {\it(0.20521)}  &{\it(0.22501)}  &{\it(0.21598) } &{\it(0.21598)}  & {\it(7.8325)} & {\it(9.2006)} &{\it (8.7965) }&{\it(8.7965)}\\ 
\bottomrule \end{tabular}} 
\caption{\footnotesize{Frobenius norm, F1--Score and Kullbach--Leibler information between the true scale matrix of the Elliptical Stable distribution and the matrices estimated by alternative methods: the Graphical Lasso of \cite{friedman_etal.2008} (GLasso), the graphical model with SCAD penalty (SCAD), the graphical model with adaptive Lasso of \cite{fan_etal.2009} (Adaptive Lasso) and the S--MMSQ. The measures  are evaluated over 100 replications, we report the mean and the variances in brackets.}} 
\label{tab:table_frob_mmsq_dim_12_27} 
\end{center} 
\end{table} 
%
%
\section{Application to systemic risk}
\label{sec:application_port_opt}
%
\indent In this Section we first introduce the NetCoVaR risk measure that extends the CoVaR approach of \cite{adrian_brunnermeier.2011,adrian_brunnermeier.2016} to account for multiple contemporaneous distress instances. Then we introduce the NetCoVaR dominance test that extends that proposed by \cite{castro_ferrari.2014} to a parametric framework where asset returns are assumed to follow an Elliptical Stable distribution and we describe how it can be used to build a network measuring the tail dependence among institutions. The Elliptical Stable distribution plays a relevant role in systemic risk assessment either because data are contaminated by the presence of outliers or because the methodology strongly relies on the presence of heavy--tailed distributions. Finally, we apply the risk measure and the risk measurement framework to a real dataset of US financial institutions covering the recent global financial crisis of 2008.
%
\subsection{The NetCoVaR risk measure}
\label{sec:sys_risk_network_risk_measure}
%
%
\noindent The Conditional or Comovement Value--at--Risk (CoVaR) has been introduced in the systemic risk literature by \cite{adrian_brunnermeier.2011,adrian_brunnermeier.2016}, and subsequently extended to a parametric framework by \cite{girardi_ergun.2013}. The CoVaR measures the spillover effects between institutions by providing information on the Value--at--Risk of an institution or market, conditional on another institution's distress event. Formally, the $CoVaR^\tau_{j\vert M}$ of institution $j$ belonging to a given market and the market itself $M$ is defined as the VaR of the market $M$ conditional to the institution $j$ being at its VaR level
\begin{equation}
\label{eq:covar_AB_def}
\mathbb{P}\left(X_M\leq VaR_M^\tau\mid X_j=VaR_j^\tau\right)=\tau,
\end{equation}
where $X_j$ and $X_M$ denote the profits and loss of the institution $j$ and the market $M$ respectively, and $VaR_j^\tau$ and $VaR_M^\tau$ denote the individual $\tau$--level VaRs of the institution $j$ and the market, respectively.\newline
\indent Despite its relevance for measuring the impact of a distress event affecting one institution on the overall financial market, the previous definition of CoVaR suffers from two main drawbacks. First, the CoVaR in equation \eqref{eq:covar_AB_def} is not monotonically increasing as a function of the correlation between $X_j$ and $X_M$. As a consequence, it does not preserve the stochastic ordering induced by the bivariate distribution  for the couple of random variables $\left(X_j,X_M\right)$, see, e.g., \cite{mainik_schaanning.2014}, \cite{bernardi_etal.2017} and \cite{bernardi_etal.2017b} for an exhaustive discussion. Second, and more importantly, it only considers the impact of an extreme event affecting an institution on another institution or a market index, failing to account for the presence of any other institution belonging to the same market. Several papers try to overcome this problem by introducing systemic risk measures that account for multiple contemporaneous distress events and investigated their theoretical properties, see, e.g., \cite{bernardi_etal.2017d}, \cite{salvadori_etal.2016} and \cite{bernardi_etal.2016b}. Here, we follow along the same line provided in \cite{bernardi_etal.2016c} and we measure how the distress of one institution affects the overall health of all the remaining ones. Formally, let $\tau\in\left(0,1\right)$ be a confidence level, and let $j$ denote an institution belonging to a given market with $d$ participants, $i=1,2,\dots,d$, then the network CoVaR of institution $j$, denoted by $NetCoVaR_{j}^{\tau}$, satisfies the following equation 
\begin{equation}
\mathbb{P}\left(\sum_{i=1}^dY_{i}\leq NetCoVaR_{j}^{\tau}\mid Y_{j}\leq VaR_{j}^{\tau}\right)=\tau,
\label{eq:netcovar_def}
\end{equation}
for $j=1,2,\dots,d$, where $VaR_{j}^{\tau}$ denotes the marginal Value--at--Risk (VaR) of institution $j$ such that $\mathbb{P}\left(Y_{j}\leq{\rm VaR}_{j}^{\tau}\right)=\tau$. The NetCoVaR of institution $j$ defined in equation \eqref{eq:netcovar_def}, is the quantile of the distribution of the random variables $S=\sum^d_{i=1}Y_{i}$ conditional on an extreme event affecting the return of institution $j$, $Y_{j}$, where such an extreme event is defined as $Y_{j}$ being below its VaR at confidence level $\tau$. The calculation of the NetCoVaR requires the prior evaluation of institution's $j$ marginal VaR and, conditional on $VaR_{j}^{\tau}$, the $NetCoVaR_{j}^{\tau}$ is calculated as the value of $s^\ast$ such that
\begin{equation*}
\mathbb{P}\left(\sum_{i=}^dY_{i}\leq s^\ast,Y_{j}\leq VaR_{j}^{\tau}\right)=\tau^2,
\end{equation*}
for $j=1,2,\dots,d$. Our definition of NetCoVaR in equation \eqref{eq:netcovar_def} is substantially different from that originally introduced by \cite{adrian_brunnermeier.2011} and it overcomes the deficiencies of the original definition mentioned above. Moreover, the closure of the ESD with respect to linear combinations and marginalisation is extremely helpful in evaluating the NetCoVaR.\newline 
\indent The NetCoVaR in equation \eqref{eq:netcovar_def} only provides a point estimate of the systemic impact of institution $j$. A further improvement would be to provide a system of hypothesis to test the systemic dominance of one institution over another one. To this end, \cite{castro_ferrari.2014} recently proposed the following system of hypothesis
\begin{equation}
\label{eq:netcovar_dom_test}
\begin{cases}
H_0\::\:CoVaR_j^\tau=CoVaR_k^\tau\\
H_1\::\:CoVaR_j^\tau\neq CoVaR_k^\tau,
\end{cases}
\end{equation}
for any $j,k=1,2,\dots,d$ with $j\neq k$. Here, we consider a similar dominance test where the CoVaR risk measured is substituted by our NetCoVaR.
The next proposition provides the asymptotic distribution of $NetCoVaR_j^\tau$, for $j=1,2,\dots,d$ which is useful to calculate the asymptotic distribution of the test statistic to perform the NetCoVaR dominance test in equation \eqref{eq:netcovar_dom_test}. Although our results easily extend to any Elliptical distribution, in what follows we consider Elliptically Stable distributed random variables.  
%
%
\begin{proposition}
\label{prop:netcovar_asy_dist}
Let $\bY\sim\mathcal{ESD}_d\left(\alpha,\bmu,\bOmega\right)$, with $\bY=\left(Y_1,Y_2,\dots,Y_d\right)$, then 
\begin{equation}
\left(\begin{matrix}
S\\Y_j\\Y_k
\end{matrix}\right)\sim\mathcal{ESD}_3\left(\alpha,\tilde{\bmu},\tilde{\bOmega}\right),
\end{equation}
where $\tilde{\bmu}=\left(\boldsymbol{\iota}_d^\prime\bmu,\mu_j,\mu_k\right)^\prime$ with
\begin{align}
\tilde{\bOmega}=
\left[\begin{matrix}
\sigma^2_{S} &  \sigma_{S,Y_j}& \sigma_{S,Y_k}\\
\star& \sigma^2_{Y_j}& \sigma_{Y_j,Y_k}\\
\star&\star&\sigma^2_{Y_k}
\end{matrix}\right],
\end{align}
and $\sigma^2_{Y_l}=\omega_{l,l}$, $\sigma^2_{S}=\boldsymbol{\iota}_l^\prime\bOmega\boldsymbol{\iota}_l$, $\sigma_{S,Y_l}=\sigma_{Y_l}^2+\sum_{\substack{s=1\\ s\neq l}}^d\omega_{s,l}$ and $\sigma_{Y_j,Y_k}=\omega_{j,k}$, for $l=j,k$. Furthermore, let $Z_{l}= S\mid Y_{l}\leq VaR_{l}^{\tau}$, for $l=j,k$, then the NetCoVaR$_{j}^{\tau}$ and NetCoVaR$_{k}^{\tau}$ are the $\tau$--level quantile of those variables, that is, they satisfy the following relations $\mathbb{P}\left(Z_{l}\leq NetCoVaR_{l}^{\tau}\right)=\tau$, for $l=j,k$, and
\begin{align}
\sqrt{n}\left(q_{Z_{j}}^{\tau}\left(\hat{\vartheta}\right)-q_{Z_{j}}^{\tau}\left(\vartheta\right),q_{Z_{k}}^{\tau}\left(\hat{\vartheta}\right)-q_{Z_{k}}^{\tau}\left(\vartheta\right) \right)\rightarrow\mathcal{N}\left(\bO, \hat{\bSigma}\right),
\end{align}
where $q_{Z_{l}}^{\tau}\left(\hat{\vartheta}\right)=NetCoVaR_{l}^{\tau}\left(\hat{\vartheta}\right)$, for $l=j,k$, with
\begin{align}
\hat{\bSigma}= \left[\begin{matrix}\nabla_{\vartheta} q_{Z_{j}}^{\tau}\left(\hat\vartheta\right)\\
\nabla_{\vartheta} q_{Z_{k}}^{\tau}\left(\hat\vartheta\right)\end{matrix}\right]^{\prime}
\hat{\bOmega}
\left[\begin{matrix}\nabla_{\vartheta} q_{Z_{j}}^{\tau}\left(\hat\vartheta\right)\\
\nabla_{\vartheta} q_{Z_{k}}^{\tau}\left(\hat\vartheta\right)\end{matrix}\right],
\end{align}
where $\bOmega$ is the asymptotic variance covariance matrix of the vector of estimated parameters $\hat{\vartheta}$ and $\nabla_{\vartheta} q_{Z_{l}}^{\tau}\left(\hat\vartheta\right)$, for $l=j,k$ is the Jacobian of the conditional quantile transformation evaluated at $\hat{\vartheta}$. Therefore, the asymptotic distribution of $NetCoVaR_j^\tau\left(\hat{\vartheta}\right)-NetCoVaR_k^\tau\left(\hat{\vartheta}\right)$ becomes
\begin{align}
NetCoVaR_j^\tau \left(\hat{\vartheta}\right)-NetCoVaR_k^ \tau\left(\hat{\vartheta}\right) \sim \mathcal{N}\left(NetCoVaR_j^\tau \left(\vartheta\right)-NetCoVaR_k^\tau \left(\vartheta\right), \sigma_{jk}^2\right),
\end{align}
where $\sigma_{jk}^{2} = \nabla_{\vartheta} q_{Z_{j}}^{\tau}\left(\hat\vartheta\right)^{\prime}\hat{\bOmega} \nabla_{\vartheta} q_{Z_{k}}^{\tau}\left(\hat\vartheta\right)$.
\end{proposition}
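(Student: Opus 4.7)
The plan is to split the proof into a distributional part (identifying the law of $(S,Y_j,Y_k)$) and an asymptotic part (transferring the MMSQ asymptotics of $\hat\vartheta$ to the NetCoVaR via the delta method).

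First I would establish the joint distribution of $(S,Y_j,Y_k)$ by exploiting the closure of the Elliptical Stable family under affine transformations, which is inherited from the characteristic function representation in \eqref{eq:esd_char_fun}. Writing $(S,Y_j,Y_k)^\prime = A\bY$ where $A\in\mathbb{R}^{3\times d}$ has rows $\boldsymbol{\iota}_d^\prime$, $\be_j^\prime$ and $\be_k^\prime$, the stability property gives $(S,Y_j,Y_k)^\prime\sim\mathcal{ESD}_3(\alpha, A\bmu, A\bOmega A^\prime)$. A direct multiplication yields $A\bmu=\tilde{\bmu}$ and the entries of $A\bOmega A^\prime$ reproduce $\tilde{\bOmega}$: the diagonal entries $\sigma^2_{Y_l}=\omega_{l,l}$ and $\sigma^2_S=\boldsymbol{\iota}_d^\prime\bOmega\boldsymbol{\iota}_d$ are immediate, while the off--diagonal $\sigma_{S,Y_l}=\sum_{s=1}^d\omega_{s,l}=\omega_{l,l}+\sum_{s\neq l}\omega_{s,l}$ and $\sigma_{Y_j,Y_k}=\omega_{j,k}$ follow by inspection.

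Next I would identify the NetCoVaR as a smooth functional of $\vartheta$. By closure of the ESD under marginalization and conditioning (see \citealt{embrechts_etal.2005}), the conditional law of $Z_l=S\mid Y_l\leq VaR_l^\tau$ depends only on $\tilde{\bmu}$ and $\tilde{\bOmega}$, hence only on $\vartheta=(\alpha,\bmu,\bOmega)$, and so does $VaR_l^\tau$. Consequently $q_{Z_l}^\tau(\vartheta):=NetCoVaR_l^\tau(\vartheta)$ is a well-defined (continuously differentiable) map of $\vartheta$, and its gradient $\nabla_\vartheta q_{Z_l}^\tau(\vartheta)$ can be obtained by implicit differentiation of the defining equation $\mathbb{P}_\vartheta(Z_l\leq q)=\tau$.

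For the asymptotic statement I would invoke Theorem \ref{th:mmsq_asy}, which gives $\sqrt n(\hat\vartheta-\vartheta)\xrightarrow{d}\mathcal{N}(\bO,\bOmega_0)$ where $\bOmega_0=\left(1+R^{-1}\right)\bD_\vartheta\mathbf{W}_\vartheta\bOmega_\vartheta\mathbf{W}_\vartheta^\prime\bD_\vartheta^\prime$, and then apply the multivariate delta method to the mapping $\vartheta\mapsto\bigl(q_{Z_j}^\tau(\vartheta),\,q_{Z_k}^\tau(\vartheta)\bigr)$. Stacking the two gradients into the Jacobian $J(\vartheta)=\bigl[\nabla_\vartheta q_{Z_j}^\tau(\vartheta),\nabla_\vartheta q_{Z_k}^\tau(\vartheta)\bigr]^\prime$ yields the bivariate asymptotic normality with covariance $J\bOmega_0 J^\prime=\hat{\bSigma}$ (evaluated at $\hat\vartheta$ by continuity), and the final univariate statement for $NetCoVaR_j^\tau(\hat\vartheta)-NetCoVaR_k^\tau(\hat\vartheta)$ is then simply the linear functional $(1,-1)$ applied to this bivariate limit, producing the stated variance $\sigma_{jk}^2$.

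The main obstacle will be the explicit evaluation of $\nabla_\vartheta q_{Z_l}^\tau(\vartheta)$. Because the conditional cdf of $Z_l$ under the ESD has no closed form outside special cases ($\alpha=2,1,1/2$), implicit differentiation must be combined with either the scale-mixture representation $\bY=\bxi+\zeta^{1/2}\bX$ or with numerical differentiation of the simulated conditional quantile. A secondary technical point is verifying the regularity conditions for the delta method, namely positivity of the conditional density of $Z_l$ at $NetCoVaR_l^\tau(\vartheta)$ and differentiability of $q_{Z_l}^\tau(\cdot)$ in a neighbourhood of $\vartheta_0$, both of which follow from Assumption \ref{as:0} applied to the affinely transformed random vector $(S,Y_j,Y_k)$.
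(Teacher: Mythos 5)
Your proposal follows essentially the same route as the paper's own (very terse) proof: the joint law of $\left(S,Y_j,Y_k\right)^\prime$ via closure of the ESD under affine transformations, and the asymptotic statement via Theorem \ref{th:mmsq_asy} combined with the multivariate delta method applied to $\vartheta\mapsto\left(q_{Z_j}^{\tau}\left(\vartheta\right),q_{Z_k}^{\tau}\left(\vartheta\right)\right)$. Your write-up is in fact more explicit than the paper's, which compresses both steps into a single sentence, and your remarks on computing $\nabla_{\vartheta}q_{Z_l}^{\tau}$ via the scale-mixture representation mirror the paper's accompanying Remark.
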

\begin{proof}
The distribution of the random vector $\left(S,Y_j,Y_k\right)^\prime$ follows immediately by applying the closure property of the multivariate Elliptical Stable distribution, while the asymptotic distribution of $NetCoVaR_j^\tau\left(\hat{\vartheta}\right)-NetCoVaR_k^\tau\left(\hat{\vartheta}\right)$ follows from Theorem \ref{th:mmsq_asy} and the application of the Delta method.
\end{proof}
\begin{remark}
The asymptotic distribution of the NetCoVaR in Proposition \eqref{prop:netcovar_asy_dist} requires the evaluation of the Jacobian of the quantile transformation of the distribution of the random variables $Z_{l}= S\mid Y_{l}\leq VaR_{l}^{\tau}$, for $l=j,k$. Under the assumption that $\bY\sim\mathcal{ESD}_d\left(\alpha,\bmu,\bOmega\right)$, with $\bY=\left(Y_1,Y_2,\dots,Y_d\right)$, then the density function $f_{Z_{j}}\left(q_{Z_{j}}^{\tau}\right)$  is easily derived as follows 
\begin{align}
f_{Z_{j}}\left(z_{j}\right) = f_{S\mid Y_{j}\leq VaR_{j}^{\tau}}\left(z_{j}\right) = \frac{f_{S, Y_{j}}\left(s, y_{j}\leq VaR_{j}^{\tau}\right)}{\int_{-\infty}^{VaR_{j}^{\tau}}f_{Y_{j}}\left(y\right)dy}= \frac{\int_{-\infty}^{VaR_{j}^{\tau}}f_{S,Y_{j}}\left(s, y_{j}\right)dy_{j}}{\tau},
\end{align}
where $\tau=\int_{-\infty}^{VaR_{l}^{\tau}}f_{S,Y_{l}}\left(s, y_{l}\right)dy_{l}$. Exploiting again the closure under linear transformation property of the Elliptical Stable distribution, the joint density of the bivariate vector $\left(S,Y_l\right)^\prime$ is $\left(S,Y_l\right)^\prime\sim\mathcal{ESD}_2\left(\alpha,\bmu_{S,Y_l},\bOmega_{S,Y_l}\right)$, with $\bmu_{S,Y_l}=\left(\boldsymbol{\iota}_d^\prime\bmu,\mu_l\right)^\prime$ with
\begin{align}
\bOmega_{S,Y_l}=
\left[\begin{matrix}
\sigma^2_{S} &  \sigma_{S,Y_l}\\
\sigma_{S,Y_l}& \sigma^2_{Y_l}
\end{matrix}\right],
\end{align}
for $l=j,k$. Therefore, the density of $Z_l$ for $l=j,k$ is Extended Skew Elliptical, i.e., $Z_l\sim\mathcal{ESES}_1\left(\alpha,\mu_S,\sigma^2_S,\lambda_0,\lambda_1\right)$, with $\mu_S = \boldsymbol{\iota}_d^\prime\boldsymbol{\mu}$, $\sigma^2_S = \boldsymbol{\iota}_d^\prime\boldsymbol{\Omega}\boldsymbol{\iota}_d$, $\lambda_0=\frac{VaR_l^\tau-\mu_{Y_l}}{\sigma_{Y_l}\sqrt{1-\delta^2}}$, $\lambda_1=\frac{\delta}{\sqrt{1-\delta^2}}$ and $\delta=\frac{\sigma_{S,Y_l}}{\sqrt{\sigma^2_{S}\sigma^2_{Y_l}}}$, for $l=j,k$. See the supplementary material accompanying the paper for an analytical definition of the Extended Skew Elliptical Stable distribution, the probability and cumulative density function, the generating mechanism and the characteristic function.
\end{remark}
%
\subsection{Empirical application and results}
\label{sec:sys_risk_empirical_app}
%
\begin{table}[!h] 
\captionsetup{font={footnotesize}} 
\begin{center} \resizebox{0.9\columnwidth}{!}{ 
\begin{tabular}{llccllcc}\\ 
\toprule 
\multirow{2}{*}{Name} & \multirow{2}{*}{Ticker} & Date of & Date of & \multirow{2}{*}{Name} & \multirow{2}{*}{Ticker}& Date of  & Date of \\ 
&& first observation	& last observation	& && first observation	 &last observation\\
\hline 
\multirow{2}{*}{ZIONS BANCORP} &\multirow{2}{*}{ZIO} & \multirow{2}{*}{02/01/73} & \multirow{2}{*}{30/08/17} & \multirow{2}{*}{\bf\color{red} LEHMAN BROS. HDG.} &\multirow{2}{*}{\bf\color{red} LEHM} & \multirow{2}{*}{\bf\bf\color{red} 02/05/94} & {\bf\bf\color{red} Default}  \\ 
 	& &&	 & 	& 	 & 	 & {\bf\bf\color{red} 01/03/12} \\ 
WELLS FARGO \& CO. &WFG& 02/01/73 & 30/08/17 & LEGG MASON & LEGG & 01/08/83 & 30/08/17 \\
\multirow{2}{*}{\bf WACHOVIA} & \multirow{2}{*}{\bf WAC} & \multirow{2}{*}{\bf 02/01/73}  & {\bf Acquisition} & \multirow{2}{*}{KEYCORP} &\multirow{2}{*}{ KEY }& \multirow{2}{*}{02/01/73} & \multirow{2}{*}{30/08/17}\\ 
& &	& {\bf 30/12/08}	& 	& 	 &\\
US BANCORP & USB & 16/10/84 & 30/08/17 & JP MORGAN CHASE \& CO. &JPM &02/01/73 & 30/08/17 \\
TRAVELERS COS. &TRAV& 02/01/73 & 30/08/17 & FRANKLIN RESOURCES &FRA& 06/01/75 & 30/08/17 \\ 
SYNOVUS FINANCIAL& SYN& 06/01/75 & 30/08/17 & FANNIE MAE &FAN& 02/01/73 & 30/08/17 \\
 \multirow{2}{*}{SUNTRUST BANKS} &\multirow{2}{*}{SUN}&  \multirow{2}{*}{01/07/85} &  \multirow{2}{*}{30/08/17} & \multirow{2}{*}{\bf COUNTRYWIDE FINL.} &\multirow{2}{*}{\bf COU}& \multirow{2}{*}{\bf 31/01/75} & {\bf Acquisition}  \\ 
 & 	& 	& 	&& &	 &{\bf 27/06/08}\\
\multirow{2}{*}{\bf SAFECO} &\multirow{2}{*}{\bf SAF}& \multirow{2}{*}{\bf 02/01/73 }& {\bf Acquisition} &  \multirow{2}{*}{COMERICA} & \multirow{2}{*}{COM}& \multirow{2}{*}{02/01/73} & \multirow{2}{*}{30/08/17} \\
& 	& 	&{\bf 19/09/08} & 	& 	&\\
REGIONS FINL. & REG& 02/01/73 & 30/08/17 & CITIZENS FINANCIAL GROUP &CIT & 24/09/14 & 30/08/17 \\ 
PNC FINL. SVS. GP. &PNC & 02/01/73 & 30/08/17 & CITIGROUP &CTG & 29/10/86 & 30/08/17 \\
NORTHERN TRUST & NORT& 02/01/73 & 30/08/17 & CINCINNATI FINL. & CIN& 02/01/73 & 30/08/17 \\
\multirow{2}{*}{\bf NATIONAL CITY} &\multirow{2}{*}{\bf NTC}&\multirow{2}{*}{\bf 01/05/73} & {\bf Acquisition}  & \multirow{2}{*}{CIGNA} &\multirow{2}{*}{CIG} & \multirow{2}{*}{01/04/82} & \multirow{2}{*}{ 30/08/17} \\
  & &	& {\bf 29/12/08}   & 	  & 	&	\\ 
\multirow{2}{*}{MORGAN STANLEY} & \multirow{2}{*}{MS}& \multirow{2}{*}{23/02/93} & \multirow{2}{*}{30/08/17} & \multirow{2}{*}{\bf CHUBB} & \multirow{2}{*}{\bf CHU}& \multirow{2}{*}{ \bf 02/01/73} & {\bf Acquisition} \\ 
 & 	& 	& &	& &	 &{\bf 13/01/16 }\\
\multirow{2}{*}{\bf MERRILL LYNCH \& CO.}& \multirow{2}{*}{\bf ML}& \multirow{2}{*}{\bf 02/01/73} &  {\bf Acquisition}  &\multirow{2}{*}{\bf\bf\color{red} BEAR STEARNS} &\multirow{2}{*}{\bf\bf\color{red} BST} & \multirow{2}{*}{\bf\bf\color{red} 29/10/85} & {\bf\bf\color{red} Default}  \\ 
& 	&& {\bf 30/12/08} & 	& &	& {\bf \bf\color{red}29/05/08} \\
\multirow{2}{*}{\bf MELLON FINL.}& \multirow{2}{*}{\bf MEL}& \multirow{2}{*}{\bf 02/01/73 }& {\bf Acquisition} & \multirow{2}{*}{BANK OF NEW YORK MELLON } &\multirow{2}{*}{BNY} & \multirow{2}{*}{02/01/73} & \multirow{2}{*}{ 30/08/17} \\ 
& 	& & {\bf 28/06/07} & 	 & 	 & \\
\multirow{2}{*}{\bf MARSHALL \& ILSLEY} &\multirow{2}{*}{\bf M\&I} & \multirow{2}{*}{\bf 02/01/73} & {\bf Acquisition} &  \multirow{2}{*}{BANK OF AMERICA} &\multirow{2}{*}{BOA} &  \multirow{2}{*}{02/01/73} &  \multirow{2}{*}{30/08/17} \\
& 	& & {\bf 04/07/11} & 	 & 	 & \\ 
MARSH \& MCLENNAN & M\&M& 02/01/73 & 30/08/17 & AMERICAN INTL.GP. & INTL& 02/01/73 & 30/08/17 \\
LOEWS & LOE& 02/01/73 & 28/08/17 & AFLAC &AFLAC & 23/08/73 & 30/08/17 \\
LINCOLN NATIONAL &LIN & 02/01/73 & 30/08/17 \\
\bottomrule \end{tabular}} \end{center} 
\caption{\footnotesize{Name and classifications of the $37$ US financial institutions belonging to the Standard \& Poor's 500 Composite Index (S\&P500). Most of the institutions have been excluded because of the limited length of their return series. Wachovia, Countrywide financial, National City, Merrill Lynch \& Co., Mellon Financial, Marshall Isley, Chubb and Safeco Corp. {\it (denoted in bold)} have been acquired by Wells Fargo, Bank of America, PNC Financial Services, Bank of America, Bank of New York, BMO Financial Group, ACE Limited and Liberty Mutual Group, respectively, while Lehman Bros. Holding and Bear Stearns {\it (denoted in bold red)} defaulted before the end of the sample period. For those institutions, the date of death is reported in the third and last columns.}}
\label{tab:USBanks_list}
\end{table}
%
%
\begin{table}[!ht]
\captionsetup{font={footnotesize}}
\begin{center}
\resizebox{0.9\columnwidth}{!}{%
  \begin{tabular}{lcccccccc}
\toprule
   Name & Min & Max & Mean & Std. Dev. & Skewness & Kurtosis & 1\% Str. Lev. & JB\\
    \hline
ZIONS  & -36.879  & 64.239  & 0.125  & 5.843  & 0.790  & 21.045  & -17.798  & 16624.661  \\
WELLS FARGO  & -36.780  & 48.184  & 0.174  & 4.757  & 0.364  & 23.106  & -12.426  & 20508.003  \\
{\bf WACHOVIA}  & -62.861  & 30.805  & -0.115  & 4.510  & -3.158  & 50.817  & -14.763  & 117870.710  \\
US BANCORP & -48.394  & 43.084  & 0.206  & 4.617  & -0.354  & 23.551  & -14.439  & 21424.088  \\
TRAVELERS  & -24.069  & 20.422  & 0.152  & 3.759  & 0.089  & 7.791  & -9.846  & 1164.465  \\
SYNOVUS  & -46.536  & 40.870  & 0.080  & 5.714  & -0.418  & 12.997  & -19.637  & 5098.829  \\
SUNTRUST  & -40.583  & 49.270  & 0.073  & 5.448  & -0.047  & 21.324  & -17.007  & 17012.268  \\
{\bf SAFECO}  & -19.371  & 35.193  & 0.076  & 3.052  & 1.330  & 22.108  & -8.608  & 18857.084  \\
REGIONS  & -39.151  & 52.695  & 0.004  & 5.779  & 0.707  & 21.326  & -18.283  & 17117.767  \\
PNC  & -39.004  & 41.924  & 0.125  & 4.668  & -0.165  & 19.643  & -11.665  & 14040.468  \\
NORTHERN TRUST & -22.142  & 21.475  & 0.179  & 4.037  & -0.052  & 6.484  & -11.602  & 615.502  \\
{\bf NATIONAL CITY}  & -56.247  & 40.547  & -0.164  & 4.295  & -2.929  & 52.434  & -11.837  & 125552.976  \\
MORGAN STANLEY & -90.465  & 68.693  & 0.145  & 6.678  & -1.041  & 41.996  & -17.335  & 77269.027  \\
{\bf MERRILL LYNCH}  & -52.670  & 54.824  & 0.020  & 5.291  & -0.479  & 34.155  & -13.309  & 49223.945  \\
{\bf MELLON}  & -16.055  & 15.901  & 0.128  & 2.894  & -0.169  & 8.688  & -8.287  & 1644.785  \\
{\bf MARSHALL \& ILSLEY} & -40.020  & 53.348  & -0.003  & 5.321  & 0.602  & 27.196  & -19.615  & 29735.332  \\
MARSH \& MCLENNAN & -45.404  & 23.564  & 0.136  & 3.727  & -1.350  & 24.197  & -9.816  & 23134.233  \\
LOEWS & -31.657  & 20.380  & 0.152  & 3.606  & -0.591  & 12.500  & -9.407  & 4643.964  \\
LINCOLN  & -81.372  & 76.955  & 0.101  & 6.743  & -0.312  & 45.347  & -14.865  & 90878.439  \\
{\bf\color{red} LEHMAN } & -280.885  & 109.861  & -0.414  & 12.931  & -8.953  & 205.640  & -33.647  & 2096767.740  \\
LEGG MASON & -57.563  & 34.090  & 0.169  & 5.575  & -0.832  & 16.525  & -15.588  & 9408.928  \\
KEYCORP & -61.427  & 40.280  & 0.013  & 5.500  & -1.136  & 27.486  & -15.735  & 30639.284  \\
JP MORGAN  & -41.684  & 39.938  & 0.171  & 5.089  & -0.159  & 14.323  & -12.307  & 6500.922  \\
FRANKLIN  & -27.834  & 24.178  & 0.193  & 4.789  & -0.167  & 6.804  & -12.702  & 738.884  \\
FANNIE MAE & -225.271  & 135.239  & -0.163  & 12.483  & -3.408  & 109.981  & -28.016  & 582232.354  \\
{\bf COUNTRYWIDE}  & -28.531  & 25.033  & 0.010  & 4.428  & -0.581  & 11.051  & -14.695  & 3352.449  \\
COMERICA & -31.744  & 32.776  & 0.109  & 4.841  & -0.138  & 10.328  & -13.821  & 2724.400  \\
CITIZENS  & -31.511  & 36.900  & 0.031  & 6.066  & 0.197  & 7.083  & -16.128  & 852.359  \\
CITIGROUP & -92.632  & 78.798  & 0.020  & 7.011  & -1.483  & 54.834  & -15.161  & 136574.867  \\
CINCINNATI  & -27.415  & 17.646  & 0.138  & 3.331  & -0.497  & 12.228  & -8.626  & 4364.415  \\
CIGNA & -47.279  & 31.701  & 0.269  & 5.002  & -1.553  & 21.293  & -13.120  & 17442.571  \\
{\bf CHUBB}  & -20.601  & 26.328  & 0.154  & 3.441  & 0.646  & 12.318  & -9.338  & 4483.892  \\
{\bf\color{red} BEAR STEARNS}  & -161.613  & 59.262  & -0.041  & 6.624  & -13.038  & 319.850  & -10.691  & 5121070.902  \\
BANK OF NEW YORK MELLON & -24.609  & 26.004  & 0.162  & 4.404  & -0.047  & 6.931  & -11.726  & 783.219  \\
BANK OF AMERICA & -59.288  & 60.671  & 0.049  & 5.953  & -0.286  & 31.198  & -15.104  & 40302.134  \\
AMERICAN INTL & -114.843  & 92.426  & -0.110  & 7.814  & -1.364  & 71.030  & -20.375  & 234866.237  \\
AFLAC & -48.560  & 31.735  & 0.226  & 4.523  & -0.831  & 19.937  & -11.469  & 14673.793  \\
\bottomrule
\end{tabular}}
\end{center}
\caption{\footnotesize{Summary statistics US financial institutions in the panel, for the period form May 6, 1994 till August 25, 2017. The eight column, denoted by ``1\% Str. Lev.'' is the 1\% empirical quantile of the returns distribution, while the last column, denoted by ``JB'' is the value of the Jarque--Ber\'a test--statistics. Institutions that experienced distress instances are denoted in bold, see Table \ref{tab:USBanks_list}.}}
\label{tab:USBanks_summary_stat}
\end{table}
%
\noindent This section illustrates the practical utility of using the Sparse--MMSQ for building a 
network based on the NetCoVaR dominance test discussed in the previous section. Specifically, we apply the methodology to analyse the US financial system during the period of the Global Financial Crisis. The data considered are weekly log--returns of $37$ financial institutions belonging to the Standard \& Poor's 500 Composite Index (S\&P500), from May 15, 1997 through February 15, 2017. Table \ref{tab:USBanks_list} provides the list of the institutions included in the sample, the tickers, the date of the first available observation and the date of the last observation. 
Eight of the institutions included in the list, e.g., Wachovia, Countrywide financial, National City, Merrill Lynch \& Co., Mellon Financial, Marshall Isley, Chubb and Safeco Corp. experienced distress instances during the period. They have been acquired by Wells Fargo, Bank of America, PNC Financial Services, Bank of America, Bank of New York, BMO Financial Group, ACE Limited and Liberty Mutual Group, respectively. Moreover, we also included Bear Stearns and Lehman Bros. Holding that defaulted on May 2008 and March 2012, respectively. Their inclusion is motivated by the desire to increase understanding of how the proposed NetCoVaR risk measurement framework behaves in practice when two different types of extreme events affect the institutions: acquisition and default.
Table \ref{tab:USBanks_summary_stat} reports the descriptive statistics of the weekly returns of the financials institutions over the whole sampling period. Interestingly, the $1\%$ stress levels in the penultimate column of Table \ref{tab:USBanks_summary_stat} point out that individual risk measures, like the VaR, sometimes fail to detect the systemic relevance of the institutions. For example, Bear Stearns reported a $1\%$ stress level of about $-11$ in line with most of the other institutions that did not experienced bankruptcy.\newline
\indent Here, we would examine whether stock market co--movements have changed over time, with a focus on the period of the recent global financial crisis. Specifically, we estimate the NetCoVaR over two different periods before and after the global financial crisis of 2008, under the assumption of Stable Elliptical returns using the Sparse--MMSQ. The first period begins the May 6, 1997 till the end of December 2007 while the second is from January 4, 2008 to August 25, 2017. Estimation results are not reported to save space but are available upon request to the first author. Then we employed the NetCoVaR systemic dominance test detailed in the previous Section to construct a graph in which the vertexes represent companies and an edges between two vertexes stands for interconnection between the two institution as measured by a p--value greater than $0.05$. The number of edges can be interpreted as the degree of connectedness among stock return series, therefore an increase of this number implies that the market tends to be more integrated and consequently tends to have a higher systemic risk. Figure \ref{fig:netcovar_newtork} reports the result of our analysis. Specifically, Figure \ref{fig:netcovar_newtork_ESD_prior_gfc} reports the network estimated at the end of December 2007 assuming returns are jointly modelled using the Elliptical Stable distribution. For comparison, Figure \ref{fig:netcovar_newtork_GAUSS_prior_gfc} reports the NetCoVaR graphs estimated under the assumption of Normally distributed returns and the parameters are estimated using the GLASSO algorithm of \cite{friedman_etal.2008} where the penalty parameter has been chosen by cross validation.
%
\begin{sidewaysfigure}[p]
\captionsetup{font={footnotesize}}
\begin{center}

\subfloat[ESD, prior to GFC]{\label{fig:netcovar_newtork_ESD_prior_gfc}\includegraphics[width=0.5\textwidth]{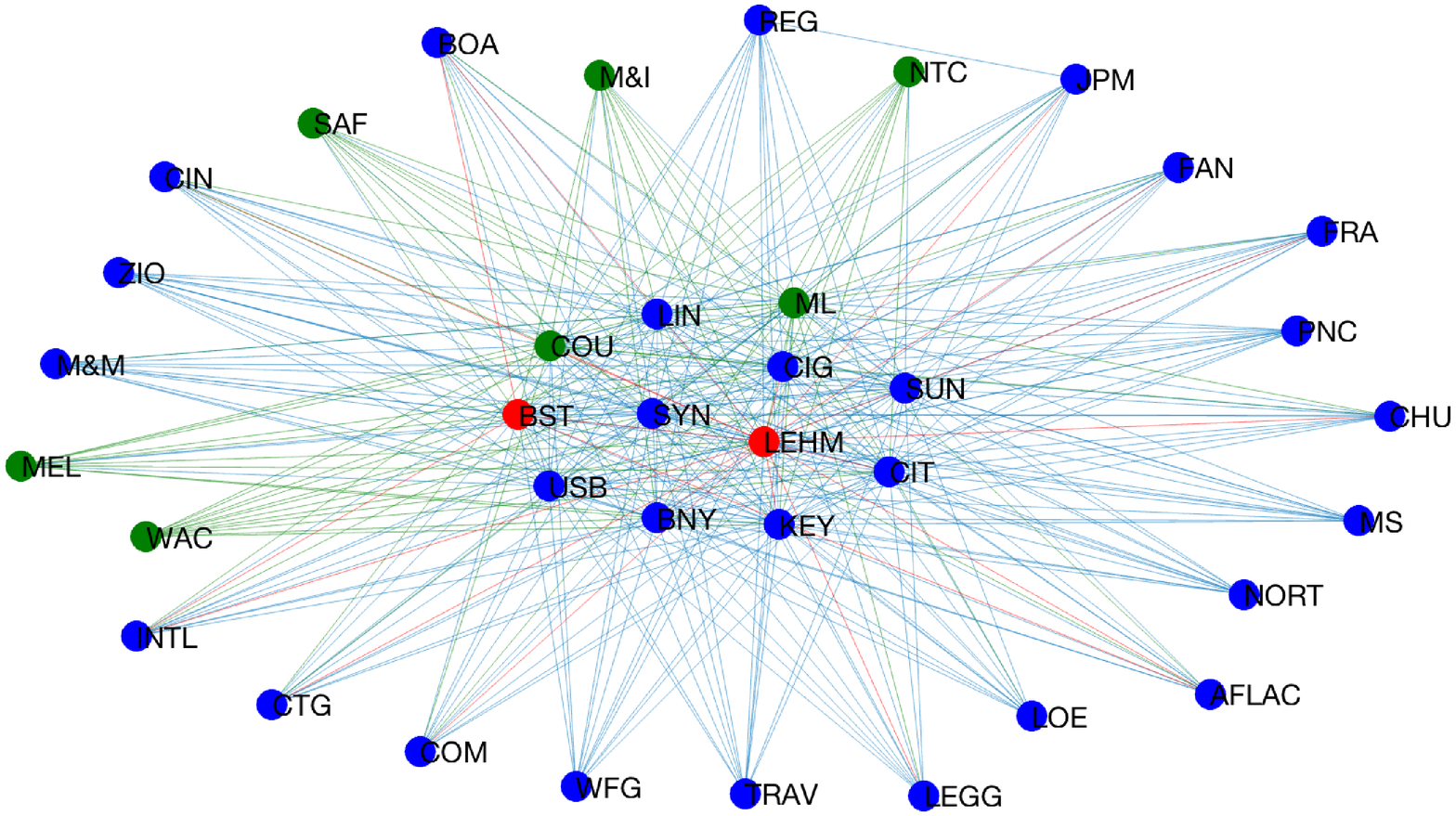}}
\subfloat[GAUSS, prior to GFC]{\label{fig:netcovar_newtork_GAUSS_prior_gfc}\includegraphics[width=0.5\textwidth]{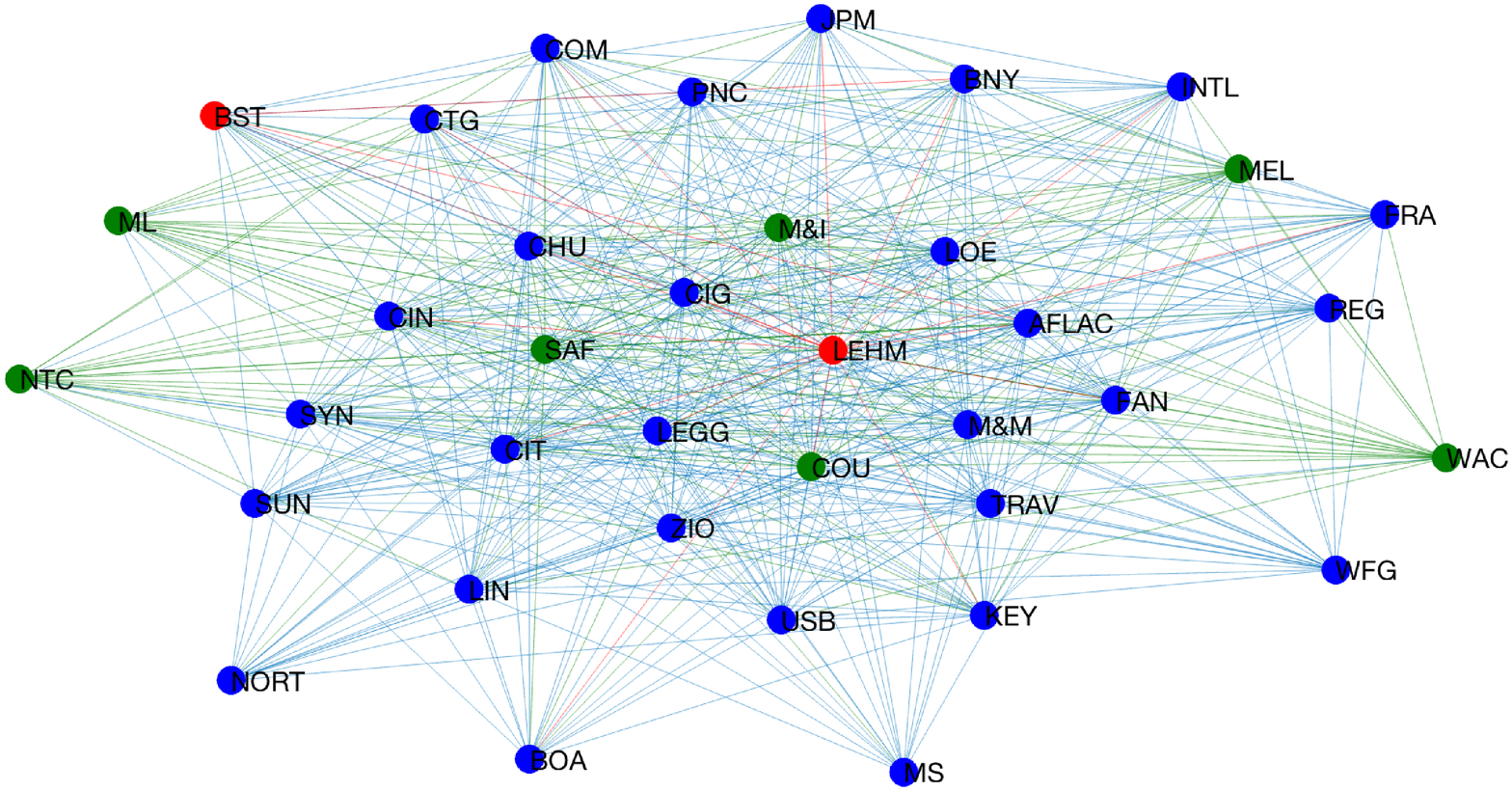}}\\
\subfloat[ESD, after the GFC]{\label{netcovar_newtork_ESD_after_gfc}\includegraphics[width=0.5\textwidth]{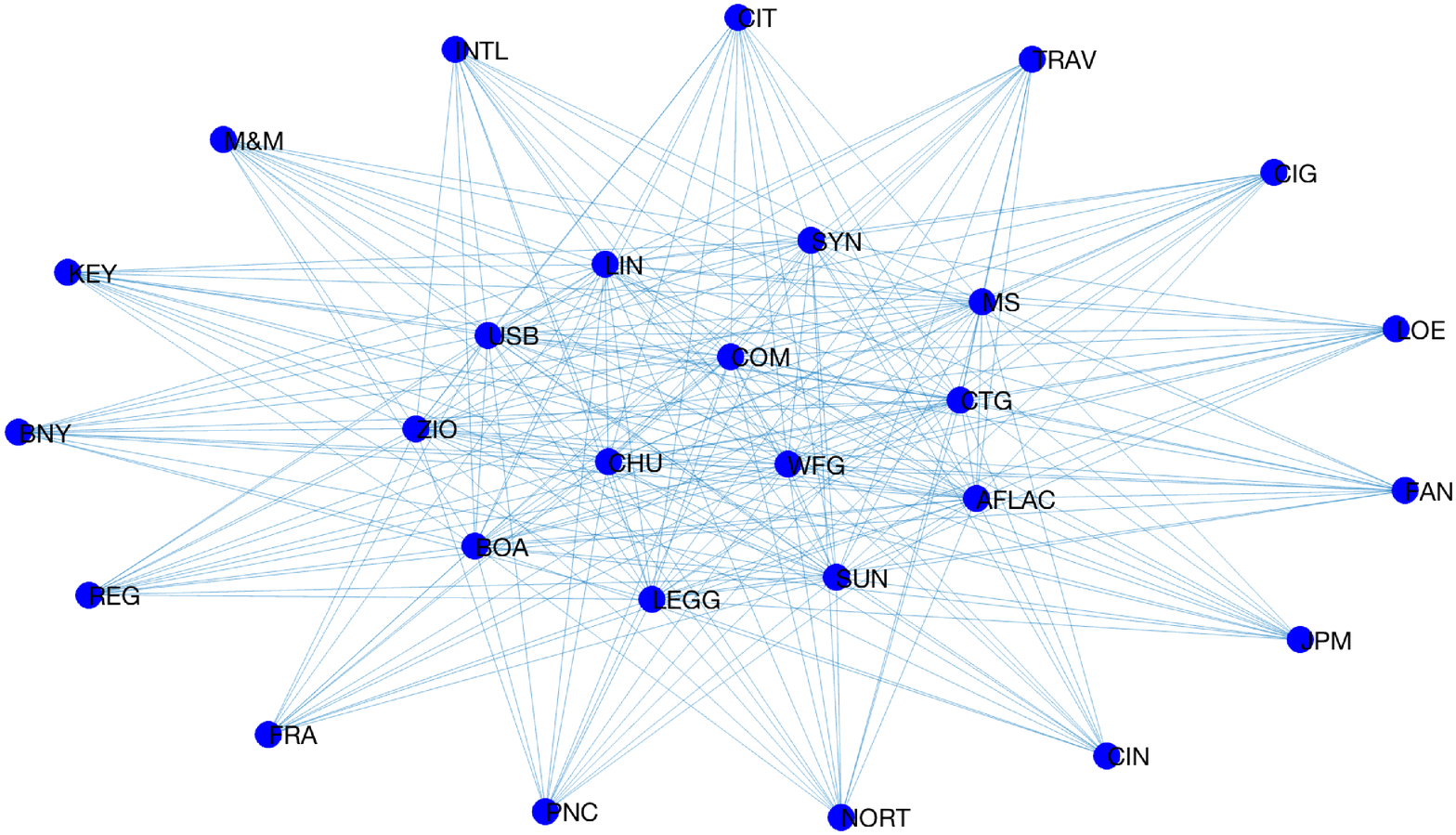}}
\subfloat[GAUSS, after the GFC]{\label{netcovar_newtork_GAUSS_after_gfc}\includegraphics[width=0.5\textwidth]{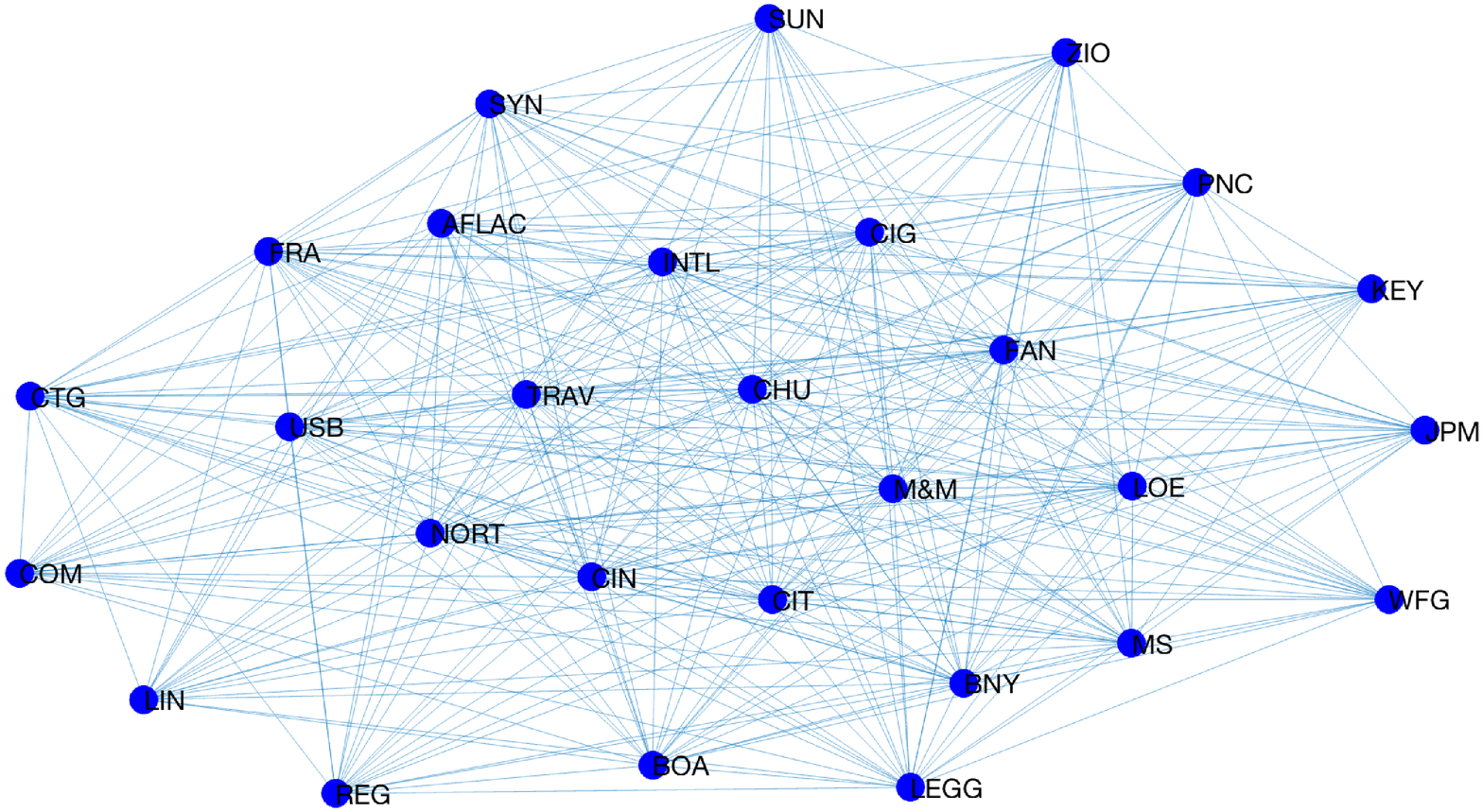}}
\caption{\footnotesize{NetCoVaR systemic network constructed using the systemic dominance test for two different periods before and after the global financial crisis of 2008 and two assumptions for the weekly log--returns (Gaussian and ESD). Red bullets denote the two institutions that defaulted (Lehman Bros. and Bear Stearns), green bullets denote the institutions that have been acquired, while blue bullets denotes institutions that did not experienced distress instances during the considered period.}}
\label{fig:netcovar_newtork}
\end{center}
\end{sidewaysfigure}
%
%
\begin{figure}[p]
\captionsetup{font={footnotesize}}
\begin{center}

\subfloat[ESD, prior to GFC]{\label{fig:netcovar_newtork_ESD_prior_gfc_RED}\includegraphics[width=1.0\textwidth]{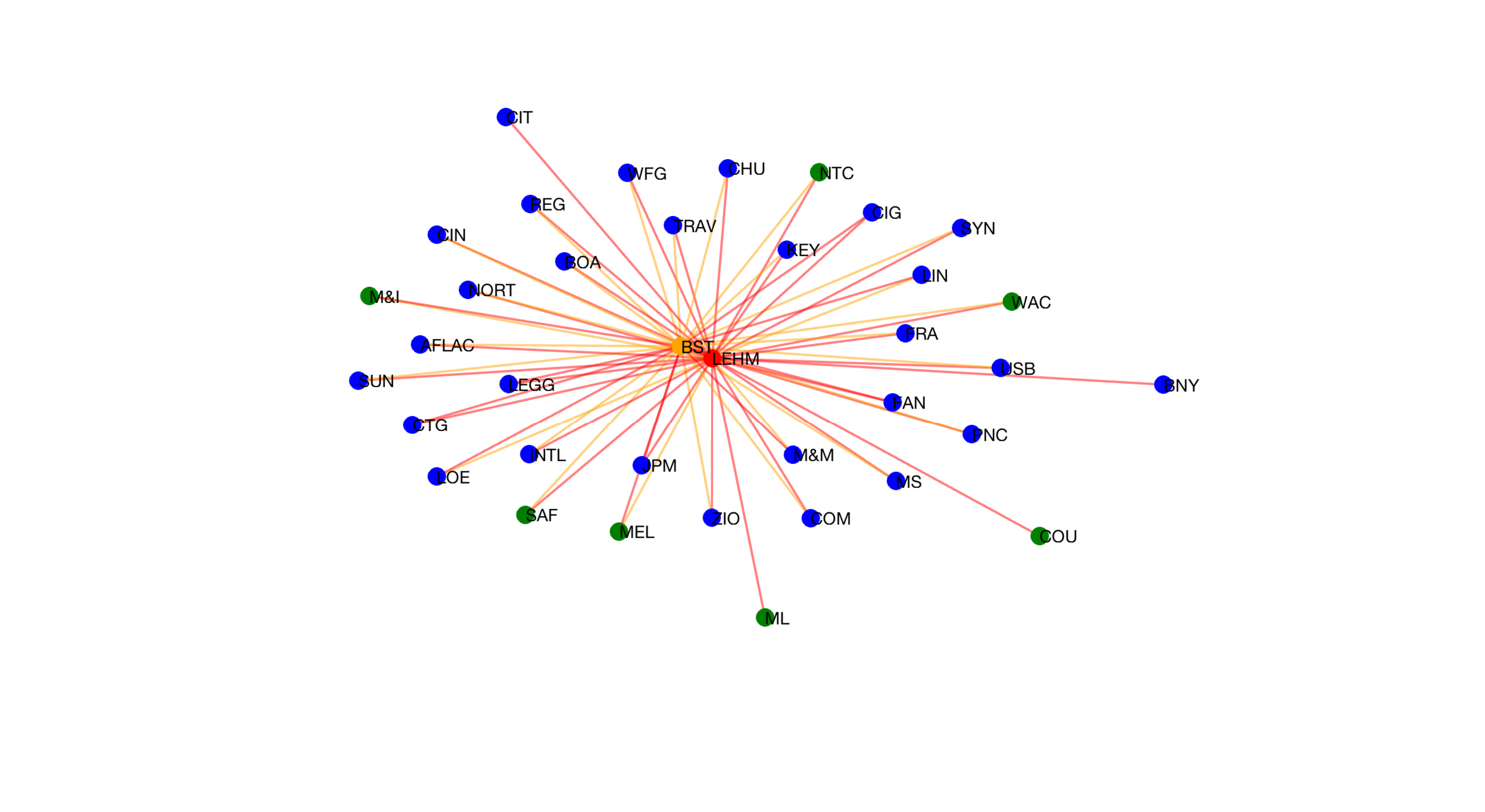}}\\
\subfloat[GAUSS, prior to GFC]{\label{fig:netcovar_newtork_GAUSS_prior_gfc_RED}\includegraphics[width=1.0\textwidth]{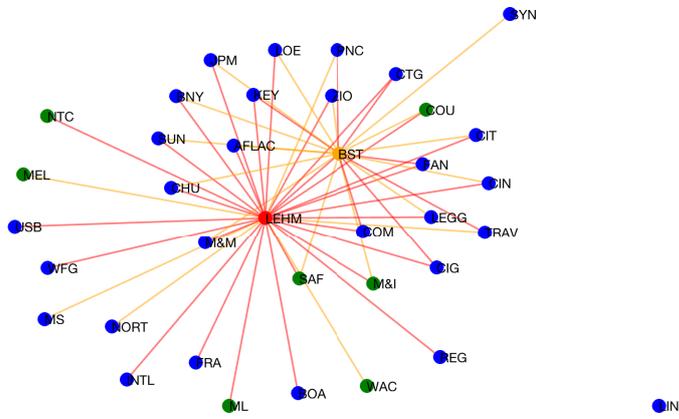}}
\caption{\footnotesize{NetCoVaR subgraphs involving Lehman Bros. Hdg. and Bear Stearns for the period before the global financial crisis of 2008 under the Gaussian and ESD assumptions for the weekly log--returns. Red bullets denote the two institutions that defaulted (Lehman Bros. and Bear Stearns), green bullets denote the institutions that have been acquired, while blue bullets denotes institutions that did not experienced distress instances during the considered period.}}
\label{fig:netcovar_newtork_RED}
\end{center}
\end{figure}
%
%
\begin{table}[!t]
\captionsetup{font={footnotesize}}
\begin{center}
\resizebox{0.6\columnwidth}{!}{%
  \begin{tabular}{lcccc}
\toprule
 & \multicolumn{2}{c}{Before the GFC}&\multicolumn{2}{c}{After the GFC}\\
 \cmidrule(lr){1-1}\cmidrule(lr){2-3}\cmidrule(lr){4-5}
          Institution & ESD & Gauss & ESD & Gauss \\ 
          \cmidrule(lr){1-1}\cmidrule(lr){2-3}\cmidrule(lr){4-5}
ZIONS 						&12  	&	33	&	23	&18\\
WELLS FARGO 				&12  &	22	&	25	&19\\
{\bf WACHOVIA}				&{\bf 12}  &	{\bf 22}	&	--	&--\\
US BANCORP					&31  &	25	&	25	&24\\
TRAVELERS 					&12  &	33	&	13	&26\\
SYNOVUS 					&31  &	31	&	23	&22\\
SUNTRUST 					&31  &	27	&	25	&18\\
{\bf SAFECO }					&{\bf 12}  &	{\bf 33}	&	--	&--\\
REGIONS 					&13  &	25	&	13	&19\\
PNC 						&12  &	25	&	13	&22\\
NORTHERN TRUST				&12  &	19	&	13	&27\\
{\bf NATIONAL CITY }			&{\bf 12}  &	{\bf 22}	&	--	&--\\
MORGAN STANLEY				&12  &	17	&	23	&23\\
{\bf MERRILL LYNCH }			&{\bf 32}  &	{\bf 23}	&	--	&--\\
{\bf MELLON 	}				&{\bf 12}  &	{\bf 24}	&	--	&--\\
{\bf MARSHALL \& ILSLEY}		&{\bf 12}  &	{\bf 35}	&	--	&--\\
MARSH \& MCLENNAN			&12  &	34	&	13	&26\\
LOEWS						&12  &	33	&	13	&25\\
LINCOLN 						&31  &	27	&	25	&19\\
{\bf\color{red}{LEHMAN} 	}		&{\bf\color{red}{36}}  &	{\bf\color{red}{34}}	&	--	&--\\
LEGG MASON					&12  &	33	&	23	&23\\
KEYCORP					&31  &	27	&	13	&19\\
JP MORGAN 					&13  &	24	&	13	&19\\
FRANKLIN 					&12  &	24	&	13	&24\\
FANNIE MAE					&12  &	33	&	13	&27\\
{\bf COUNTRYWIDE }			&{\bf 32}  &	{\bf 33}	&	--	&--\\
COMERICA					&12  &	25	&	27	&19\\
CITIZENS 					&32  &	36	&	13	&27\\
CITIGROUP					&12  &	25	&	25	&21\\
CINCINNATI 					&12  &	33	&	13	&26\\
CIGNA						&31  &	36	&	13	&26\\
{\bf CHUBB }					&{\bf 12}  &	{\bf 34}	&	25	&27\\
{\bf\color{red}{BEAR STEARNS}} 	&{\bf\color{red}{32}}  &	{\bf\color{red}{23}}	&	--	&--\\
BANK OF NEW YORK MELLON	&32  &	24	&	13	&23\\
BANK OF AMERICA				&12  &	19	&	23	&24\\
AMERICAN INTL				&12  &	24	&	13	&27\\
AFLAC						&12  &	33	&	27	&22\\
\cmidrule(lr){1-1}\cmidrule(lr){2-3}\cmidrule(lr){4-5}
{\bf Total number of edges}						&  {\bf 342}	&	{\bf515}	&	{\bf 257}	&{\bf321}	\\
{\bf \% of edges}						&  {\bf 51.14}	&	{\bf 77.33}	&	{\bf 67.99}	&{\bf84.92}	\\
\bottomrule
\end{tabular}}
\end{center}
\caption{\footnotesize{Number of edges of the NetCoVaR graphs.}}
\label{tab:graph_summary_stat}
\end{table}
%
Visual inspection of Figures \ref{fig:netcovar_newtork_ESD_prior_gfc}--\ref{fig:netcovar_newtork_GAUSS_prior_gfc} reveals a great difference between the two graphs. First, the NetCoVaR under the ESD places Lehman Bros. and Bear Stearns at  the centre of the network meaning that their are highly interconnected with all the remaining institutions. This is not the case for the NetCoVaR under the Gaussian assumption that places Bear Stearns at a corner. Furthermore, the number of edges is different on the two graphs indicating that the ESD assumption induces a very high level of sparsity as compared with the Gaussian counterpart. As regards the number of edges Table \ref{tab:graph_summary_stat} confirms previous results and reports and increase of the total number of edges after the GFC took place for both graphs. Indeed, the total number of edges can be interpreted as a proxy for the degree of interconnectedness of the financial market. Therefore, an increase on the number of edges implies that financial markets tends to become more interconnected during periods of financial crisis as a consequence of an increases of the correlation. This phenomenon is well documented in the systemic risk literature, see, e.g., \cite{billio_etal.2012}. More surprisingly, Table \ref{tab:graph_summary_stat} reveals another relevant difference between the two graphs concerning the institutions that have the large number of interconnections. Specifically, for the Gaussian NetCoVaR the institution with the highest number of interconnections are those that experienced distress instances during the GFC, while this is not the case for the ESD NetCoVaR. Among the institutions that have been acquired during the GFC only Merrill Lynch, Lehman Bros. Countrywide and Bear Stearns display a number of edges greater than $32$ under the ESD assumption.\newline
\indent To further illustrate the difference between the NetCoVaRs under the Gaussian and ESD assumptions, Figure \ref{fig:netcovar_newtork_RED} plots the subgraphs involving the two institutions that defaulted during the GFC: Lehman Bros. and Bear Stearns. Again, the difference between the ESD NetCoVaR and the Gaussian NetCoVaR is evident. Indeed, Bearn Stearns and Lehman Bros. are both directly connected with all the remaining institutions for the ESD NetCoVaR, while the Gaussian NetCoVaR display two clusters of institutions connected either with Lehman Bros. or with Bear Stearns.
%
\section{Conclusion}
\label{sec:conclusion}
%
\noindent In this paper we present the multivariate extension of the method of simulated quantiles proposed in \cite{dominicy_veredas.2013}. The method is useful when either the density function does not have an analytical expression or/and moments do not exits, provided that it can be easily simulated. Projectional quantiles along optimal directions are then introduced in order to carry the information over the parameters of interest in an efficient way. We establish the consistency and the asymptotic distribution of the proposed MMSQ estimator. We also introduce a sparse version of the MMSQ using the SCAD $\ell_1$--penalty of \cite{fan_li.2001} into the MMSQ objective function in order to achieve sparse estimation of the scaling matrix. We extend the asymptotic theory and we show that the sparse--MMSQ estimator enjoys the oracle properties under mild regularity conditions. The method is illustrated using several synthetic datasets from the Elliptical Stable distribution previously considered by \cite{lombardi_veredas.2009} for which alternative methods are recognised to perform poorly.  The methodology has been effectively applied in the context of systemic risk measurement. Our results confirm that the assumption of Elliptically Stable distributed returns as well as the introduced systemic risk measurement framework can effectively represent an improvement with respect to existing methodologies. 
%
%
\section*{Acknowledgements}
We would like to express our sincere thanks to all the participants to the International Workshop on Statistical Modelling 2017 for their constructive comments. The authors would like to thank Prof. Francesco Ravazzolo, Free University of Bozen, for his suggestions that greatly enhanced the quality and clarity of the manuscript.
%
\appendix
\section{Proofs of the main results}
\label{app:appendix_proofs}
%
\begin{proof}{Theorem \ref{th:proj_quant_asy}.}
\begin{enumerate}
\item[{\it (i)}] The proof of this result can be found in \cite{cramer.1946}.
\item[{\it (ii)}] Without loss of generality we can consider $\tau_{1}, \tau_{2}$ and $Z_{1}, Z_{2}$. Under the hypothesis of the theorem, the sample quantiles $\hat{q}_{\tau_{1},Z_{1}}$ and $\hat{q}_{\tau_{2},Z_{2}}$ admit the Bahadur representation
\begin{align}
\hat{q}_{\tau_{j},Z_{j}}-q_{\tau_{j},Z_{j}}=\frac{1}{n}\sum_{i=1}^n\frac{\tau_{j}-\bbone_{\left[z_{i,j}\leq q_{\tau_{j}}\right]}}{f_{Z_{j}}\left(q_{\tau_{j}}\right)}+R_{n,j},\nonumber
\end{align}
for $j=1,2$, where $R_{n,j}=o\left(\frac{1}{\sqrt{n}}\right)$. Let us start from the variance of $\hat{q}_{\tau_{1},Z_{1}}-q_{\tau_{1},Z_{1}}$. 
\begin{align}
\Var\left(\hat{q}_{\tau_{1},Z_{1}}-q_{\tau_{1},Z_{1}}\right)&=\Var\left(\frac{1}{n}\sum_{i=1}^n\frac{\tau_{1}-\bbone_{\left[z_{i,1}\leq q_{\tau_{1}}\right]}}{f_{Z_{1}}\left(q_{\tau_{1}}\right)}+R_{n,1}\right)\nonumber\\
& =\mathbb{E}\left[\left(\frac{1}{n}\sum_{i=1}^n\frac{\tau_{1}-\bbone_{\left[z_{i,1}\leq q_{\tau_{1}}\right]}}{f_{Z_{1}}\left(q_{\tau_{1}}\right)}+R_{n,1}\right)^2\right]\nonumber\\
& =\mathbb{E}\left[\left(\frac{1}{n}\sum_{i=1}^n\frac{\tau_{1}-\bbone_{\left[z_{i,1}\leq q_{\tau_{1}}\right]}}{f_{Z_{1}}\left(q_{\tau_{1}}\right)}\right)^2\right.\nonumber\\
&\qquad\left.+2R_{n,1}\frac{1}{n}\sum_{i=1}^n\frac{\tau_{1}-\bbone_{\left[z_{i,1}\leq q_{\tau_{1}}\right]}}{f_{Z_{1}}\left(q_{\tau_{1}}\right)}+R_{n,1}^2\right]\nonumber\\
& =\mathbb{E}\left[\left(\frac{1}{n}\sum_{i=1}^n\frac{\tau_{1}-\bbone_{\left[z_{i,1}\leq q_{\tau_{1}}\right]}}{f_{Z_{1}}\left(q_{\tau_{1}}\right)}\right)^2\right]\nonumber\\
&\qquad+2R_{n,1}\mathbb{E}\left[\frac{1}{n}\sum_{i=1}^n\frac{\tau_{1}-\bbone_{\left[z_{i,1}\leq q_{\tau_{1}}\right]}}{f_{Z_{1}}\left(q_{\tau_{1}}\right)}\right]+R_{n,1}^2\nonumber\\
&=\frac{1}{n^2f_{Z_{1}}\left(q_{\tau_{1}}\right)^2}\mathbb{E}\left[\left(\sum_{i=1}^n\tau_{1}-\bbone_{\left[z_{i,1}\leq q_{\tau_{1}}\right]}\right)^2\right]\nonumber\\
& \qquad+\frac{2R_{n,1}}{nf_{Z_{1}}\left(q_{\tau_{1}}\right)}\mathbb{E}\left[\sum_{i=1}^n\tau_{1}-\bbone_{\left[z_{i,1}\leq q_{\tau_{1}}\right]}\right]+R_{n,1}^2\nonumber\\
&=\frac{1}{n^2f_{Z_{1}}\left(q_{\tau_{1}}\right)^2}\Var\left(\sum_{i=1}^n\tau_{1}-\bbone_{\left[z_{i,1}\leq q_{\tau_{1}}\right]}\right)+R_{n,1}^2\nonumber\\
&=\frac{\tau_{1}\left(1-\tau_{1}\right)}{f_{Z_{1}}\left(q_{\tau_{1}}\right)^2}+R_{n,1}^2,\nonumber
\end{align}
where $R_{n,1}^2=o\left(\frac{1}{n}\right)$. The same holds for the variance of $\hat{q}_{\tau_{2},Z_{2}}-q_{\tau_{2},Z_{2}}$. Let us consider the covariance.
\begin{align}
&\Cov\left(\hat{q}_{\tau_{1},Z_{1}}-q_{\tau_{1},Z_{1}}, \hat{q}_{\tau_{2},Z_{2}}-q_{\tau_{2},Z_{2}}\right)\nonumber\\
&=\Cov\left(\frac{1}{n}\sum_{i=1}^n\frac{\tau_{1}-\bbone_{\left[z_{i,1}\leq q_{\tau_{1}}\right]}}{f\left(q_{\tau_{1}}\right)}+R_{n,1}, \frac{1}{n}\sum_{i=1}^n\frac{\tau_{2}-\bbone_{\left[z_{i,2}\leq q_{\tau_{2}}\right]}}{f\left(q_{\tau_{2}}\right)}R_{n,2}\right)\nonumber\\
& =\mathbb{E}\left[\left(\frac{1}{n}\sum_{i=1}^n\frac{\tau_{1}-\bbone_{\left[z_{i,1}\leq q_{\tau_{1}}\right]}}{f\left(q_{\tau_{1}}\right)}+R_{n,1}\right)\left(\frac{1}{n}\sum_{i=1}^n\frac{\tau_{2}-\bbone_{\left[z_{i,2}\leq q_{\tau_{2}}\right]}}{f\left(q_{\tau_{2}}\right)}+R_{n,2}\right)\right]\nonumber\\
&=\frac{1}{n^2}\mathbb{E}\left[\sum_{i=1}^n\frac{\tau_{1}-\bbone_{\left[z_{i,1}\leq q_{\tau_{1}}\right]}}{f\left(q_{\tau_{1}}\right)}\sum_{i=1}^n\frac{\tau_{2}-\bbone_{\left[z_{i,2}\leq q_{\tau_{2}}\right]}}{f\left(q_{\tau_{2}}\right)}\right]\nonumber\\
&\quad+\mathbb{E}\left[R_{n,1}\frac{1}{n}\sum_{i=1}^n\frac{\tau_{2}-\bbone_{\left[z_{i,2}\leq q_{\tau_{2}}\right]}}{f\left(q_{\tau_{2}}\right)}\right]+\mathbb{E}\left[R_{n,2}\frac{1}{n}\sum_{i=1}^n\frac{\tau_{1}-\bbone_{\left[z_{i,1}\leq q_{\tau_{1}}\right]}}{f\left(q_{\tau_{1}}\right)}\right]+R_{n,1}R_{n,2}\nonumber\\
&=\frac{1}{n^2}\mathbb{E}\left[\left(\frac{n\tau_{1}}{f\left(q_{\tau_{1}}\right)}-\sum_{i=1}^n\frac{\bbone_{\left[z_{i,1}\leq q_{\tau_{1}}\right]}}{f\left(q_{\tau_{1}}\right)}\right)\left(\frac{n\tau_{2}}{f\left(q_{\tau_{2}}\right)}-\sum_{i=1}^n\frac{\bbone_{\left[z_{i,2}\leq q_{\tau_{2}}\right]}}{f\left(q_{\tau_{2}}\right)}\right)\right]\nonumber\\
&\quad+\frac{R_{n, 1}}{nf\left(q_{\tau_{2}}\right)}\mathbb{E}\left[\sum_{i=1}^n\tau_{2}-\bbone_{\left[z_{i,2}\leq q_{\tau_{2}}\right]}\right]+\frac{R_{n,2}}{nf\left(q_{\tau_{1}}\right)}\mathbb{E}\left[\sum_{i=1}^n\tau_{1}-\bbone_{\left[z_{i,1}\leq q_{\tau_{1}}\right]}\right]+R_{n,1}R_{n,2}\nonumber\\
&= \frac{1}{n^2}\mathbb{E}\left[\frac{n\tau_{1}}{f\left(q_{\tau_{1}}\right)}\frac{n\tau_{2}}{f\left(q_{\tau_{2}}\right)}\right]-\frac{1}{n^2}\mathbb{E}\left[\frac{n\tau_{1}}{f\left(q_{\tau_{1}}\right)}\sum_{i=1}^n\frac{\bbone_{\left[z_{i,2}\leq q_{\tau_{2}}\right]}}{f\left(q_{\tau_{2}}\right)}\right]\nonumber\\
&\quad-\frac{1}{n^2}\mathbb{E}\left[\frac{n\tau_{2}}{f\left(q_{\tau_{2}}\right)}\sum_{i=1}^n\frac{\bbone_{\left[z_{i,1}\leq q_{\tau_{1}}\right]}}{f\left(q_{\tau_{1}}\right)}\right]+\frac{1}{n^2}\mathbb{E}\left[\sum_{i=1}^n\frac{\bbone_{\left[z_{i,1}\leq q_{\tau_{1}}\right]}}{f\left(q_{\tau_{1}}\right)}\sum_{i=1}^n\frac{\bbone_{\left[z_{i,2}\leq q_{\tau_{2}}\right]}}{f\left(q_{\tau_{2}}\right)}\right]+R_{n,1}R_{n,2}\nonumber\\
&=\frac{\tau_{1}\tau_{2}}{f\left(q_{\tau_{1}}\right)f\left(q_{\tau_{2}}\right)}- \frac{\tau_{1}}{nf\left(q_{\tau_{1}}\right)f\left(q_{\tau_{2}}\right)}\mathbb{E}\left[\sum_{i=1}^n\bbone_{\left[z_{i,2}\leq q_{\tau_{2}}\right]}\right]\nonumber\\
&\quad-\frac{\tau_{2}}{nf\left(q_{\tau_{1}}\right)f\left(q_{\tau_{2}}\right)}\mathbb{E}\left[\sum_{i=1}^n\bbone_{\left[z_{i,1}\leq q_{\tau_{1}}\right]}\right]\nonumber\\
&\quad+\frac{1}{n^2f\left(q_{\tau_{1}}\right)f\left(q_{\tau_{2}}\right)}\mathbb{E}\left[\sum_{i=1}^n\bbone_{\left[z_{i,1}\leq q_{\tau_{1}}\right]}\sum_{i=1}^n\bbone_{\left[z_{i,2}\leq q_{\tau_{2}}\right]}\right]+R_{n,1}R_{n,2}\nonumber\\
& =\frac{\tau_{1}\tau_{2}}{f\left(q_{\tau_{1}}\right)f\left(q_{\tau_{2}}\right)}- \frac{\tau_{1}}{f\left(q_{\tau_{1}}\right)f\left(q_{\tau_{2}}\right)}\mathbb{E}\left[\bbone_{\left[z_{2}\leq q_{\tau_{2}}\right]}\right]\nonumber\\
&\quad-\frac{\tau_{2}}{f\left(q_{\tau_{1}}\right)f\left(q_{\tau_{2}}\right)}\mathbb{E}\left[\bbone_{\left[z_{1}\leq q_{\tau_{1}}\right]}\right]\nonumber\\
&\quad+\frac{1}{f\left(q_{\tau_{1}}\right)f\left(q_{\tau_{2}}\right)}\mathbb{E}\left[\bbone_{\left[z_{1}\leq q_{\tau_{1}}\right]}\bbone_{\left[z_{2}\leq q_{\tau_{2}}\right]}\right]+R_{n,1}R_{n,2}\nonumber\\
&=\frac{\tau_{1}\tau_{2}}{f\left(q_{\tau_{1}}\right)f\left(q_{\tau_{2}}\right)}-2\frac{\tau_{1}\tau_{2}}{f\left(q_{\tau_{1}}\right)f\left(q_{\tau_{2}}\right)}+\frac{F_{Z_{1},Z_{2}}\left(\bq_{\tau},\bSigma_{Z_{1},Z_{2}}\right)}{f\left(q_{\tau_{1}}\right)f\left(q_{\tau_{2}}\right)}+R_{n,1}R_{n,2}\nonumber\\
& =-\frac{\tau_{1}\tau_{2}}{f\left(q_{\tau_{1}}\right)f\left(q_{\tau_{2}}\right)}+\frac{F_{Z_{1},Z_{2}}\left(\bq_{\tau},\bSigma_{Z_{1},Z_{2}}\right)}{f\left(q_{\tau_{1}}\right)f\left(q_{\tau_{2}}\right)}+R_{n,1}R_{n,2},\nonumber
\end{align}
where $\bq_{\tau}=\left(q_{\tau_{1}}, q_{\tau_{2}}\right)^\prime$ and $R_{n,1}R_{n,2}=o\left(\frac{1}{n}\right)$.
\item[{\it (iii)}] Using the Bahadur representation
\begin{align}
&\Cov\left(\hat{q}_{\tau_{k},\bu}-q_{\tau_{k},\bu}, \hat{q}_{\tau_{j},\bu}-q_{\tau_{j},\bu}\right)\nonumber\\
&=\Cov\left(\frac{1}{n}\sum_{i=1}^n\frac{\tau_{i}-\bbone_{\left[z_{i}\leq q_{\tau_{k},\bu}\right]}}{f\left(q_{\tau_{k},\bu}\right)}+R_{n,1}, \frac{1}{n}\sum_{i=1}^n\frac{\tau_{j}-\bbone_{\left[z_{i}\leq q_{\tau_{j},\bu}\right]}}{f\left(q_{\tau_{j},\bu}\right)}+R_{n,2}\right)\nonumber\\
&=\mathbb{E}\left[\left(\frac{1}{n}\sum_{i=1}^n\frac{\tau_{k}-\bbone_{\left[z_{i}\leq q_{\tau_{k},\bu}\right]}}{f\left(q_{\tau_{i},\bu}\right)}+R_{n,1}\right)\left(\frac{1}{n}\sum_{i=1}^n\frac{\tau_{j}-\bbone_{\left[z_{i}\leq q_{\tau_{j},\bu}\right]}}{f\left(q_{\tau_{j},\bu}\right)}+R_{n,2}\right)\right]\nonumber\\
&=\frac{1}{n^2}\mathbb{E}\left[\sum_{i=1}^n\frac{\tau_{k}-\bbone_{\left[z_{i}\leq q_{\tau_{k},\bu}\right]}}{f\left(q_{\tau_{k},\bu}\right)}\sum_{i=1}^n\frac{\tau_{j}-\bbone_{\left[z_{i}\leq q_{\tau_{j},\bu}\right]}}{f\left(q_{\tau_{j},\bu}\right)}\right]\nonumber\\
&\quad+R_{n,1}\mathbb{E}\left[\frac{1}{n}\sum_{i=1}^n\frac{\tau_{j}-\bbone_{\left[z_{i}\leq q_{\tau_{j},\bu}\right]}}{f\left(q_{\tau_{j},\bu}\right)}\right]+R_{n,2}\mathbb{E}\left[\sum_{i=1}^n\frac{\tau_{k}-\bbone_{\left[z_{i}\leq q_{\tau_{k},\bu}\right]}}{f\left(q_{\tau_{k},\bu}\right)}\right]+R_{n,1}R_{n,2}\nonumber\\
&=\frac{1}{f\left(q_{\tau_{k},\bu}\right)f\left(q_{\tau_{j},\bu}\right)}\mathbb{E}\left[\left(\tau_{k}-\bbone_{\left[z_{i}\leq q_{\tau_{k},\bu}\right]}\right)\left(\tau_{j}-\bbone_{\left[z_{i}\leq q_{\tau_{j},\bu}\right]}\right)\right]+R_{n,1}R_{n,2}\nonumber\\
&=\frac{1}{f\left(q_{\tau_{k},\bu}\right)f\left(q_{\tau_{j},\bu}\right)}\left(\tau_{k}\tau_{j}- \tau_{k}\mathbb{E}\left[\bbone_{\left[z_{i}\leq q_{\tau_{j},\bu}\right]}\right]-\tau_{j}\mathbb{E}\left[\bbone_{\left[z_{i}\leq q_{\tau_{k},\bu}\right]}\right]\right)\nonumber\\
&\quad+\frac{1}{f\left(q_{\tau_{k},\bu}\right)f\left(q_{\tau_{j},\bu}\right)}\left(\mathbb{E}\left[\bbone_{\left[z_{i}\leq q_{\tau_{k},\bu}\right]}\bbone_{\left[z_{i}\leq q_{\tau_{j},\bu}\right]}\right]\right)+R_{n,1}R_{n,2}\nonumber\\
&=\frac{\tau_{k}\wedge\tau{j}-\tau_{k}\tau_{j}}{f\left(q_{\tau_{k},\bu}\right)f\left(q_{\tau_{j},\bu}\right)}+R_{n,1}R_{n,2}.\nonumber
\end{align}
\end{enumerate}
\end{proof}

\begin{proof}{Theorem \ref{th:functions_quant_asy}.}
The function $\hat{\boldsymbol{\Phi}}$ is assumed to be continuously differentiable, so Delta method applies
\begin{align}
\hat{\bPhi}\approx \bPhi_{\vartheta}+\frac{\partial\bPhi_{\vartheta}}{\partial\bq}\left(\hat{\bq}-\bq\right),
\end{align} 
then 
\begin{align}
\Var\left(\hat{\bPhi}\right)&\approx \Var\left(\frac{\partial\bPhi_{\vartheta}}{\partial\bq}\hat{\bq}\right)\nonumber\\
&=\frac{\partial\bPhi_{\vartheta}}{\partial\bq}^{\prime}\Cov\left(\hat{\bq}\right)\frac{\partial\bPhi_{\vartheta}}{\partial\bq},
\end{align}
where $\hat{\bq}=\left(\hat{\bq}_{\boldsymbol{\tau}_{1}\bu_{1}}, \dots, \hat{\bq}_{\boldsymbol{\tau}_{K}\bu_{K}}\right)$.
\end{proof}

\begin{proof}{Theorem \ref{th:mmsq_asy}.}
The first order condition of \eqref{eq:mmsq_min_problem} is 
\begin{align}
\frac{1}{R}\sum_{r=1}^{R}\frac{\partial \tilde{\bPhi}_{\vartheta}^{r}}{\partial \vartheta}\mathbf{W}_{\bar{\vartheta}}\left(\hat{\bPhi}_{\vartheta}-\frac{1}{R}\sum_{r=1}^{R}\tilde{\bPhi}_{\vartheta}^{r}\right)=0,
\end{align}
where $\bar{\vartheta}$ is a consistent estimate of $\vartheta$. Let us consider the first order Taylor expansion around the true parameter $\vartheta_{0}$
\begin{align}
& \frac{1}{R}\sum_{r=1}^{R}\frac{\partial \tilde{\bPhi}_{\vartheta_{0}}^{r}}{\partial \vartheta}\mathbf{W}_{\bar{\vartheta}}\left(\hat{\bPhi}_{\vartheta}-\frac{1}{R}\sum_{r=1}^{R}\tilde{\bPhi}_{\vartheta_{0}}^{r}\right)\nonumber\\
& \qquad -\frac{1}{R}\sum_{r=1}^{R}\frac{\partial \tilde{\bPhi}_{\vartheta_{0}}^{r}}{\partial \vartheta}\mathbf{W}_{\bar{\vartheta}}\frac{1}{R}\sum_{r=1}^{R}\frac{\partial \tilde{\bPhi}_{\vartheta_{0}}^{r}}{\partial \vartheta}\left(\hat{\vartheta}-\vartheta_{0}\right)=o_{p}\left(1\right).
\end{align}
From this equation we get 
\begin{align}
\sqrt{n}\left(\hat{\vartheta}-\vartheta_{0}\right)\approx \left(\frac{\partial\tilde{\bPhi}_{\vartheta}^{\prime}}{\partial\vartheta}\mathbf{W}_{\bar{\vartheta}}\frac{\partial\tilde{\bPhi}_{\vartheta}}{\partial\vartheta}\right)^{-1}\frac{\partial\tilde{\bPhi}_{\vartheta}}{\partial\vartheta}\mathbf{W}_{\bar{\vartheta}}\sqrt{n}\left(\hat{\bPhi}-\frac{1}{R}\sum_{r=1}^{R}\tilde{\bPhi}_{\vartheta_{0}}^{r}\right),
\end{align}
as $n\rightarrow\infty$. From Theorem \ref{th:functions_quant_asy} 
\begin{align}
\sqrt{n}\left(\hat{\bPhi}-\frac{1}{R}\sum_{r=1}^{R}\tilde{\bPhi}_{\vartheta_{0}}^{r}\right)\rightarrow^{d}\mathcal{N}\left(\bO, \left(1+\frac{1}{R}\right)\bOmega_{\vartheta}\right),
\end{align}
as $n\rightarrow\infty$, and $\tilde{\boldsymbol{\Phi}}_{\vartheta_{0}}^{r}$ converges to $\boldsymbol{\Phi}_{\vartheta}$. Moreover since $\bar{\vartheta}$ is consistent the matrix $\mathbf{W}_{\bar{\vartheta}}$ converges to $\mathbf{W}_{\vartheta}$. 
From these results we get  
\begin{align}
&\Var\left(\sqrt{n}\left(\hat{\vartheta}-\vartheta\right)\right)\rightarrow
\left(1+\frac{1}{R}\right) \left[\mathbf{H}_\vartheta^{-1}\frac{\partial\boldsymbol{\Phi}_{\vartheta}}{\partial\vartheta}\right]\mathbf{W}_{\vartheta}\bOmega_{\vartheta}\mathbf{W}_\vartheta^{\prime}\left[\mathbf{H}_\vartheta^{-1}\frac{\partial\boldsymbol{\Phi}_{\vartheta}}{\partial\vartheta}\right]^{\prime},
\end{align}
as $n\rightarrow\infty$, where $\mathbf{H}_\vartheta=\frac{\partial\boldsymbol{\Phi}_{\vartheta}}{\partial\vartheta^\prime}\mathbf{W}_{\vartheta}\frac{\partial\boldsymbol{\Phi}_{\vartheta}}{\partial\vartheta}$.
\end{proof}
\begin{proof}{Theorem \ref{th:4}.}
We prove this theorem following \cite{fan_li.2001} and \cite{gao_massam.2015}. 
In the following we denote by $\sigma_{ij}^{0}$ and $\sigma_{ij}$ respectively the zero and non zero off--diagonal elements of the variance covariance matrix.\\
Let us consider a ball $\|\vartheta-\vartheta_{0}\|\leq Mn^{-\frac{1}{2}}$ for some finite constant $M$. In order to prove the result in equation \eqref{eq:s3}, let us consider the first order condition of equation \eqref{eq:s1} and its first order taylor expansion
\begin{align}\label{eq:s2}
 \frac{\partial\mathcal{Q}\left(\vartheta\right)}{\partial \vartheta}&=-2\frac{\partial \tilde{\boldsymbol{\Phi}}_{\vartheta}^{R}}{\partial \vartheta}\mathbf{W}_{\vartheta}\left(\hat{\boldsymbol{\Phi}}-\tilde{\boldsymbol{\Phi}}_{\vartheta}^{R}\right)+n\bv\nonumber\\
&\approx-2\frac{\partial \tilde{\boldsymbol{\Phi}}_{\vartheta_{0}}^{R}}{\partial \vartheta}\mathbf{W}_{\vartheta}\left(\hat{\bPhi}-\tilde{\bPhi}_{\vartheta_{0}}^{R}\right)+2\frac{\partial \tilde{\bPhi}_{\vartheta_{0}}^{R}}{\partial {\vartheta}^{\prime}}\mathbf{W}_{\vartheta}\frac{\partial \tilde{\bPhi}_{\vartheta_{0}}^{R}}{\partial \vartheta}\left(\vartheta-\vartheta_{0}\right)+n\bv,
\end{align}
where $\bv=\left(\bO;p_{\lambda_{n}}'\left(\vert\sigma_{ij}\vert\right)\sgn\left(\sigma_{ij}\right),  i<j\right)$. The first two terms are $\mathcal{O}_{p}\left(n^{-\frac{1}{2}}\right)$. Regarding the penalisation term, let us first consider the zero off--diagonal element $\sigma_{ij}^{0}$. For a given $\lambda_{n}$, the first derivative $p_{\lambda_{n}}^{\prime}\left(\vert\sigma_{ij}\vert\right)$ with respect to $\vert\sigma_{ij}\vert$ is given by
\begin{align}
p_{\lambda_{n}}^{\prime}\left(\vert\sigma_{ij}\vert\right)=\begin{cases}\lambda_{n} & \mbox{if } \vert\sigma_{ij}\vert\leq\lambda_{n} \\
\frac{\left(a\lambda_{n}-\vert\sigma_{ij}\vert\right)}{a-1} & \mbox{if }\lambda_{n}<\vert\sigma_{ij}\vert\leq a\lambda_{n}\\
0 & \mbox{if } a\lambda_{n}<\vert\sigma_{ij}\vert,
\end{cases}
\end{align}
and it holds
\begin{align}
\lim_{\vert\sigma_{ij}\vert\rightarrow0}\frac{p_{\lambda_{n}}^{\prime}\left(\vert\sigma_{ij}\vert\right)}{\lambda_{n}}=1.
\end{align} 
Then, for a generic $\sigma_{ij}^{0}$, the corresponding element in $n\bv$ can be written as
\begin{align}
n\lambda_{n}\sgn\left(\sigma_{ij}\right)\frac{p_{\lambda_{n}}^{\prime}\left(\vert\sigma_{ij}\vert\right)}{\lambda_{n}}=n\lambda_{n}\sgn\left(\sigma_{ij}\right).
\end{align}
We rewrite \eqref{eq:s2} as follows
\begin{align}\label{eq:s3}
 \frac{\partial\mathcal{Q}\left(\vartheta\right)}{\partial \vartheta}=n\lambda_{n}\left\{\lambda_{n}^{-1}\bv-\mathcal{O}_{p}\left(n^{-\frac{n}{2}}\lambda_{n}^{-1}\right)\right\},
\end{align}
Since $\lim\inf_{n\rightarrow\infty}\lim\inf_{\vert\sigma_{ij}\vert\rightarrow0}\frac{p_{\lambda_{n}}^{\prime}\left(\vert\sigma_{ij}\vert\right)}{\lambda_{n}}>0$ and $\sqrt{n}\lambda_{n}\rightarrow\infty$, the term $n\bv$ has asymptotic order higher that $\mathcal{O}_{p}\left(n^{-\frac{1}{2}}\right)$ and dominates the equation \eqref{eq:s3}. This means that the sign of $\frac{\partial\mathcal{Q}\left(\vartheta\right)}{\partial \sigma_{ij}}$  is determined by the sign of $\sigma_{ij}$, i.e. for any local minimiser it holds $\hat{\sigma}_{i,j}=0$ with probability $1$. Now consider the case in which $\sigma_{ij}$ is not a zero element, then using the Taylor approximation we can calculate the following 
\begin{align}\label{eq:s4}
\mathcal{Q}\left(\vartheta_{0}\right)-\mathcal{Q}\left(\vartheta\right)&=\left(\hat{\bPhi}-\tilde{\bPhi}_{\vartheta_{0}}^{R}\right)^{\prime}\mathbf{W}_{\vartheta_{0}}\left(\hat{\bPhi}-\tilde{\bPhi}_{\vartheta_{0}}^{R}\right)-\left(\hat{\bPhi}-\tilde{\bPhi}_{\vartheta}^{R}\right)^{\prime}\mathbf{W}_{\vartheta}\left(\hat{\bPhi}-\tilde{\bPhi}_{\vartheta}^{R}\right)\nonumber\\
&\quad +n\sum_{i<j}\left[p_{\lambda}\left(\vert\sigma_{ij}^{0}\vert\right)-p_{\lambda}\left(\vert\sigma_{ij}\vert\right)\right]\nonumber\\
&\approx2\frac{\partial \tilde{\bPhi}_{\vartheta_{0}}^{R}}{\partial \vartheta}\mathbf{W}_{\vartheta_{0}}\left(\hat{\bPhi}-\tilde{\bPhi}_{\vartheta_{0}}^{R}\right)\left(\vartheta-\vartheta_{0}\right)\nonumber\\
&\quad+\left(\vartheta-\vartheta_{0}\right)^{\prime}\left[-2\frac{\partial \tilde{\bPhi}_{\vartheta_{0}}^{R}}{\partial {\vartheta}^{\prime}}\mathbf{W}_{\vartheta_{0}}\frac{\partial \tilde{\bPhi}_{\vartheta_{0}}^{R}}{\partial \vartheta}\right]\left(\vartheta-\vartheta_{0}\right)\nonumber\\
&\quad -n\sum_{i<j}\left(p_{\lambda_{n}}^{\prime}\left(\vert\sigma_{ij}\vert\right)\sgn\left(\sigma_{ij}\right)\left(\sigma_{ij}-\sigma_{ij}^{0}\right)+p_{\lambda_{n}}^{\prime\prime}\left(\vert\sigma_{ij}\vert\right)\left(\sigma_{ij}-\sigma_{ij}^{0}\right)^2\right),\nonumber
\end{align}
where $p_{\lambda_{n}}^{\prime\prime}\left(|\sigma_{ij}|\right)$ stands for the second derivative. For $n$ large enough the summation term in  equation \eqref{eq:s4} is negligible since $\sigma_{ij}\neq 0$ and
\begin{align}
\lim_{n\rightarrow\infty}p_{\lambda_{n}}^{\prime}\left(|\sigma_{ij}|\right)&=0\nonumber\\
\lim_{n\rightarrow\infty}p_{\lambda_{n}}^{\prime\prime}\left(|\sigma_{ij}|\right)&=0.
\end{align}
The same holds for the fist term. The matrix 
\begin{align}
-2\frac{\partial \tilde{\bPhi}_{\vartheta_{0}}^{R}}{\partial \vartheta^\prime}\mathbf{W}_{\vartheta_{0}}\frac{\partial \tilde{\bPhi}_{\vartheta_{0}}^{R}}{\partial \vartheta},
\end{align} 
is negative definite and for $n$ large it dominates the other terms, therefore $\mathcal{Q}\left(\vartheta_{0}\right)-\mathcal{Q}\left(\vartheta\right)\leq 0$. This implies that there exist a local minimizer $\hat{\vartheta}$ such that $\|\hat{\vartheta}-\vartheta_{0}\|=\mathcal{O}_{p}\left(n^{-\frac{1}{2}}\right)$.
\end{proof}

\begin{proof}{Theorem \ref{th:5}.}
Let us consider the first order Taylor expansion with respect to $\vartheta_{0}^{1}$ of the first order condition computed in equation 
\eqref{eq:s4} 
\begin{align}
\frac{\partial Q\left(\vartheta\right)}{\partial \vartheta^1}&=-2\frac{\partial \tilde{\bPhi}_{\vartheta}^{R}}{\partial \vartheta^1}\mathbf{W}_{\vartheta^1}\left(\hat{\bPhi}-\tilde{\bPhi}_{\vartheta}^{R}\right)+n\bv\nonumber\\
&=-2\frac{\partial \tilde{\bPhi}_{\vartheta_{0}}^{R}}{\partial \vartheta^1}\mathbf{W}_{\vartheta_{0}^1}\left(\hat{\bPhi}-\tilde{\bPhi}_{\vartheta_{0}}^{R}\right)+2\left(\frac{\partial \tilde{\bPhi}_{\vartheta_{0}}^{R}}{\partial {\vartheta^1}^\prime}\mathbf{W}_{\vartheta_{0}^1}\frac{\partial \tilde{\bPhi}_{\vartheta_{0}}^{R}}{\partial \vartheta^1}\right)\left(\vartheta^1-\vartheta_{0}^1\right)\nonumber\\
&\qquad+n\bv_{0}+n\bP_{0}\left(\vartheta^1-\vartheta_{0}^1\right)=0,
\end{align}
where $\bv=\left(\bO;p_{\lambda_{n}}^{\prime}\left(|\sigma_{ij}|\right)\sgn\left(\sigma_{ij}\right), i<j\right)$ and $\bv_{0}$ is $\bv$ computed at the true value of the variance covariance matrix; $\bP=\diag\left\{\bO,p_{\lambda_{n}}^{\prime\prime}\left(|\sigma_{ij}|\right), i<j\right\}$ and $\bP_{0}$ is $\bP$ computed at the true parameter of the variance covariance matrix.
\begin{align}
&2\left(\frac{\partial \tilde{\bPhi}_{\vartheta_{0}}^{R}}{\partial {\vartheta^1}^\prime}\mathbf{W}_{\vartheta_{0}^1}\frac{\partial \tilde{\bPhi}_{\vartheta_{0}}^{R}}{\partial \vartheta^1}\right)\left(\vartheta^1-\vartheta_{0}^1\right)+n\bv+n\bP\left(\vartheta^1-\vartheta_{0}^1\right)\nonumber\\
&\quad=2\frac{\partial \tilde{\bPhi}_{\vartheta_{0}}^{R}}{\partial \vartheta^1}\mathbf{W}_{\vartheta_{0}^1}\left(\hat{\bPhi}-\tilde{\bPhi}_{\vartheta_{0}}^{R}\right)\sqrt{n}\left[2\left(\frac{\partial \tilde{\bPhi}_{\vartheta_{0}}^{R}}{\partial {\vartheta^1}^\prime}\mathbf{W}_{\vartheta_{0}^1}\frac{\partial \tilde{\bPhi}_{\vartheta_{0}}^{R}}{\partial \vartheta^1}\right)+n\bP_{0}\right]\nonumber\\
&\qquad\times\left\{\vartheta^1-\vartheta_{0}^1+\left[2\left(\frac{\partial \tilde{\bPhi}_{\vartheta_{0}}^{R}}{\partial {\vartheta^1}^\prime}\mathbf{W}_{\vartheta_{0}^1}\frac{\partial \tilde{\bPhi}_{\vartheta_{0}}^{R}}{\partial \vartheta^1}\right)+n\bP_{0}\right]^{-1}n\bv_{0}\right\}\nonumber\\
&\quad=2\frac{\partial \tilde{\bPhi}_{\vartheta_{0}}^{R}}{\partial \vartheta^1}\mathbf{W}_{\vartheta_{0}^1}\sqrt{n}\left(\hat{\bPhi}-\tilde{\bPhi}_{\vartheta_{0}}^{R}\right)\xrightarrow[]{d}\mathcal{N}\left(\bO, \frac{\partial \bPhi_{\vartheta_{0}}}{\partial \vartheta^1}\mathbf{W}_{\vartheta_{0}^1}\bOmega_{\vartheta_{0}}\mathbf{W}_{\vartheta^1}^\prime\frac{\partial \bPhi_{\vartheta_{0}}}{\partial {\vartheta^1}^\prime}\right).
\end{align}
Since $\bv_{0}$ and $\bP_{0}$ vanish asymptotically, we apply the same argument of Theorem \ref{th:mmsq_asy} to complete the proof. 
\end{proof}
%
\section*{Appendix B: MMSQ initialisation}
\label{app:mmsq_init}
%
\noindent Without loss of generality we consider the case $m\geq 2$. Since each variables $Y_i$ have univariate Elliptical Stable distribution, then marginals' parameters can be estimated using the approach of \cite{mcculloch.1986}. The off--diagonal parameter of the scale matrix is estimated using the following procedure. For each couple of variables $\bY_{ij}=\left(Y_{i}, Y_{j}\right)^\prime$ it holds 
\begin{align}
\bY_{ij}\sim\mathcal{ESD}_2\left(\alpha, \bxi_{ij}, \bOmega_{ij}\right),
\end{align}
where $\bxi_{ij} =\left(\xi_{i},\xi_{j}\right)^\prime$ and $\bOmega_{ij} = \left[\begin{matrix}\omega^{2}_{i}&\omega_{ij}\\ \omega_{ij}&\omega^{2}_{j}\end{matrix}\right]$. Let us consider the standardised variables $\bX_{ij}=\left(X_{i}, X_{j}\right)^\prime$ where
\begin{align}
\left(X_{i}, X_{j}\right)=\left(\frac{Y_{i}-\xi_{i}}{\omega_{ii}}, \frac{Y_{j}-\xi_{j}}{\omega_{jj}}\right),
\end{align} 
then  $\bX_{ij}\thicksim\mathcal{ESD}_2\left(\alpha, \bO, \bar{\bOmega}_{ij}\right)$ where $\bar{\bOmega}_{ij}=\left[\begin{matrix}1 & \rho_{ij}\\ \rho_{ij} & 1  \end{matrix}\right]$. Using the Definition \ref{def:1}, it turns out that the optimal direction for $\rho_{ij}$ is $\bu=\left(\frac{1}{\sqrt{2}}, \frac{1}{\sqrt{2}}\right)^{\prime}$. Therefore, we project $\bX_{ij}$ along $\bu$ and we obtain the variable $X_{\bu}=\bu^{\prime}\bX_{ij}$ such that $X_{\bu}\sim\mathcal{ESD}_1\left(\alpha, 0, 1+\rho_{ij}\right)$. Now, since $X_{\bu}$ is a univariate random variable we can apply the method of \cite{mcculloch.1986} to initialise the scale of a univariate ESD. 
%
\clearpage
\newpage
\section*{Appendix C: synthetic data results}
\label{app:appendix_tables}
%
%
\begin{table}[!ht]
\captionsetup{font={footnotesize}}
\setlength{\tabcolsep}{5pt}
\begin{center}
\resizebox{0.6\columnwidth}{!}{
\begin{tabular}{rccccccc}
\\
\toprule
& &\multicolumn{3}{c}{$n=500$}&\multicolumn{3}{c}{$n=2000$}\\
 \cmidrule(lr){1-1}\cmidrule(lr){2-2}\cmidrule(lr){3-5}\cmidrule(lr){6-8}
          Par. & True  & BIAS & SSD & ECP & BIAS & SSD & ECP \\ 
          \cmidrule(lr){1-1}\cmidrule(lr){2-2}\cmidrule(lr){3-5}\cmidrule(lr){6-8}
          $\alpha$  	&{\bf1.70} &-0.0075  &0.0996  &0.7970   &-0.0041  &0.0535  &0.7650  \\
    $\xi_{1}$		&0.00  &0.0016  &0.0443  &0.9380     &0.0013  &0.0201  &0.9500  \\
     $\xi_{2}$ 		&0.00 &0.0088  &0.0841  &0.9440     &0.0021  &0.0385  &0.9590  \\
     $\omega_{11}$ 	&0.50 &0.0112  &0.2904  &0.6030   &-0.0046  &0.0605  &0.5330  \\
     $\omega_{22}$ 	&2.00 &-0.0409  &0.3599  &0.6870   &-0.0059  &0.1439  &0.6910  \\
     $\omega_{12}$ 	&0.90 &-0.1044  &0.2841  &0.8090   &-0.0369  &0.1680  &0.8280  \\
      
     \hline
Par. & True  & BIAS & SSD & ECP & BIAS & SSD & ECP \\ 
          \cmidrule(lr){1-1}\cmidrule(lr){2-2}\cmidrule(lr){3-5}\cmidrule(lr){6-8}
          $\alpha$        &{\bf1.90} &-0.0315  &0.0876  &0.8750  &-0.0141  &0.0626  &0.8760  \\
     $\xi_{1}$        &0.00  &-0.0003  &0.0444  &0.9390 &0.0010  &0.0209  &0.9440  \\
     $\xi_{2}$        &0.00  &0.0029  &0.0891  &0.9240 &0.0005  &0.0401  &0.9510  \\
     $\omega_{11}$  &0.50 &-0.0069  &0.2040  &0.6480  &0.0045  &0.4682  &0.6120  \\
     $\omega_{22}$  &2.00 &-0.0412  &0.3563  &0.7700 &0.0002  &0.4357  &0.7380  \\
     $\omega_{12}$  &0.90 &-0.1862  &0.3717  &0.7530  &-0.1373  &0.3110  &0.7730  \\  
      
     \hline
Par. & True  & BIAS & SSD & ECP & BIAS & SSD & ECP \\ 
          \cmidrule(lr){1-1}\cmidrule(lr){2-2}\cmidrule(lr){3-5}\cmidrule(lr){6-8}
          $\alpha$        &{\bf1.95} &-0.0628  &0.0974  &0.8580  &-0.0310  &0.0586  &0.8580  \\
     $\xi_{1}$        &0.00  &0.0006  &0.0436  &0.9360  &0.0047  &0.1060  &0.9490  \\
     $\xi_{2}$        &0.00  &0.0038  &0.0862  &0.9220  &-0.0008  &0.0645  &0.9520  \\
     $\omega_{11}$  &0.50 &0.0111  &0.5107  &0.6270  &-0.0014  &0.2008  &0.6310  \\
     $\omega_{22}$  &2.00 &-0.0688  &0.4903  &0.7650  &-0.0272  &0.3358  &0.7580  \\
     $\omega_{12}$  &0.90 &-0.2227  &0.4907  &0.6980    &-0.2181  &0.4444  &0.7190  \\ 
\bottomrule
\end{tabular}}
\caption{\footnotesize{Bias (BIAS), sample standard deviation (SSD), and empirical coverage probability (ECP) at the 95\% confidence level for the locations $\boldsymbol{\xi}=\left(\xi_{1},\xi_2,\dots,\xi_d\right)$, scale matrix $\boldsymbol{\Omega}=\left\{\omega_{ij}\right\}$, with $i,j=1,2,\dots,d$ and $i\leq j$ and characteristic exponent $\alpha$ of the bivariate Elliptical Stable distribution of dimension $m=2$. The results reported above are obtained using $1,000$ replications for three different values of $\alpha=\left(1.70,1.90,1.95\right)$.}}
\label{tab:table_coverage_esd_dim2}
\end{center}
\end{table}

\begin{sidewaystable}[p]
\captionsetup{font={footnotesize}}
\setlength{\tabcolsep}{5pt}
\begin{center}
\resizebox{1.0\columnwidth}{!}{
\begin{tabular}{rrccccccrrccccccrrcccccc}
\\
\toprule
& &\multicolumn{3}{c}{$n=500$}&\multicolumn{3}{c}{$n=2000$}& &&\multicolumn{3}{c}{$n=500$}&\multicolumn{3}{c}{$n=2000$}& &&\multicolumn{3}{c}{$n=500$}&\multicolumn{3}{c}{$n=2000$}\\
 \cmidrule(lr){1-1}\cmidrule(lr){2-2}\cmidrule(lr){3-5}\cmidrule(lr){6-8}\cmidrule(lr){9-9}\cmidrule(lr){10-10}\cmidrule(lr){11-13}\cmidrule(lr){14-16}\cmidrule(lr){17-17}\cmidrule(lr){18-18}\cmidrule(lr){19-21}\cmidrule(lr){22-24}
Par. & True  & BIAS & SSD & ECP & BIAS & SSD & ECP &Par. & True  & BIAS & SSD & ECP & BIAS & SSD & ECP&Par. & True  & BIAS & SSD & ECP & BIAS & SSD & ECP\\
 \cmidrule(lr){1-1}\cmidrule(lr){2-2}\cmidrule(lr){3-5}\cmidrule(lr){6-8}\cmidrule(lr){9-9}\cmidrule(lr){10-10}\cmidrule(lr){11-13}\cmidrule(lr){14-16}\cmidrule(lr){17-17}\cmidrule(lr){18-18}\cmidrule(lr){19-21}\cmidrule(lr){22-24}
 $\alpha$  	&{\bf1.70}   &-0.0055  &0.0613  &0.7958  &-0.0001  &0.0352  &0.8006 &$\alpha$ &{\bf1.90}      &-0.0387  &0.1539  &0.9891  &-0.0092  &0.0571  &0.7480& $\alpha$	&{\bf1.95}      &-0.0662  &0.2190  &0.9854  &-0.0278  &0.1222  &0.9910  \\
\multicolumn{24}{c}{\bf Locations}\\
     $\mu_{1}$		&0.00  &-0.0008  &0.0281  &0.9409  &0.0010  &0.0470  &0.9376  &$\mu_{1}$&0.00  &0.0030  &0.0920  &0.9659  &-0.0007  &0.0142  &0.9540& $\mu_{1}$ &0.00  &-0.0011  &0.0284  &0.9610  &-0.0010  &0.0140  &0.9550  \\
     $\mu_{2}$ 		&0.00  &0.0011  &0.0406  &0.9479  &-0.0001  &0.0625  &0.9315  &$\mu_{2}$&0.00  &0.0052  &0.0768  &0.9628  &-0.0028  &0.0192  &0.9560  &$\mu_{2}$&0.00  &0.0012  &0.0415  &0.9463  &-0.0029  &0.0193  &0.9565  \\
     $\mu_{3}$ 		&0.00  &-0.0024  &0.0533  &0.9550  &0.0021  &0.0984  &0.9406  &$\mu_{3}$&0.00  &0.0123  &0.3309  &0.9566  &-0.0012  &0.0275  &0.9580  &$\mu_{3}$&0.00  &-0.0051  &0.0572  &0.9382  &-0.0019  &0.0274  &0.9475  \\
     $\mu_{4}$ 		&0.00  &-0.0055  &0.0785  &0.9409  &0.0177  &0.5072  &0.9527  &$\mu_{4}$&0.00  &0.0130  &0.3594  &0.9395  &0.0029  &0.1348  &0.9570  &$\mu_{4}$&0.00  &-0.0162  &0.2342  &0.9431  &-0.0025  &0.0400  &0.9520  \\
     $\mu_{5}$ 		&0.00  &0.0023  &0.1149  &0.9389  &0.0278  &1.0030  &0.9436  &$\mu_{5}$&0.00  &0.0306  &0.5551  &0.9333  &0.0000  &0.2171  &0.9410  &$\mu_{5}$&0.00  &-0.0221  &0.4252  &0.9496  &-0.0094  &0.0596  &0.9385  \\
\multicolumn{24}{c}{\bf Scale, diagonals}\\
     $\omega_{11}$ 	&0.5000&-0.0047  &0.0312  &0.7688  &-0.0015  &0.0160  &0.8187  &$\omega_{11}$&0.5000&-0.0062  &0.0297  &0.7628  &-0.0024  &0.0160  &0.8150  &$\omega_{11}$&0.5000&-0.0040  &0.0293  &0.7724  &-0.0031  &0.0136  &0.8186  \\
     $\omega_{22}$ 	&0.7071&0.0040  &0.0393  &0.7678  &0.0019  &0.0214  &0.7795  &$\omega_{22}$&0.7071&-0.0021  &0.0375  &0.7736  &0.0000  &0.0180  &0.7790  &$\omega_{22}$&0.7071&0.0004  &0.0416  &0.7333  &-0.0012  &0.0184  &0.7676  \\
     $\omega_{33}$ 	&1.0000&-0.0058  &0.0547  &0.7247  &-0.0033  &0.0316  &0.7402  &$\omega_{33}$&1.0000&-0.0063  &0.0499  &0.7271  &-0.0034  &0.0249  &0.7420  &$\omega_{33}$&1.0000&-0.0005  &0.0516  &0.7561  &-0.0033  &0.0254  &0.7481  \\
     $\omega_{44}$ 	&1.4142&0.0022  &0.0801  &0.7337  &0.0040  &0.0479  &0.7422  &$\omega_{44}$&1.4142&-0.0021  &0.0779  &0.7876  &0.0017  &0.0367  &0.7520  &$\omega_{44}$&1.4142&-0.0010  &0.0728  &0.7577  &0.0012  &0.0409  &0.7271  \\
     $\omega_{55}$ 	&2.0000&-0.0091  &0.1144  &0.7047  &0.0011  &0.0681  &0.7382  &$\omega_{55}$&2.0000&-0.0123  &0.1070  &0.7426  &0.0005  &0.0523  &0.7690  &$\omega_{55}$&2.0000&-0.0042  &0.1085  &0.7236  &0.0021  &0.0493  &0.7661  \\
     \multicolumn{24}{c}{\bf Scale, off--diagonals}\\
     $\omega_{12}$   &0.7071&-0.0171  &0.1312  &0.9650  &-0.0080  &0.0706  &0.9778  &$\omega_{12}$&0.7071&-0.0260  &0.1174  &0.9643  &-0.0043  &0.0577  &0.9890  &$\omega_{12}$ &0.7071&-0.0276  &0.1349  &0.9496  &-0.0043  &0.0602  &0.9880  \\
     $\omega_{13}$   &0.8000&-0.0490  &0.1764  &0.9469  &-0.0219  &0.0983  &0.9748  &$\omega_{13}$&0.8000&-0.0751  &0.1486  &0.9271  &-0.0155  &0.0703  &0.9870  &$\omega_{13}$&0.8000&-0.0783  &0.1569  &0.9154  &-0.0231  &0.0670  &0.9835  \\
     $\omega_{14}$   &0.00&0.0124  &0.1292  &0.9269  &0.0071  &0.0657  &0.9275  &$\omega_{14}$&0.00&0.0123  &0.1158  &0.9519  &0.0056  &0.0615  &0.9600  &$\omega_{14}$&0.00&0.0183  &0.1326  &0.9463  &0.0044  &0.0574  &0.9550  \\
     $\omega_{15}$   &0.00&0.0178  &0.1456  &0.8859  &0.0085  &0.0724  &0.8751  &$\omega_{15}$&0.00&0.0266  &0.1506  &0.8992  &0.0060  &0.0669  &0.8900  &$\omega_{15}$&0.00&0.0226  &0.1373  &0.8862  &-0.0034  &0.0548  &0.9100  \\
     $\omega_{23}$   &0.5657&-0.0167  &0.1558  &0.9289  &0.0010  &0.0841  &0.9527  & $\omega_{23}$&0.5657&-0.0318  &0.1193  &0.9287  &0.0018  &0.0631  &0.9750  &$\omega_{23}$&0.5657&-0.0339  &0.1248  &0.9171 &-0.0064  &0.0614  &0.9805  \\
     $\omega_{24}$   &0.00&0.0103  &0.1168  &0.9109  &0.0050  &0.0708  &0.8207  &$\omega_{24}$&0.00&0.0116  &0.1151  &0.9442  &0.0010  &0.0613  &0.8920  &$\omega_{24}$&0.00&0.0146  &0.1123  &0.9496  &0.0019  &0.0632  &0.9220  \\
     $\omega_{25}$   &0.00&0.0252  &0.1336  &0.8749  &0.0100  &0.0723  &0.8258  &$\omega_{25}$&0.00&0.0182  &0.1184  &0.9240  &0.0039  &0.0634  &0.8780  &$\omega_{25}$&0.00&0.0168  &0.1233  &0.9268  &-0.0036  &0.0684  &0.9280  \\
     $\omega_{34}$   &0.00&0.0101  &0.1194  &0.9600  &0.0031  &0.0648  &0.9587  &$\omega_{34}$&0.00&-0.0013  &0.1107  &0.9674  &0.0014  &0.0538  &0.9750  &$\omega_{34}$ &0.00&0.0068  &0.1100  &0.9659  &-0.0006  &0.0507  &0.9805  \\
     $\omega_{35}$   &0.00&0.0119  &0.1194  &0.9660  &0.0046  &0.0619  &0.9527  &$\omega_{35}$&0.00&0.0029  &0.1103  &0.9767 &0.0014  &0.0579  &0.9740  &$\omega_{35}$ &0.00&0.0055  &0.1053  &0.9707  &-0.0078  &0.0524  &0.9835  \\
     $\omega_{45}$   &0.9016&-0.1466  &0.2975  &0.9489      &-0.0253  &0.0981  &0.9778  &$\omega_{45}$&0.9016&-0.1425  &0.2389  &0.9659  &-0.0208  &0.0903  &0.9950  &$\omega_{45}$&0.9016&-0.1245  &0.1787  &0.9447  &-0.0416  &0.0724  &0.9910  \\
\bottomrule
\end{tabular}}
\caption{\footnotesize{Bias (BIAS), sample standard deviation (SSD), and empirical coverage probability (ECP) at the 95\% confidence level for the locations $\boldsymbol{\mu}=\left(\mu_{1},\mu_2,\dots,\mu_d\right)$, scale matrix $\boldsymbol{\Omega}=\left\{\omega_{ij}\right\}$, with $i,j=1,2,\dots,d$ and $i\leq j$ and characteristic exponent $\alpha$ of the Elliptical Stable distribution of dimension $m=5$. The results reported above are obtained using $1,000$ replications  for three different values of $\alpha=\left(1.70,1.90,1.95\right)$.}}
\label{tab:table_coverage_esd_dim5}
\end{center}
\end{sidewaystable}

%
\begin{figure}[!ht]
\captionsetup{font={footnotesize}}
\begin{center}

\subfloat[True]{\label{fig:smmsq_12_alpha_170_TRUE}\includegraphics[width=0.32\textwidth]{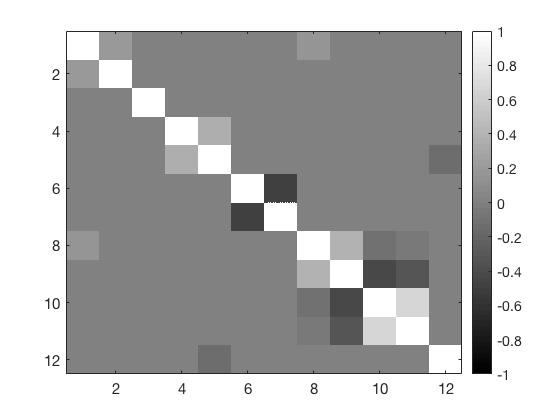}}
\subfloat[S--MMSQ, $\alpha=1.70$]{\label{fig:smmsq_12_alpha_170_SMMSQ}\includegraphics[width=0.32\textwidth]{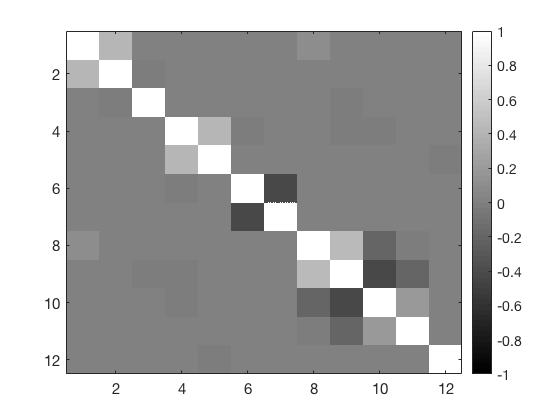}}
\subfloat[GLasso, $\alpha=1.70$]{\label{fig:smmsq_12_alpha_170_LASSO}\includegraphics[width=0.32\textwidth]{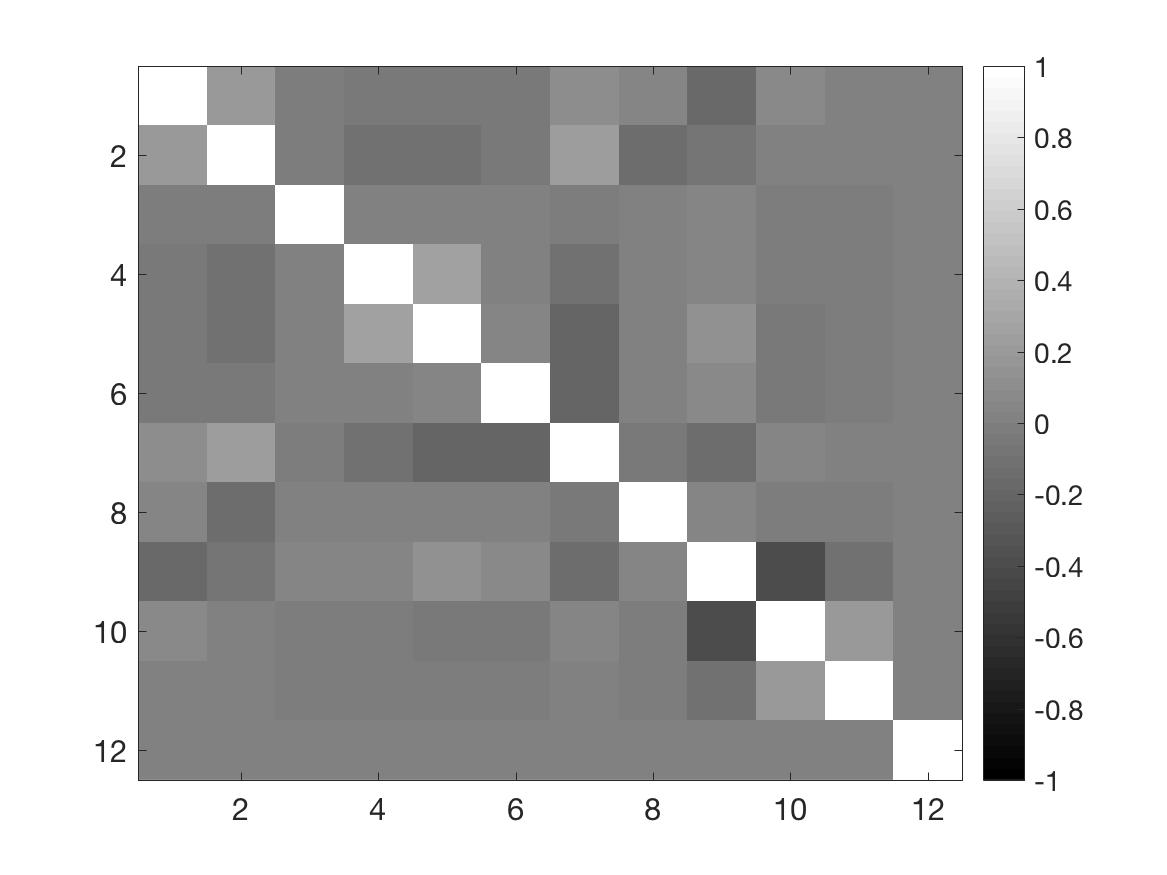}}\\
\subfloat[SCAD, $\alpha=1.70$]{\label{fig:smmsq_12_alpha_170_SCAD}\includegraphics[width=0.32\textwidth]{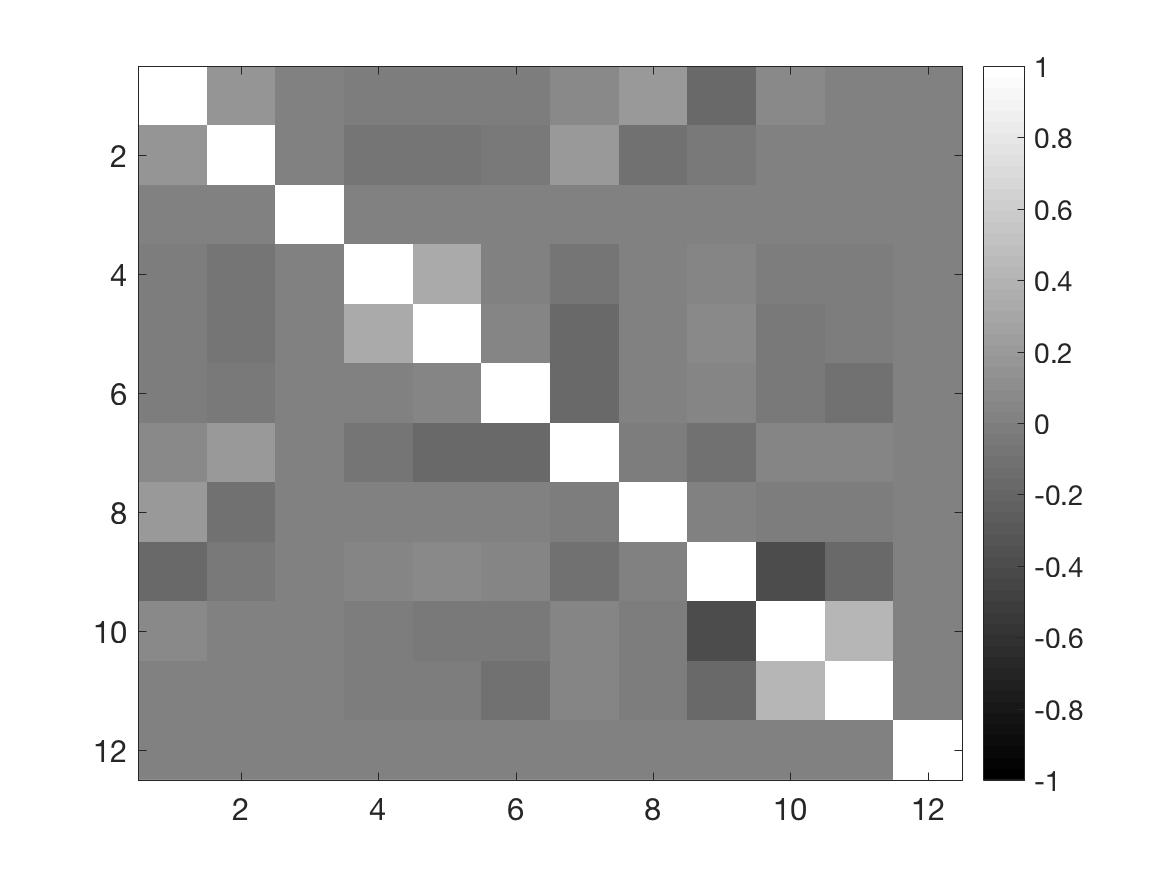}}
\subfloat[Adaptive Lasso, $\alpha=1.70$]{\label{fig:smmsq_12_alpha_170_ADAPT}\includegraphics[width=0.32\textwidth]{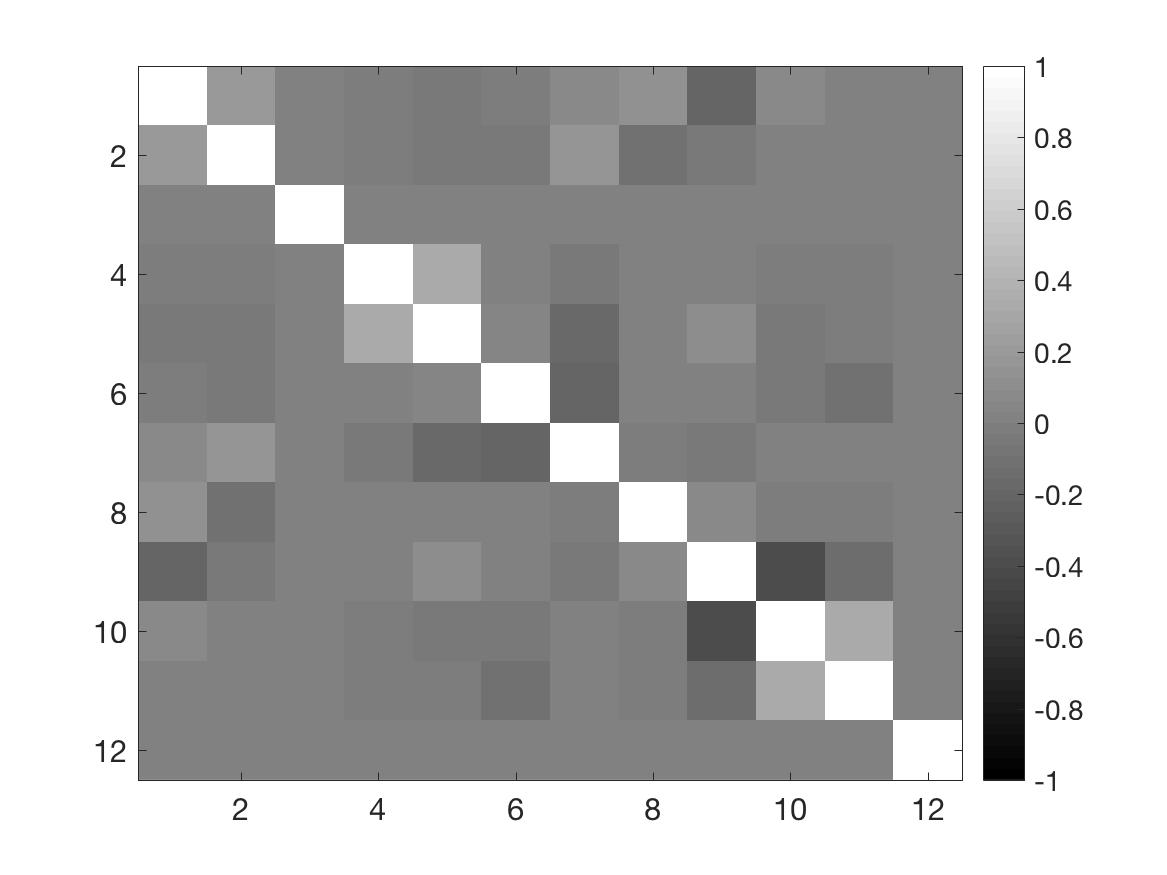}}
\subfloat[S--MMSQ, $\alpha=1.90$]{\label{fig:smmsq_12_alpha_190_SMMSQ}\includegraphics[width=0.32\textwidth]{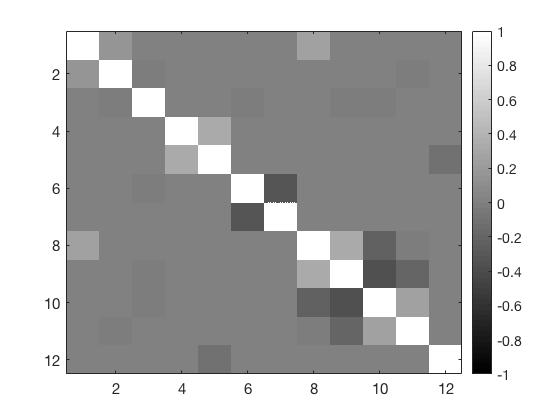}}\\
\subfloat[GLasso, $\alpha=1.90$]{\label{fig:smmsq_12_alpha_190_LASSO}\includegraphics[width=0.32\textwidth]{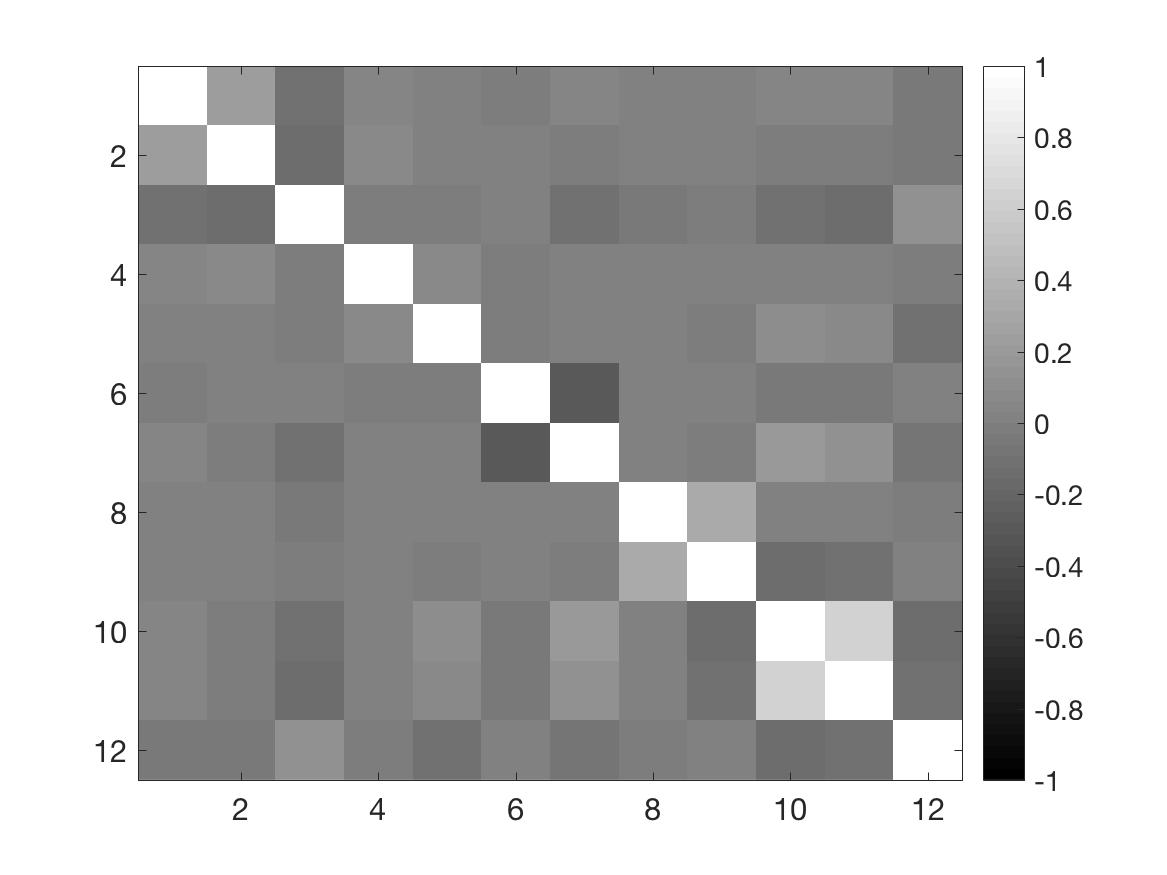}}
\subfloat[SCAD, $\alpha=1.90$]{\label{fig:smmsq_12_alpha_190_SCAD}\includegraphics[width=0.32\textwidth]{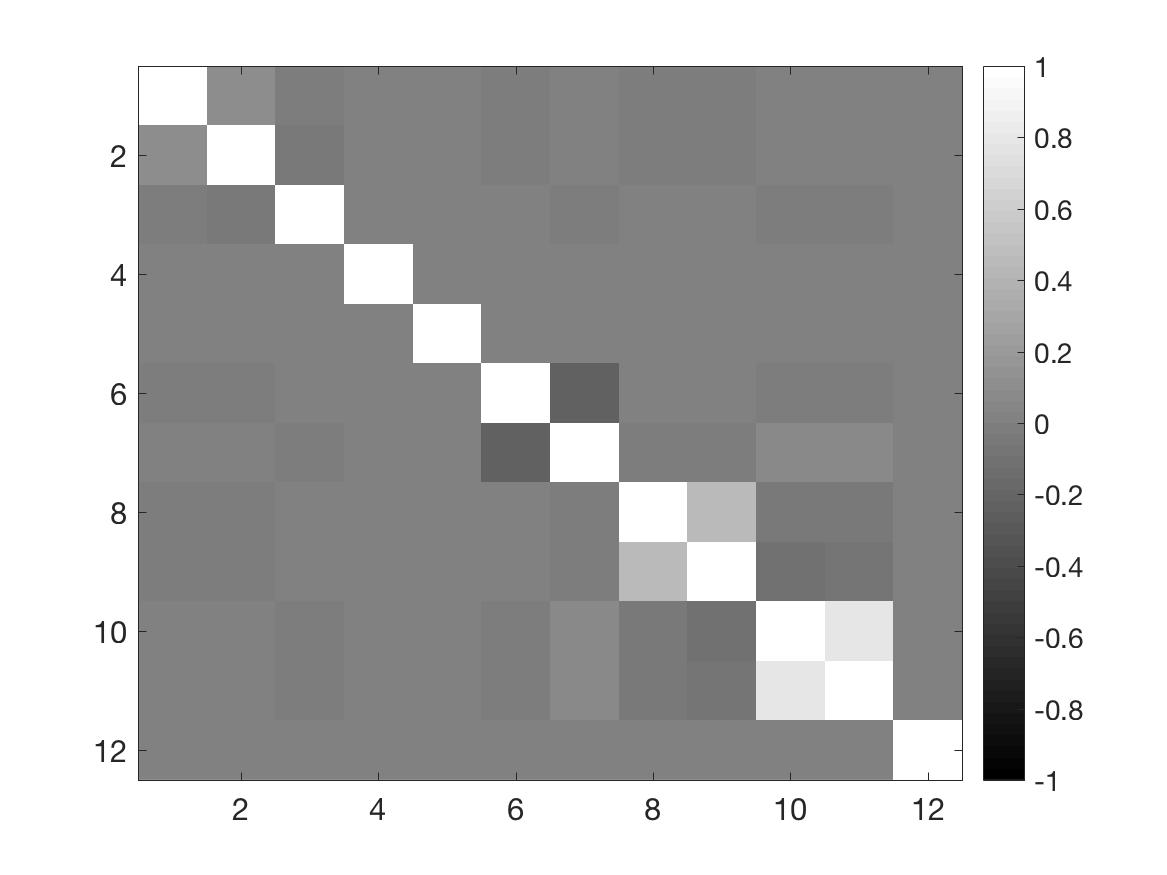}}
\subfloat[Adaptive Lasso, $\alpha=1.90$]{\label{fig:smmsq_12_alpha_190_ADAPT}\includegraphics[width=0.32\textwidth]{ESD_Ex1_Dim12_190_SCAD.jpg}}
\caption{\footnotesize{Band structure of the true and estimated scale matrices through S--MMSQ of the $12$--dimensional Elliptical Stable simulated experiment discussed in Section  \ref{sec:msq_synthetic_ex}, for $\alpha =\left(1.70,1.90\right)$ and sample size $n=200$.}}
\label{fig:smmsq_12_alpha_170_190}
\end{center}
\end{figure}
%
\begin{figure}[!ht]
\captionsetup{font={footnotesize}}
\begin{center}

\subfloat[True]{\label{fig:smmsq_12_alpha_195_TRUE}\includegraphics[width=0.32\textwidth]{MMSQ_True_dim12.jpg}}
\subfloat[S--MMSQ, $\alpha=1.95$]{\label{fig:smmsq_12_alpha_195_SMMSQ}\includegraphics[width=0.32\textwidth]{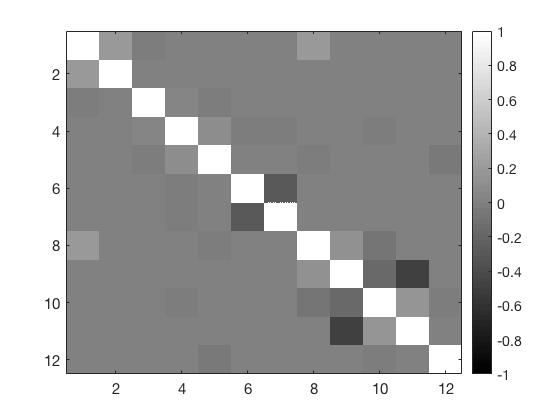}}
\subfloat[GLasso, $\alpha=1.95$]{\label{fig:smmsq_12_alpha_195_LASSO}\includegraphics[width=0.32\textwidth]{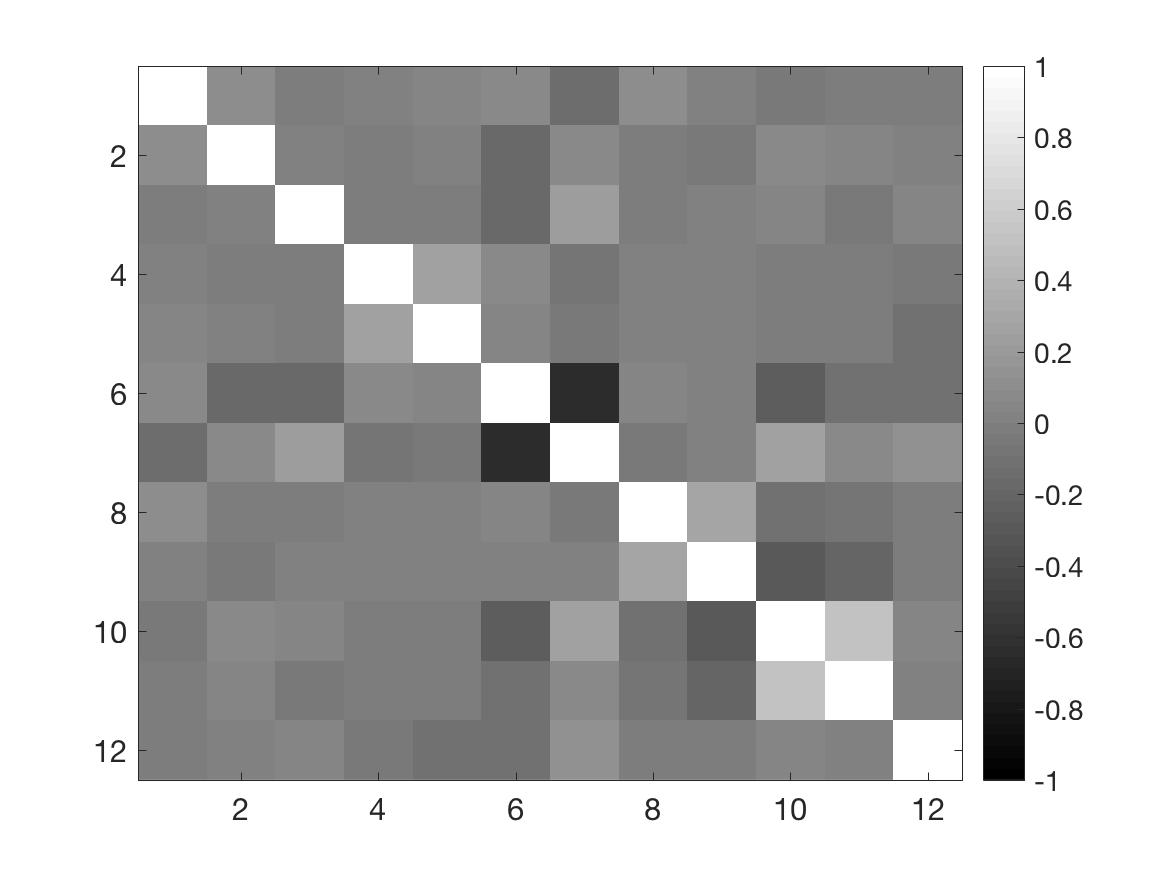}}\\
\subfloat[SCAD, $\alpha=1.95$]{\label{fig:smmsq_12_alpha_195_SCAD}\includegraphics[width=0.32\textwidth]{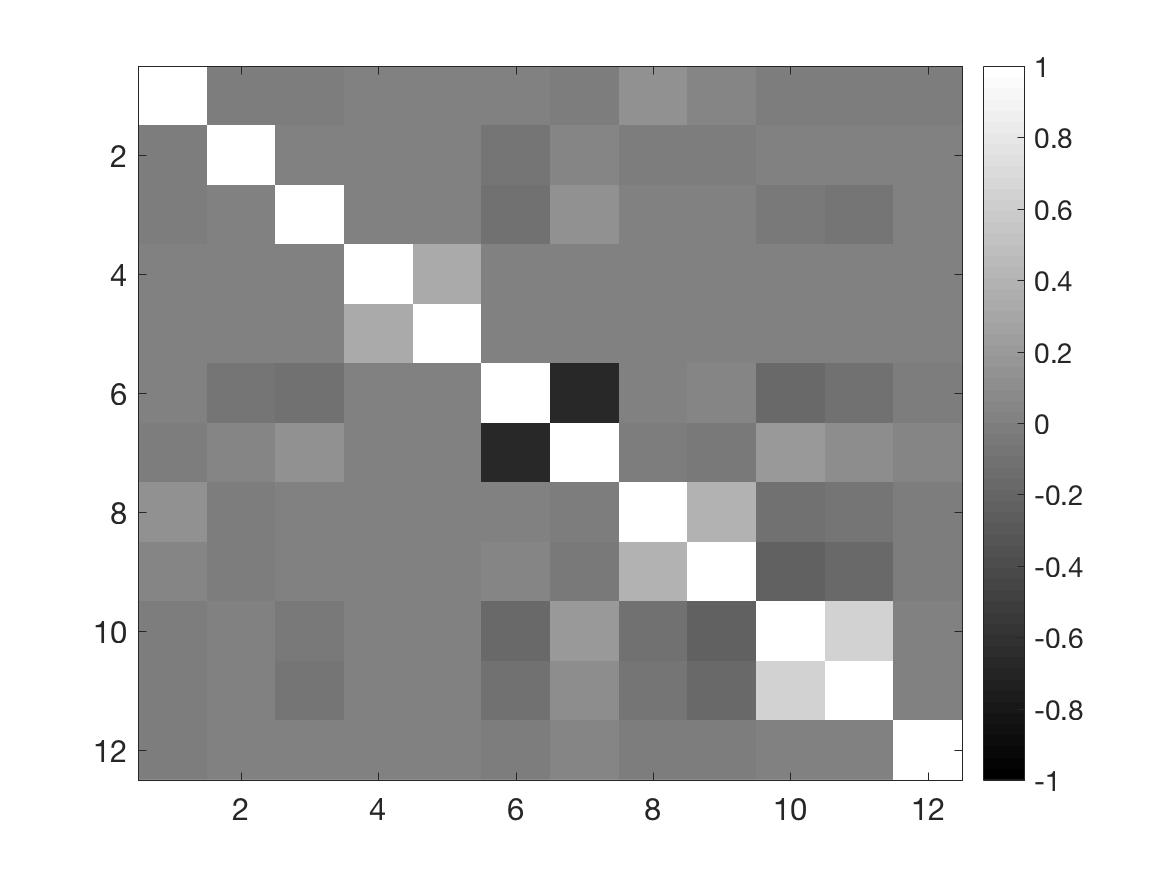}}
\subfloat[Adaptive Lasso, $\alpha=1.95$]{\label{fig:smmsq_12_alpha_195_ADAPT}\includegraphics[width=0.32\textwidth]{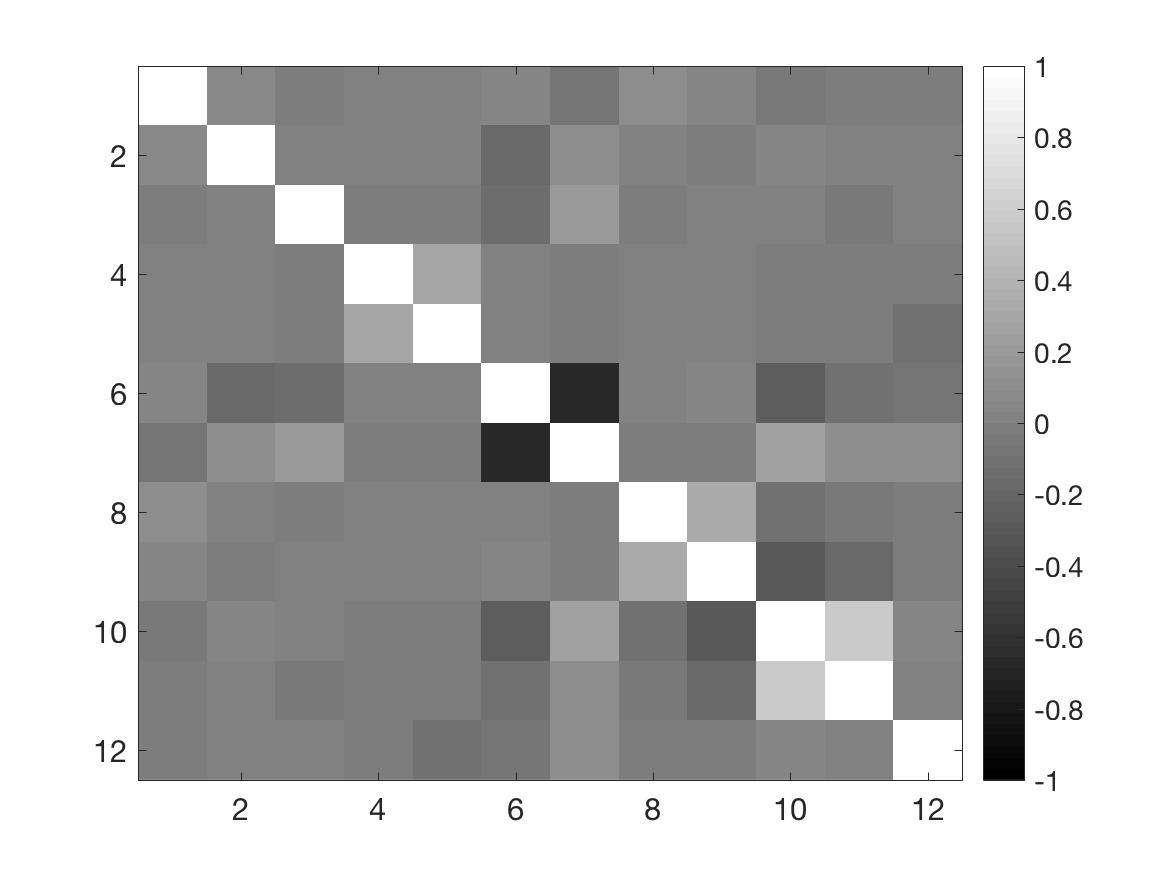}}
\subfloat[S--MMSQ, $\alpha=2.00$]{\label{fig:smmsq_12_alpha_200_SMMSQ}\includegraphics[width=0.32\textwidth]{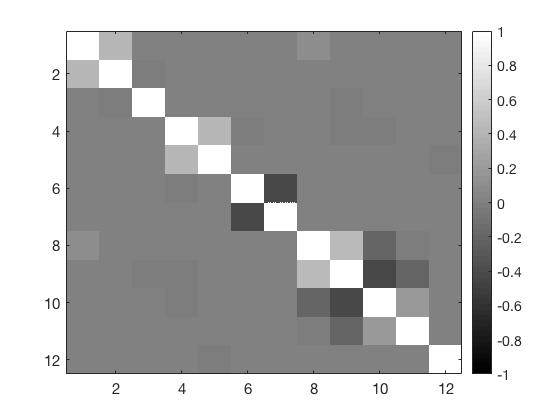}}\\
\subfloat[GLasso, $\alpha=2.00$]{\label{fig:smmsq_12_alpha_200_LASSO}\includegraphics[width=0.32\textwidth]{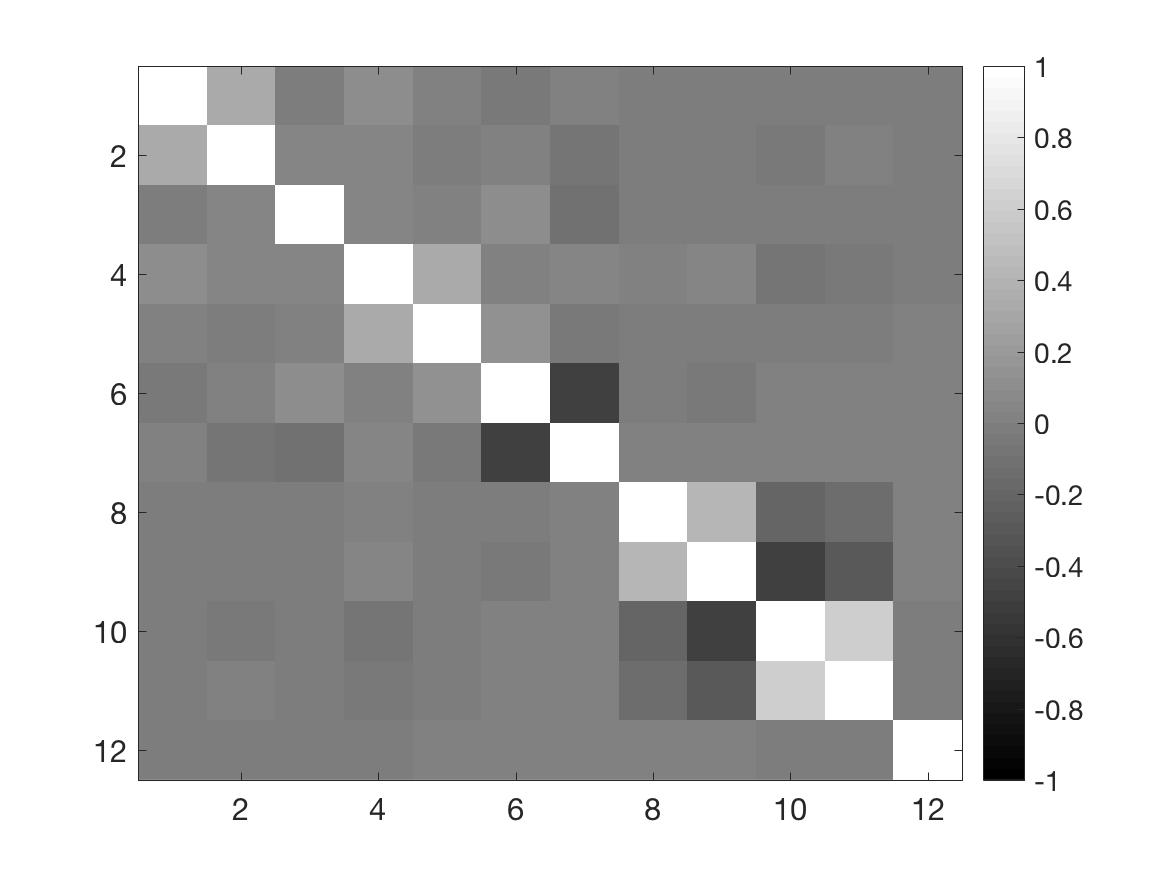}}
\subfloat[SCAD, $\alpha=2.00$]{\label{fig:smmsq_12_alpha_200_SCAD}\includegraphics[width=0.32\textwidth]{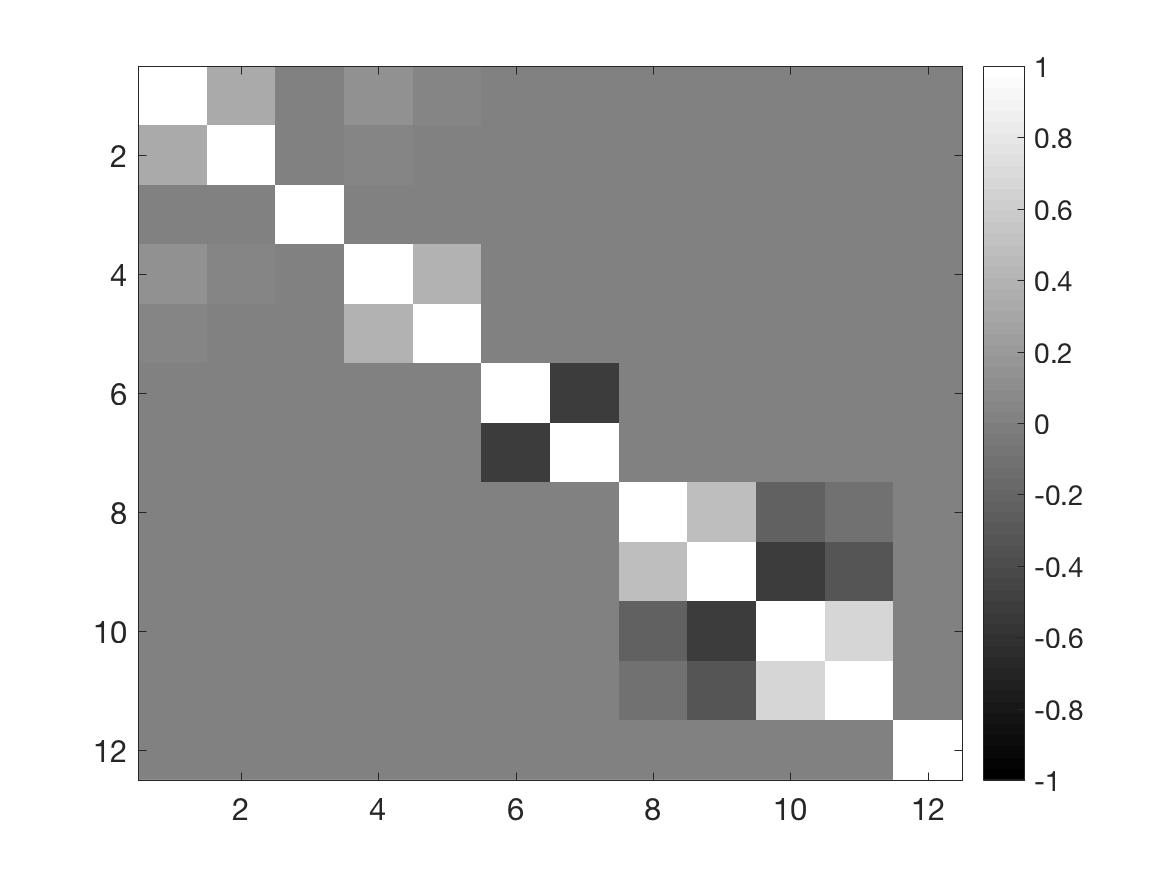}}
\subfloat[Adaptive Lasso, $\alpha=2.00$]{\label{fig:smmsq_12_alpha_200_ADAPT}\includegraphics[width=0.32\textwidth]{ESD_Ex1_Dim12_200_SCAD.jpg}}
\caption{\footnotesize{Band structure of the true and estimated scale matrices of the Elliptical Stable distribution experiment discussed in Section  \ref{sec:msq_synthetic_ex}, for $\alpha =\left(1.95,2.00\right)$ and sample size $n=200$.}}
\label{fig:smmsq_12_alpha_195_200}
\end{center}
\end{figure}
%
%
\begin{figure}[!ht]
\captionsetup{font={footnotesize}}
\begin{center}
\subfloat[GLASSO]{\label{fig:dim27_true}\includegraphics[width=0.4\textwidth]{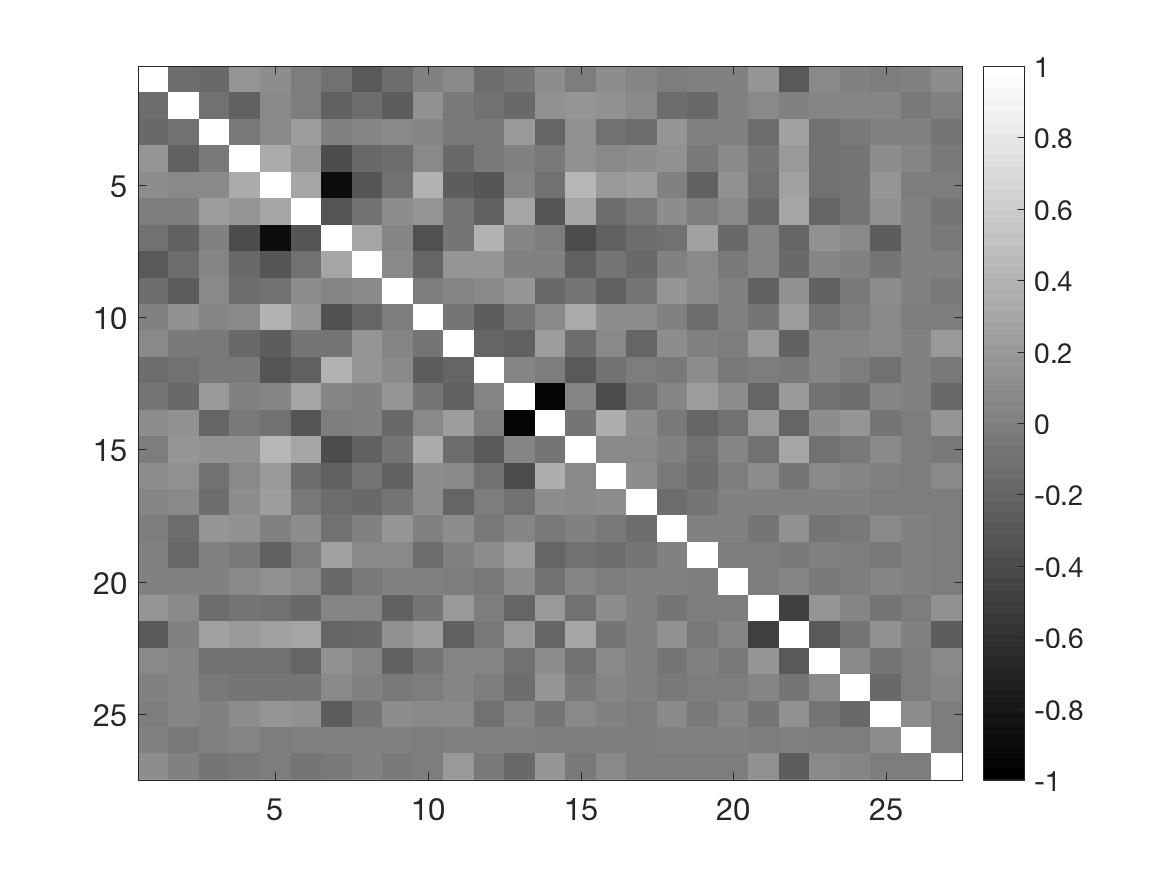}}\quad
\subfloat[S--MMSQ]{\label{fig:dim27}\includegraphics[width=0.4\textwidth]{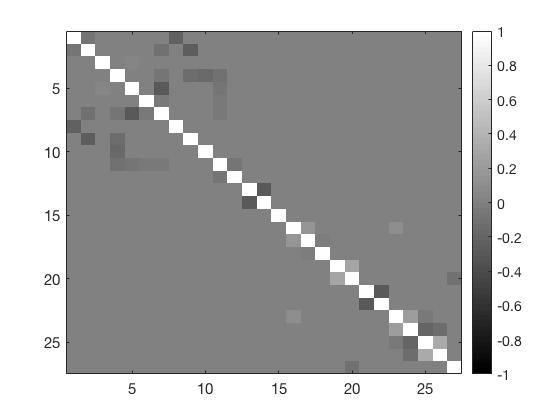}}\\
\subfloat[TRUE]{\label{fig:dim27}\includegraphics[width=0.4\textwidth]{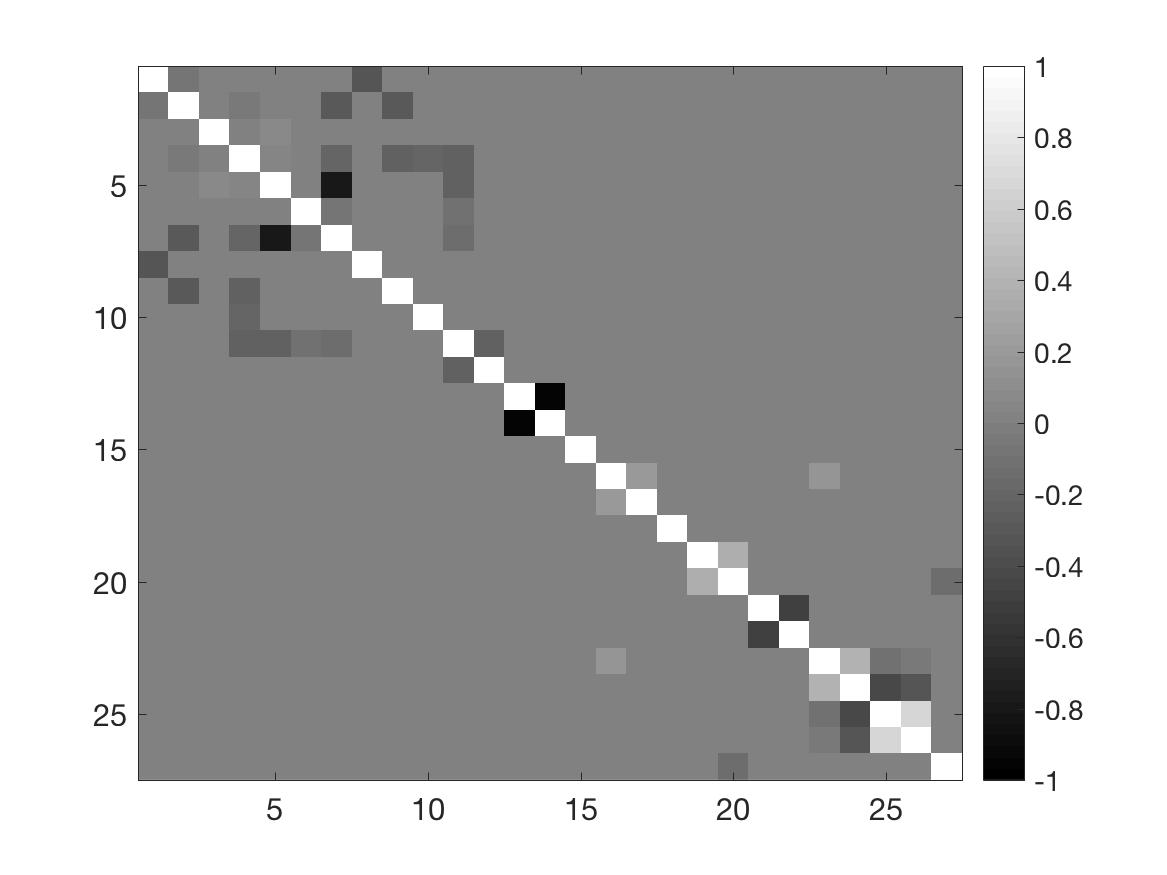}}\\
\subfloat[SCAD]{\label{fig:Austria_CoVaR}\includegraphics[width=0.4\textwidth]{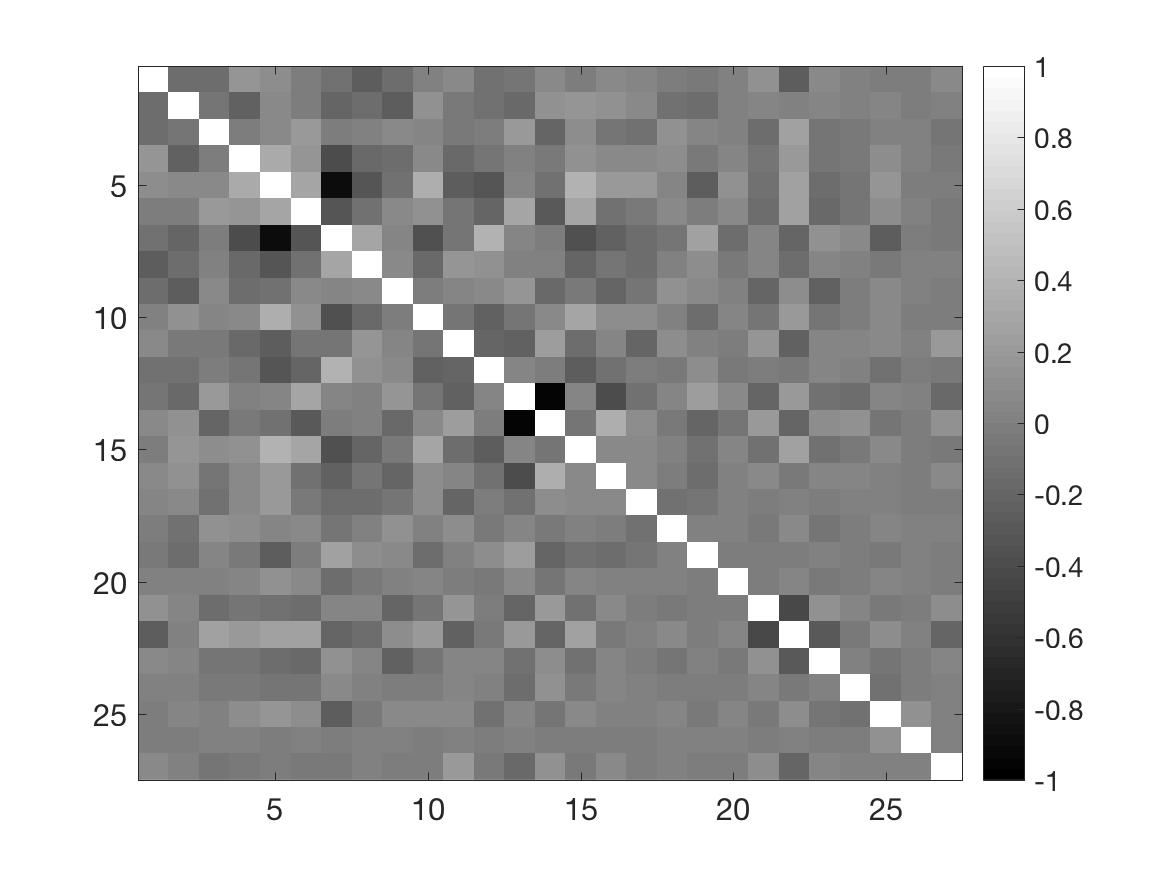}}\quad
\subfloat[Adaptive Lasso]{\label{fig:Austria_CoVaR}\includegraphics[width=0.4\textwidth]{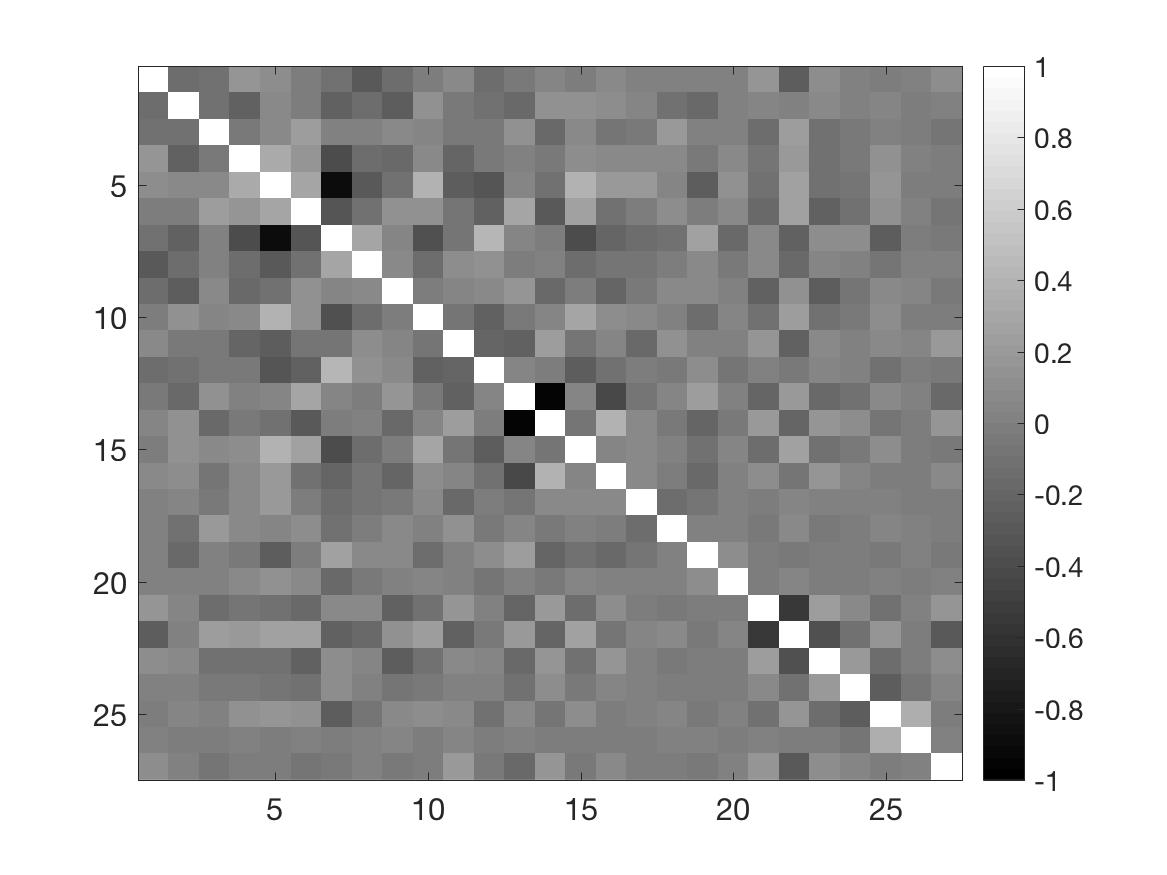}}\\
\caption{\footnotesize{Band structure of the true {\it (left)} and estimated {\it (right)} scale matrices through S--MMSQ of the $12$--dimensional Elliptical Stable simulated experiment discussed in Section  \ref{sec:msq_synthetic_ex}, for $\alpha =1.70$ and sample size $n=200$.}}
\label{fig:smmsq_27_alpha_170}
\end{center}
\end{figure}
%
%
\begin{figure}[!ht]
\captionsetup{font={footnotesize}}
\begin{center}
\subfloat[GLASSO]{\label{fig:dim27_true}\includegraphics[width=0.4\textwidth]{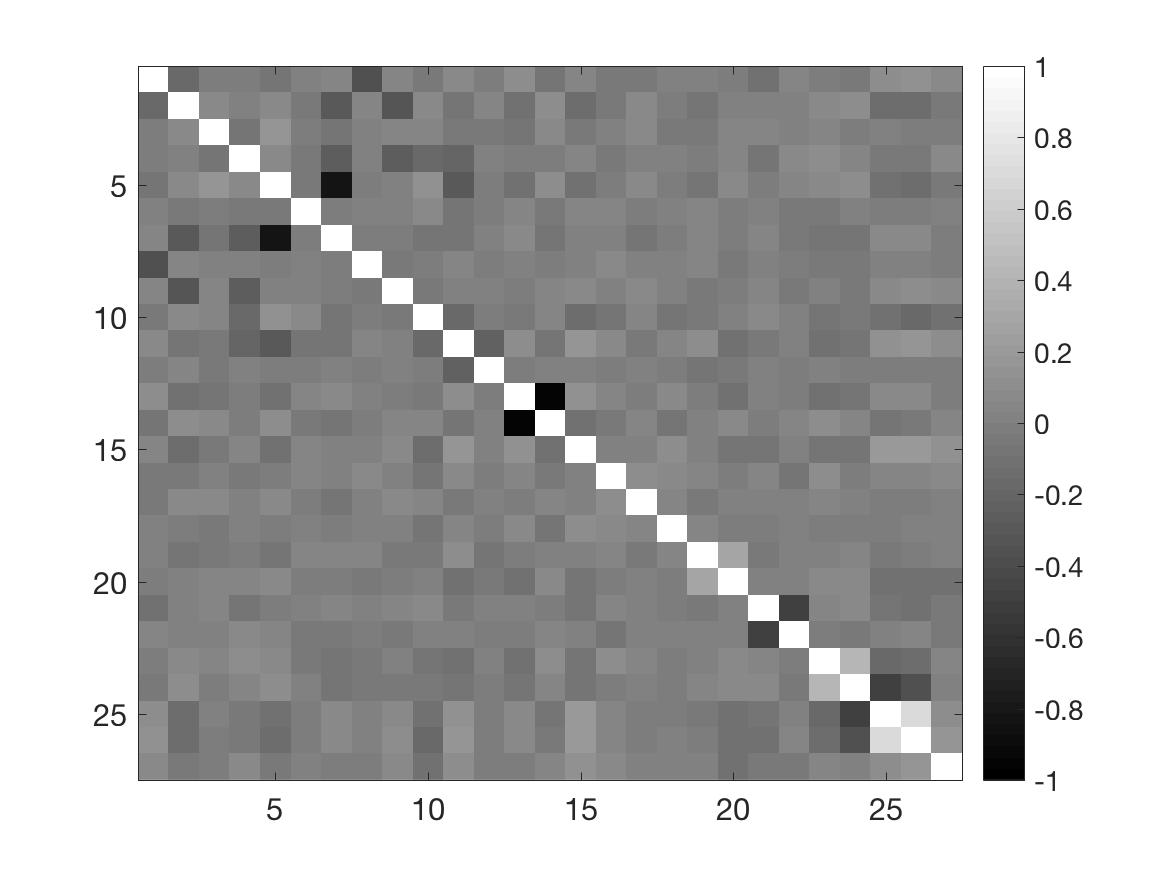}}\quad
\subfloat[S--MMSQ]{\label{fig:dim27}\includegraphics[width=0.4\textwidth]{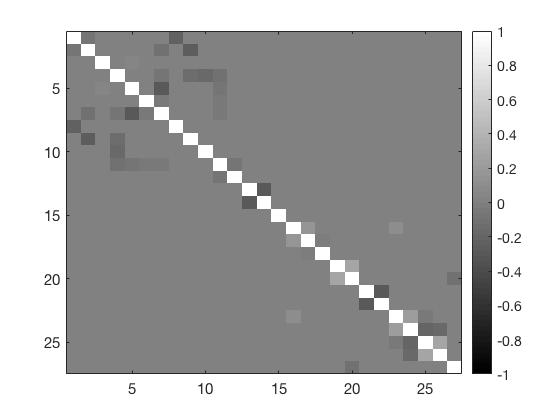}}\\
\subfloat[TRUE]{\label{fig:dim27}\includegraphics[width=0.4\textwidth]{ESD_Ex1_Dim27_TRUE.jpg}}\\
\subfloat[SCAD]{\label{fig:Austria_CoVaR}\includegraphics[width=0.4\textwidth]{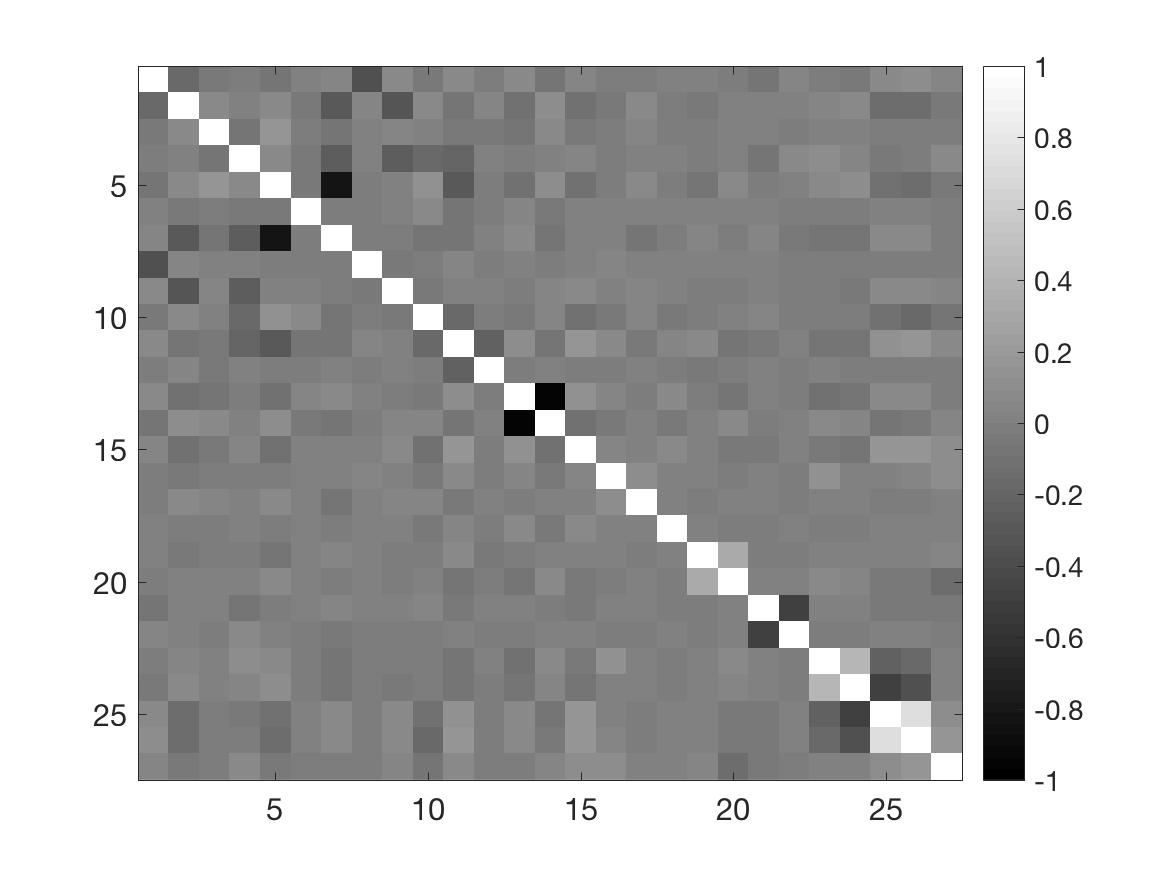}}\quad
\subfloat[Adaptive Lasso]{\label{fig:Austria_CoVaR}\includegraphics[width=0.4\textwidth]{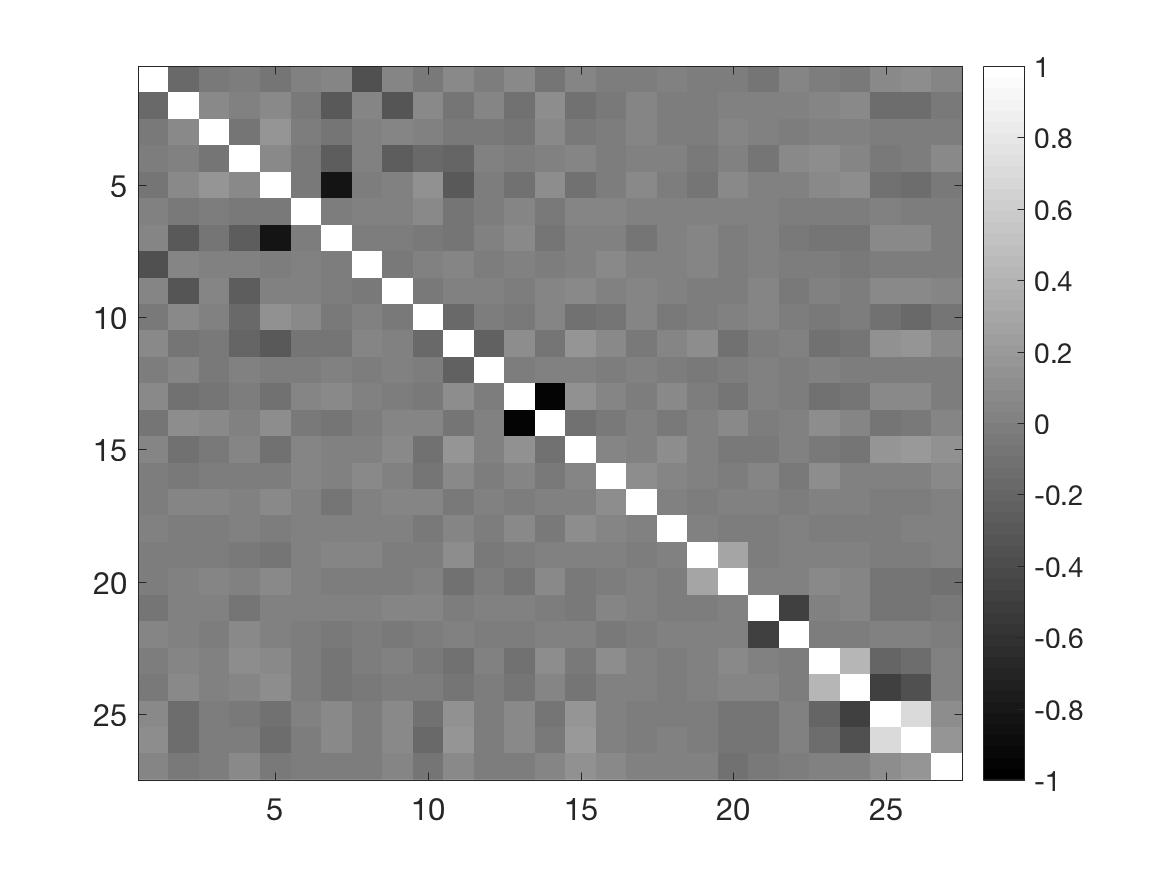}}\\
\caption{\footnotesize{Band structure of the true {\it (left)} and estimated {\it (right)} scale matrices through S--MMSQ of the $12$--dimensional Elliptical Stable simulated experiment discussed in Section  \ref{sec:msq_synthetic_ex}, for $\alpha =1.90$ and sample size $n=200$.}}
\label{fig:smmsq_27_alpha_190}
\end{center}
\end{figure}
%
%
\begin{figure}[!ht]
\captionsetup{font={footnotesize}}
\begin{center}
\subfloat[GLASSO]{\label{fig:dim27_true}\includegraphics[width=0.4\textwidth]{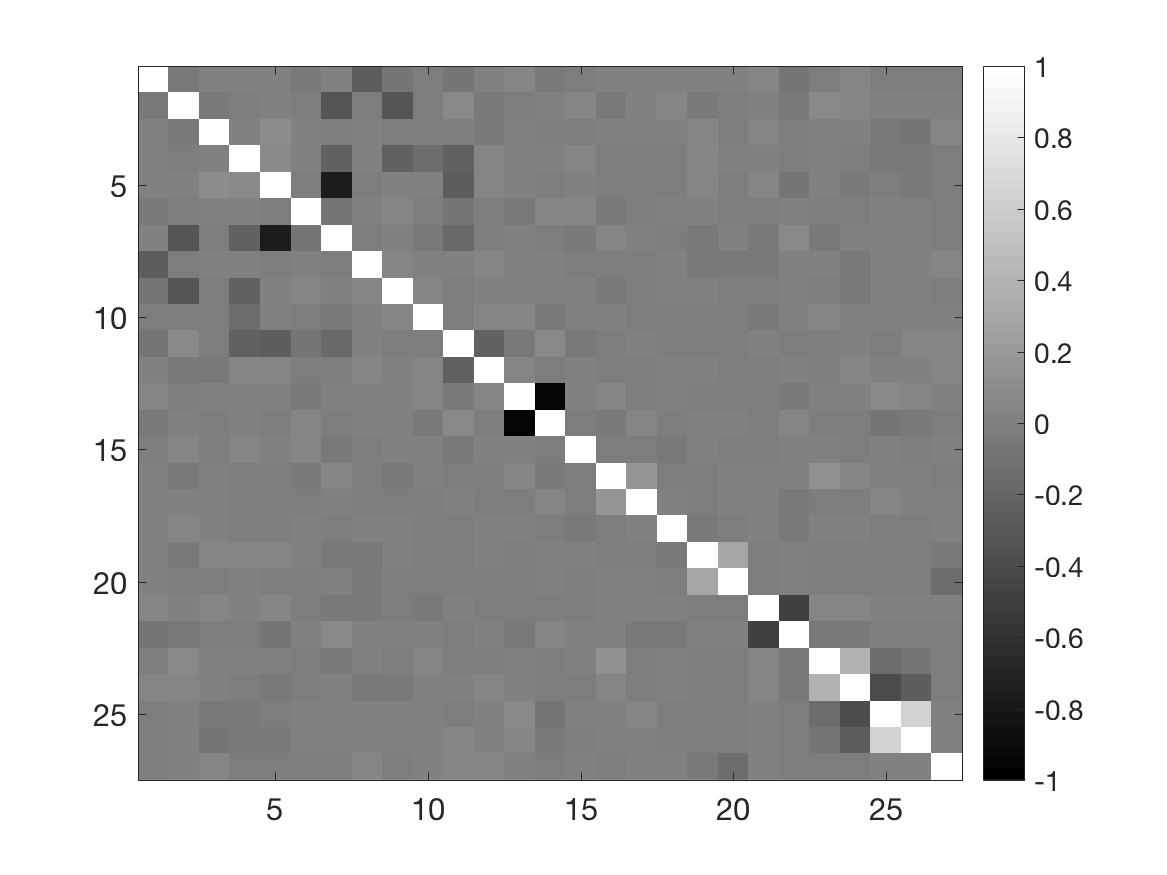}}\quad
\subfloat[S--MMSQ]{\label{fig:dim27}\includegraphics[width=0.4\textwidth]{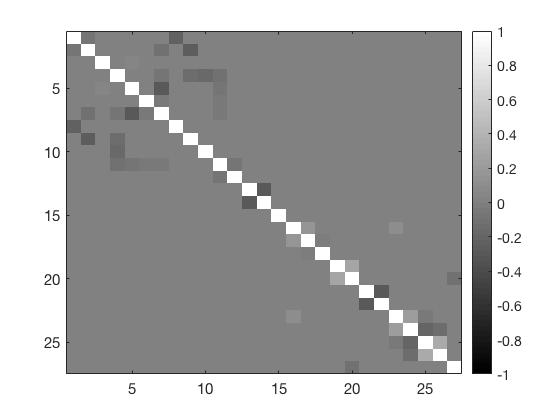}}\\
\subfloat[TRUE]{\label{fig:dim27}\includegraphics[width=0.4\textwidth]{ESD_Ex1_Dim27_TRUE.jpg}}\\
\subfloat[SCAD]{\label{fig:Austria_CoVaR}\includegraphics[width=0.4\textwidth]{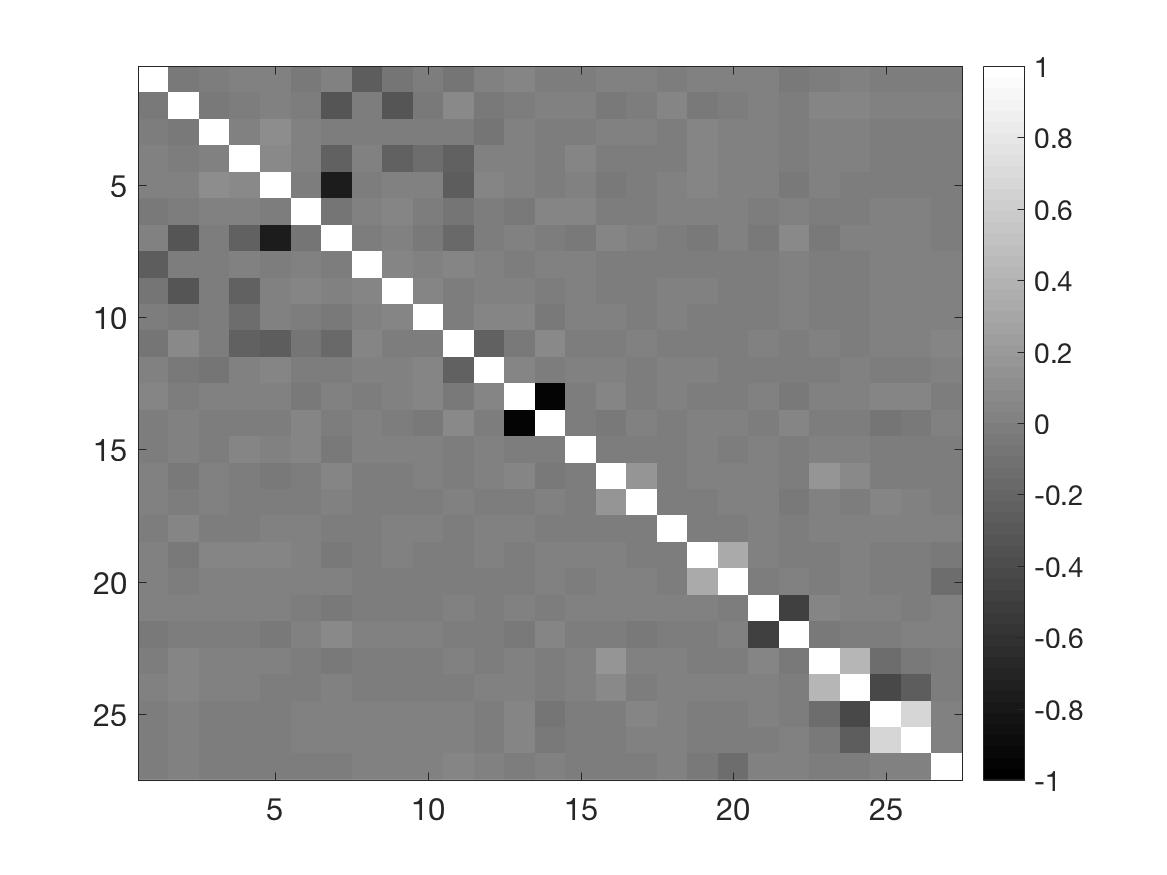}}\quad
\subfloat[Adaptive Lasso]{\label{fig:Austria_CoVaR}\includegraphics[width=0.4\textwidth]{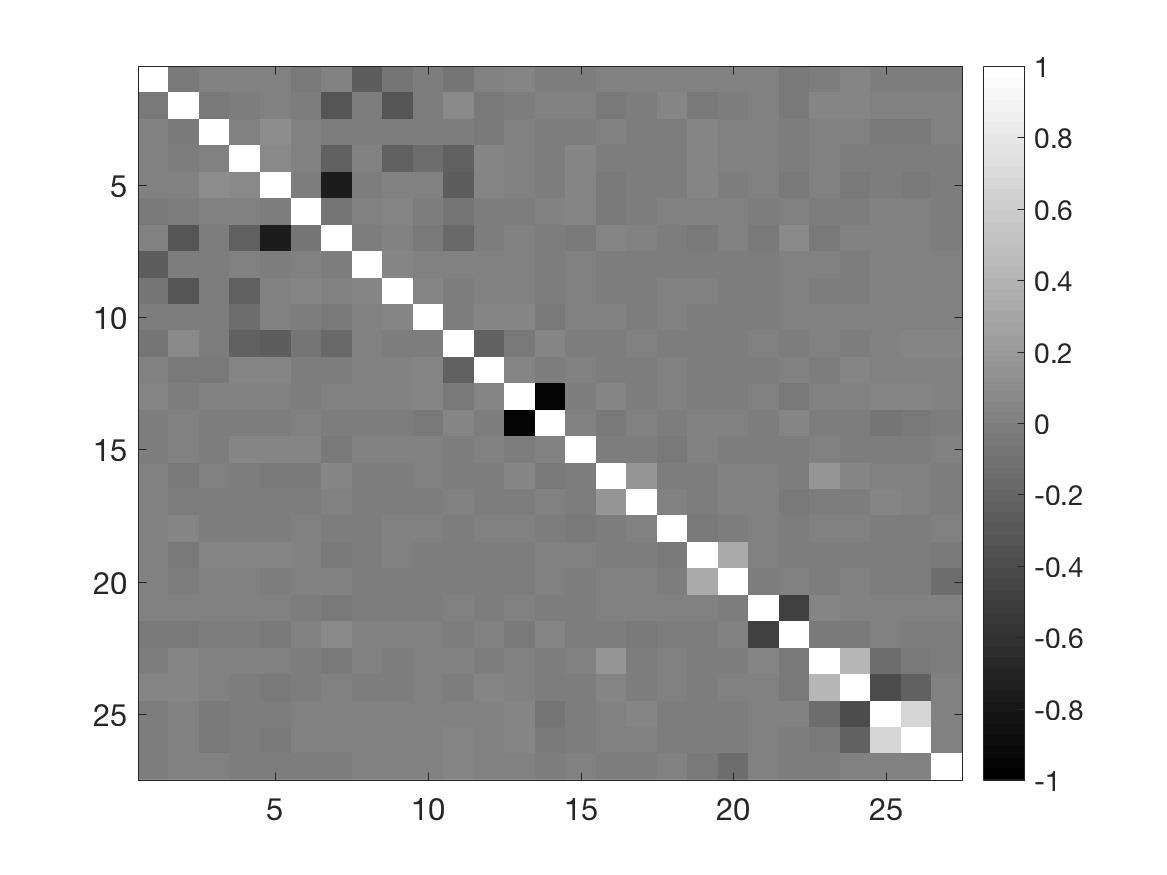}}\\
\caption{\footnotesize{Band structure of the true {\it (left)} and estimated {\it (right)} scale matrices through S--MMSQ of the $12$--dimensional Elliptical Stable simulated experiment discussed in Section  \ref{sec:msq_synthetic_ex}, for $\alpha =1.95$ and sample size $n=200$.}}
\label{fig:smmsq_27_alpha_195}
\end{center}
\end{figure}
%
%
\begin{figure}[!ht]
\captionsetup{font={footnotesize}}
\begin{center}
\subfloat[GLASSO]{\label{fig:dim27_true}\includegraphics[width=0.4\textwidth]{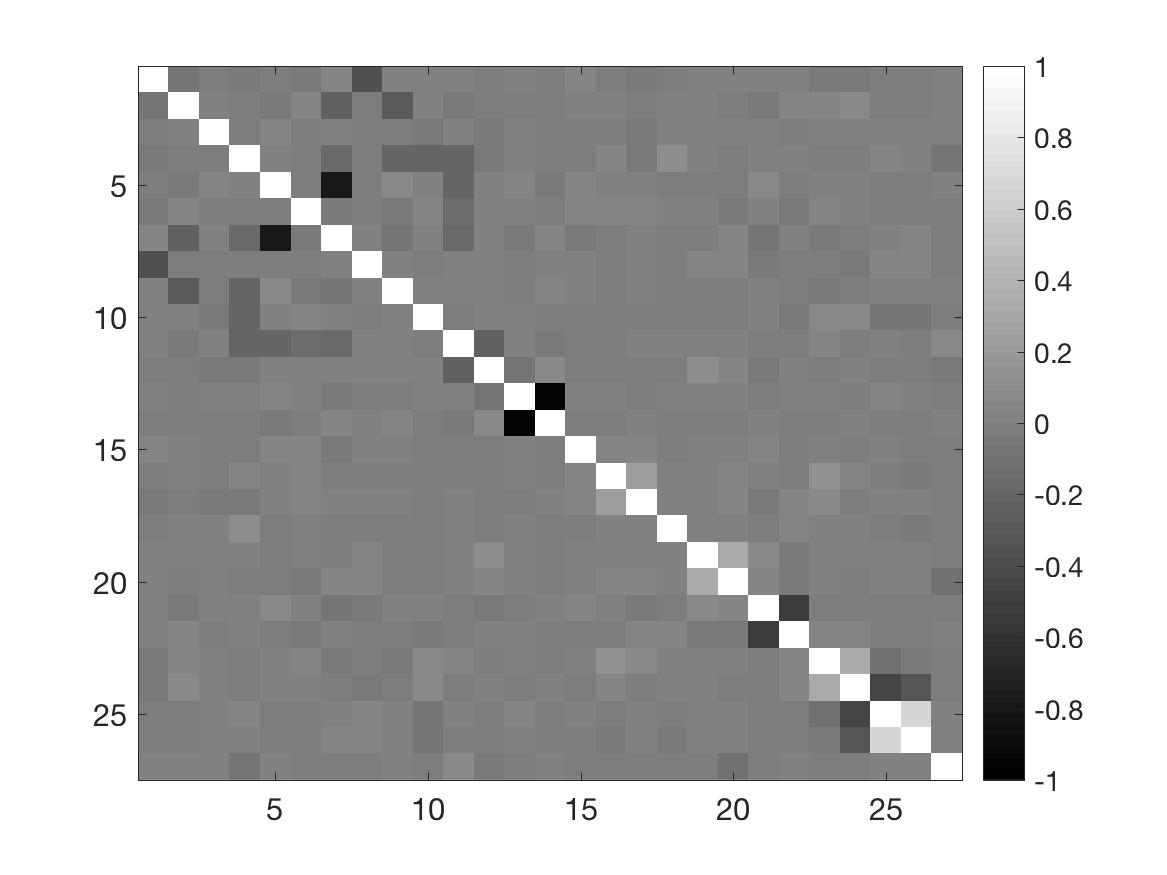}}\quad
\subfloat[S--MMSQ]{\label{fig:dim27}\includegraphics[width=0.4\textwidth]{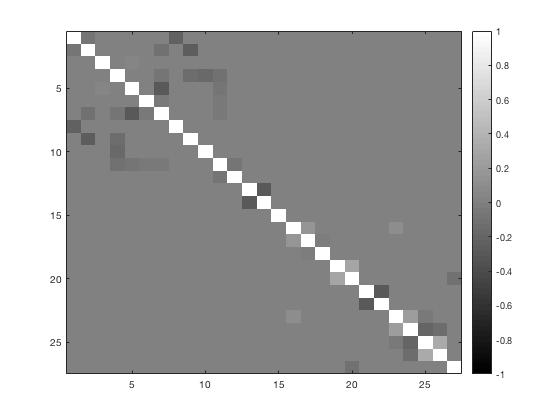}}\\
\subfloat[TRUE]{\label{fig:dim27}\includegraphics[width=0.4\textwidth]{ESD_Ex1_Dim27_TRUE.jpg}}\\
\subfloat[SCAD]{\label{fig:Austria_CoVaR}\includegraphics[width=0.4\textwidth]{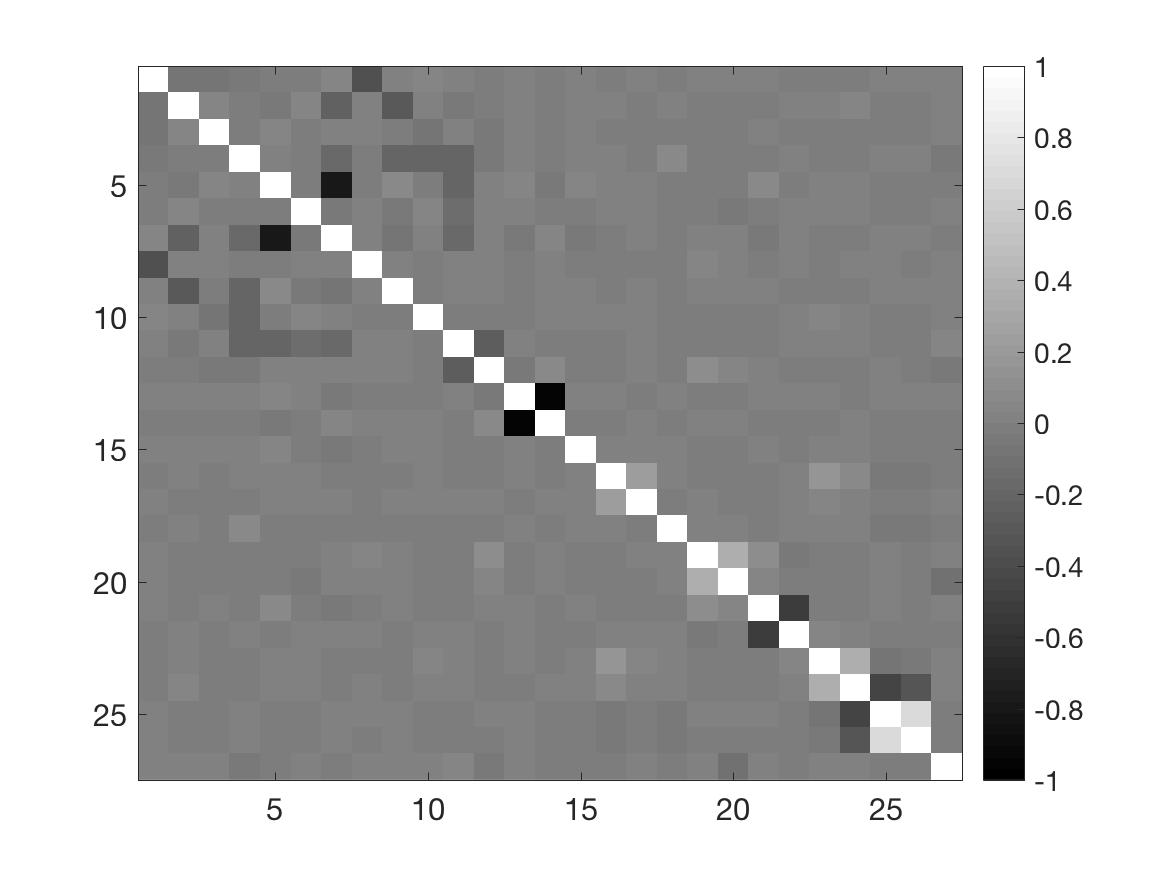}}\quad
\subfloat[Adaptive Lasso]{\label{fig:Austria_CoVaR}\includegraphics[width=0.4\textwidth]{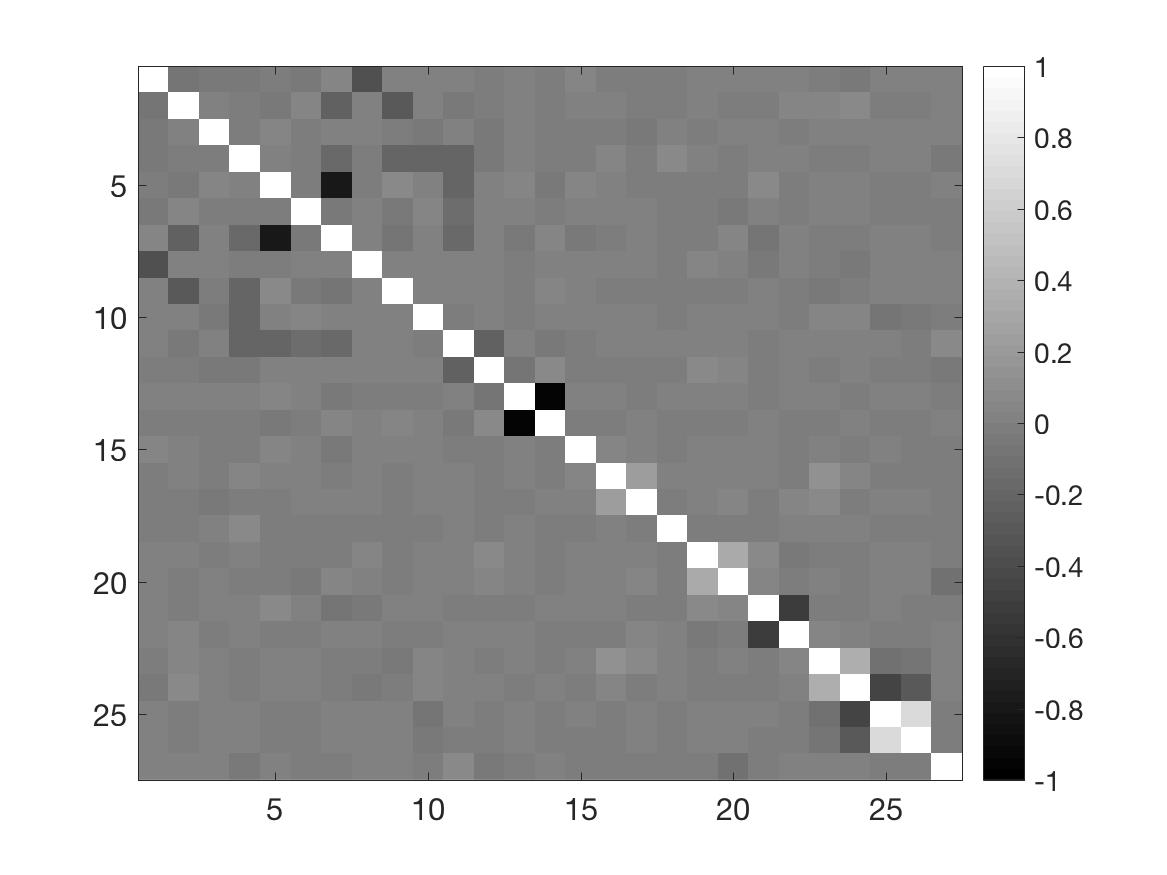}}\\
\caption{\footnotesize{Band structure of the true {\it (left)} and estimated {\it (right)} scale matrices through S--MMSQ of the $12$--dimensional Elliptical Stable simulated experiment discussed in Section  \ref{sec:msq_synthetic_ex}, for $\alpha =2.00$ and sample size $n=200$.}}
\label{fig:smmsq_27_alpha_200}
\end{center}
\end{figure}
%

%
\newpage
\clearpage

\begin{thebibliography}{}

\bibitem[Abdul-Hamid and Nolan, 1998]{abdul-hamid_nolan.1998}
Abdul-Hamid, H. and Nolan, J.~P. (1998).
\newblock Multivariate stable densities as functions of one-dimensional
  projections.
\newblock {\em J. Multivariate Anal.}, 67(1):80--89.

\bibitem[Acharya et~al., 2012]{acharya_etal.2012}
Acharya, V., Engle, R., and Richardson, M. (2012).
\newblock Capital shortfall: a new approach to ranking and regulating systemic
  risks.
\newblock {\em {American Economic Review}}, 102:59--64.

\bibitem[Adrian and Brunnermeier, 2011]{adrian_brunnermeier.2011}
Adrian, T. and Brunnermeier, M. (2011).
\newblock Covar.
\newblock {\em Working paper}.

\bibitem[Adrian and Brunnermeier, 2016]{adrian_brunnermeier.2016}
Adrian, T. and Brunnermeier, M.~K. (2016).
\newblock Covar.
\newblock {\em American Economic Review}, 106(7):1705--41.

\bibitem[Baldi et~al., 2000]{baldi_etal.2000}
Baldi, P., Brunak, S., Chauvin, Y., Andersen, C. A.~F., and Nielsen, H. (2000).
\newblock Assessing the accuracy of prediction algorithms for classification:
  an overview.
\newblock {\em Bioinformatics}, 16(5):412--424.

\bibitem[Benoit et~al., 2016]{benoit_etal.2016}
Benoit, S., Colliard, J.-E., Hurlin, C., and P{\'e}rignon, C. (2016).
\newblock Where the risks lie: A survey on systemic risk.
\newblock {\em Review of Finance}.

\bibitem[Bernardi and Catania, 2015]{bernardi_catania.2015}
Bernardi, M. and Catania, L. (2015).
\newblock Switching-gas copula models with application to systemic risk.
\newblock {\em arXiv preprint arXiv:1504.03733}.

\bibitem[Bernardi et~al., 2016a]{bernardi_etal.2016c}
Bernardi, M., Cerqueti, R., and Palestini, A. (2016a).
\newblock Allocation of risk capital in a cost cooperative game induced by a
  modified expected shortfall.
\newblock {\em arXiv preprint arXiv:1608.02365}.

\bibitem[Bernardi et~al., 2016b]{bernardi_etal.2016}
Bernardi, M., Durante, F., and Jaworski, P. (2016b).
\newblock Covar of families of copula.
\newblock {\em forthcoming Statistics \& Probability Letters}.

\bibitem[Bernardi et~al., 2017a]{bernardi_etal.2017}
Bernardi, M., Durante, F., and Jaworski, P. (2017a).
\newblock Co{V}a{R} of families of copulas.
\newblock {\em Statist. Probab. Lett.}, 120:8--17.

\bibitem[Bernardi et~al., 2017b]{bernardi_etal.2017b}
Bernardi, M., Durante, F., Jaworski, P., Petrella, L., and Salvadori, G.
  (2017b).
\newblock Conditional risk based on multivariate hazard scenarios.
\newblock {\em Stochastic Environmental Research and Risk Assessment}, pages
  1--9.

\bibitem[Bernardi et~al., 2016c]{bernardi_etal.2016b}
Bernardi, M., Durante, F., Jaworski, P., and Salvadori, G. (2016c).
\newblock Conditional risk based on multivariate hazard scenarios.
\newblock {\em submitted SERRA}.

\bibitem[Bernardi et~al., 2015]{bernardi_etal.2015}
Bernardi, M., Gayraud, G., and Petrella, L. (2015).
\newblock Bayesian tail risk interdependence using quantile regression.
\newblock {\em Bayesian Anal.}, 10(3):553--603.

\bibitem[Bernardi et~al., 2017c]{bernardi_etal.2017d}
Bernardi, M., Maruotti, A., and Petrella, L. (2017c).
\newblock Multiple risk measures for multivariate dynamic heavy--tailed models.
\newblock {\em Journal of Empirical Finance}, 43(Supplement C):1 -- 32.

\bibitem[Bien and Tibshirani, 2011]{bien_tibshirani.2011}
Bien, J. and Tibshirani, R.~J. (2011).
\newblock Sparse estimation of a covariance matrix.
\newblock {\em Biometrika}, 98(4):807--820.

\bibitem[Billio et~al., 2012]{billio_etal.2012}
Billio, M., Getmansky, M., Lo, A., and Pellizon, L. (2012).
\newblock Econometric measures of connectedness and systemic risk in the
  finance and insurance sectors.
\newblock {\em {Journal of Financial Econometrics}}, 101:535--559.

\bibitem[Bisias et~al., 2012]{bisias_etal.2012}
Bisias, D., Flood, M., Lo, A.~W., and Valavanis, S. (2012).
\newblock A survey of systemic risk analytics.
\newblock {\em Annual Review of Financial Economics}, 4(1):255--296.

\bibitem[Branco and Dey, 2001]{branco_dey.2001}
Branco, M.~D. and Dey, D.~K. (2001).
\newblock A general class of multivariate skew-elliptical distributions.
\newblock {\em J. Multivariate Anal.}, 79(1):99--113.

\bibitem[Brownlees and Engle, 2016]{brownlees_engle.2016}
Brownlees, C. and Engle, R.~F. (2016).
\newblock {SRISK}: {A} {C}onditional {C}apital {S}hortfall {M}easure of
  {S}ystemic {R}isk.
\newblock {\em Review of Financial Studies}.

\bibitem[Byczkowski et~al., 1993]{byczkowski_etal.1993}
Byczkowski, T., Nolan, J.~P., and Rajput, B. (1993).
\newblock Approximation of multidimensional stable densities.
\newblock {\em J. Multivariate Anal.}, 46(1):13--31.

\bibitem[Caporin et~al., 2013]{caporin_etal.2013}
Caporin, M., Pelizzon, L., Ravazzolo, F., and Rigobon, R. (2013).
\newblock Measuring sovereign contagion in europe.
\newblock Technical report, National Bureau of Economic Research.

\bibitem[Castro and Ferrari, 2014]{castro_ferrari.2014}
Castro, C. and Ferrari, S. (2014).
\newblock Measuring and testing for the systemically important financial
  institutions.
\newblock {\em Journal of Empirical Finance}, 25(0):1 -- 14.

\bibitem[Cram{\'e}r, 1946]{cramer.1946}
Cram{\'e}r, H. (1946).
\newblock {\em Mathematical {M}ethods of {S}tatistics}.
\newblock Princeton Mathematical Series, vol. 9. Princeton University Press,
  Princeton, N. J.

\bibitem[Dominicy et~al., 2013]{dominicy_etal.2013}
Dominicy, Y., Ogata, H., and Veredas, D. (2013).
\newblock Inference for vast dimensional elliptical distributions.
\newblock {\em Comput. Statist.}, 28(4):1853--1880.

\bibitem[Dominicy and Veredas, 2013]{dominicy_veredas.2013}
Dominicy, Y. and Veredas, D. (2013).
\newblock The method of simulated quantiles.
\newblock {\em J. Econometrics}, 172(2):235--247.

\bibitem[Embrechts et~al., 2005]{embrechts_etal.2005}
Embrechts, P., Frey, R., and McNeil, A. (2005).
\newblock Quantitative risk management.
\newblock {\em Princeton Series in Finance, Princeton}, 10.

\bibitem[Engle et~al., 2014]{engle_etal.2014}
Engle, R., Jondeau, E., and Rockinger, M. (2014).
\newblock Systemic risk in europe.
\newblock {\em Review of Finance}.

\bibitem[Fan et~al., 2009]{fan_etal.2009}
Fan, J., Feng, Y., and Wu, Y. (2009).
\newblock Network exploration via the adaptive lasso and {SCAD} penalties.
\newblock {\em Ann. Appl. Stat.}, 3(2):521--541.

\bibitem[Fan and Li, 2001]{fan_li.2001}
Fan, J. and Li, R. (2001).
\newblock Variable selection via nonconcave penalized likelihood and its oracle
  properties.
\newblock {\em J. Amer. Statist. Assoc.}, 96(456):1348--1360.

\bibitem[Fang et~al., 1990]{fang_etal1990}
Fang, K.~T., Kotz, S., and Ng, K.~W. (1990).
\newblock {\em Symmetric multivariate and related distributions}, volume~36 of
  {\em Monographs on Statistics and Applied Probability}.
\newblock Chapman and Hall, Ltd., London.

\bibitem[Friedman et~al., 2008]{friedman_etal.2008}
Friedman, J., Hastie, T., and Tibshirani, R. (2008).
\newblock Sparse inverse covariance estimation with the graphical lasso.
\newblock {\em Biostatistics (Oxford, England)}, 9(3):432--441.

\bibitem[Gallant and Tauchen, 1996]{gallant_tauchen.1996}
Gallant, A.~R. and Tauchen, G. (1996).
\newblock Which moments to match?
\newblock {\em Econometric Theory}, 12(4):657--681.

\bibitem[Gao and Massam, 2015]{gao_massam.2015}
Gao, X. and Massam, H. (2015).
\newblock Estimation of symmetry-constrained {G}aussian graphical models:
  application to clustered dense networks.
\newblock {\em J. Comput. Graph. Statist.}, 24(4):909--929.

\bibitem[Girardi and Erg\"{u}n, 2013]{girardi_ergun.2013}
Girardi, G. and Erg\"{u}n, A. (2013).
\newblock Systemic risk measurement: Multivariate garch estimation of covar.
\newblock {\em {Journal of Banking \& Finance}}, 37:3169--3180.

\bibitem[Gouri{\'e}roux and Monfort, 1996]{gourieroux_monfort.1996}
Gouri{\'e}roux, C. and Monfort, A. (1996).
\newblock {\em Simulation-based econometric methods}.
\newblock CORE lectures. Oxford University Press.

\bibitem[Gouri{\'e}roux et~al., 1993]{gourieroux_etal.1993}
Gouri{\'e}roux, C., Monfort, A., and Renault, E. (1993).
\newblock Indirect inference.
\newblock {\em Journal of Applied Econometrics}, 8(S1):S85--S118.

\bibitem[Hallin et~al., 2010a]{hallin_etal.2010b}
Hallin, M., Paindaveine, D., and {\v{S}}iman, M. (2010a).
\newblock Multivariate quantiles and multiple-output regression quantiles: from
  {$L_1$} optimization to halfspace depth.
\newblock {\em Ann. Statist.}, 38(2):635--669.

\bibitem[Hallin et~al., 2010b]{hallin_etal.2010}
Hallin, M., Paindaveine, D., and {\v{S}}iman, M. (2010b).
\newblock Rejoinder [mr2604670; mr2604671; mr2604672; mr2604673].
\newblock {\em Ann. Statist.}, 38(2):694--703.

\bibitem[Hansen, 1982]{hansen.1982}
Hansen, L.~P. (1982).
\newblock Large sample properties of generalized method of moments estimators.
\newblock {\em Econometrica}, 50(4):1029--1054.

\bibitem[Hautsch et~al., 2014]{hautsch_etal.2014}
Hautsch, N., Schaumburg, J., and Schienle, M. (2014).
\newblock Financial network systemic risk contributions.
\newblock {\em Review of Finance}.

\bibitem[Hunter and Li, 2005]{hunter_li.2005}
Hunter, D.~R. and Li, R. (2005).
\newblock Variable selection using mm algorithms.
\newblock {\em Ann. Statist.}, 33(4):1617--1642.

\bibitem[Jiang and Turnbull, 2004]{jiang_turnbull.2004}
Jiang, W. and Turnbull, B. (2004).
\newblock The indirect method: inference based on intermediate statistics---a
  synthesis and examples.
\newblock {\em Statist. Sci.}, 19(2):239--263.

\bibitem[Kim and White, 2004]{Kim_white.2004}
Kim, T.-H. and White, H. (2004).
\newblock On more robust estimation of skewness and kurtosis.
\newblock {\em Finance Research Letters}, 1(1):56--73.

\bibitem[Koenker, 2005]{koenker.2005}
Koenker, R. (2005).
\newblock {\em Quantile regression}, volume~38 of {\em Econometric Society
  Monographs}.
\newblock Cambridge University Press, Cambridge.

\bibitem[Kong and Mizera, 2012]{kong_mizera.2012}
Kong, L. and Mizera, I. (2012).
\newblock Quantile tomography: using quantiles with multivariate data.
\newblock {\em Statist. Sinica}, 22(4):1589--1610.

\bibitem[Koponen, 1995]{ismo.1995}
Koponen, I. (1995).
\newblock Analytic approach to the problem of convergence of truncated l\'evy
  flights towards the gaussian stochastic process.
\newblock {\em Phys. Rev. E}, 52:1197--1199.

\bibitem[Kristensen and Shin, 2012]{kristensen_shin.2012}
Kristensen, D. and Shin, Y. (2012).
\newblock Estimation of dynamic models with nonparametric simulated maximum
  likelihood.
\newblock {\em J. Econometrics}, 167(1):76--94.

\bibitem[Lombardi and Veredas, 2009]{lombardi_veredas.2009}
Lombardi, M.~J. and Veredas, D. (2009).
\newblock Indirect estimation of elliptical stable distributions.
\newblock {\em Comput. Statist. Data Anal.}, 53(6):2309--2324.

\bibitem[Lucas et~al., 2014]{lucas_etal.2014}
Lucas, A., Schwaab, B., and Zhang, X. (2014).
\newblock Conditional euro area sovereign default risk.
\newblock {\em Journal of Business \& Economic Statistics}, 32(2):271--284.

\bibitem[Mainik and Schaanning, 2014]{mainik_schaanning.2014}
Mainik, G. and Schaanning, E. (2014).
\newblock {On dependence consistency of CoVaR and some other systemic risk
  measures}.
\newblock {\em Statistics \& Risk Modeling}, 31(1):47--77.

\bibitem[Mathai and Moschopoulos, 1992]{mathai_moschopoulos.2015}
Mathai, A.~M. and Moschopoulos, P.~G. (1992).
\newblock A form of multivariate gamma distribution.
\newblock {\em Ann. Inst. Statist. Math.}, 44(1):97--106.

\bibitem[Matsui and Takemura, 2009]{matsui_takemura.2009}
Matsui, M. and Takemura, A. (2009).
\newblock Integral representations of one-dimensional projections for
  multivariate stable densities.
\newblock {\em Journal of Multivariate Analysis}, 100(3):334 -- 344.

\bibitem[McCulloch, 1986]{mcculloch.1986}
McCulloch, J.~H. (1986).
\newblock Simple consistent estimators of stable distribution parameters.
\newblock {\em Communications in Statistics-Simulation and Computation},
  15(4):1109--1136.

\bibitem[McFadden, 1989]{mcfadden.1989}
McFadden, D. (1989).
\newblock A method of simulated moments for estimation of discrete response
  models without numerical integration.
\newblock {\em Econometrica}, 57(5):995--1026.

\bibitem[McNeil et~al., 2015]{mcneil_etal.2015}
McNeil, A.~J., Frey, R., and Embrechts, P. (2015).
\newblock {\em Quantitative risk management}.
\newblock Princeton Series in Finance. Princeton University Press, Princeton,
  NJ, revised edition.
\newblock Concepts, techniques and tools.

\bibitem[Meinshausen and B\"{u}hlmann, 2006]{meinshausen_buhlmann.2006}
Meinshausen, N. and B\"{u}hlmann, P. (2006).
\newblock High-dimensional graphs and variable selection with the lasso.
\newblock {\em Ann. Statist.}, 34(3):1436--1462.

\bibitem[Nolan, 2008]{nolan.2008}
Nolan, J.~P. (2008).
\newblock An overview of multivariate stable distributions.
\newblock {\em Online: http://academic2. american. edu/~
  jpnolan/stable/overview. pdf}, 2(008).

\bibitem[Nolan, 2013]{nolan.2013}
Nolan, J.~P. (2013).
\newblock Multivariate elliptically contoured stable distributions: theory and
  estimation.
\newblock {\em Comput. Statist.}, 28(5):2067--2089.

\bibitem[Oh and Patton, 2013]{oh_patton.2013}
Oh, D.~H. and Patton, A.~J. (2013).
\newblock Simulated method of moments estimation for copula-based multivariate
  models.
\newblock {\em Journal of the American Statistical Association},
  108(502):689--700.

\bibitem[Paindaveine and {\v{S}}iman, 2011]{paindaveine_siman.2011}
Paindaveine, D. and {\v{S}}iman, M. (2011).
\newblock On directional multiple-output quantile regression.
\newblock {\em J. Multivariate Anal.}, 102(2):193--212.

\bibitem[Salvadori et~al., 2016]{salvadori_etal.2016}
Salvadori, G., Durante, F., De~Michele, C., Bernardi, M., and Petrella, L.
  (2016).
\newblock A multivariate copula-based framework for dealing with hazard
  scenarios and failure probabilities.
\newblock {\em Water Resources Research}, 52(5):3701--3721.

\bibitem[Samorodnitsky and Taqqu, 1994]{samorodnitsky_etal.1994}
Samorodnitsky, G. and Taqqu, M.~S. (1994).
\newblock {\em Stable non-{G}aussian random processes}.
\newblock Stochastic Modeling. Chapman \& Hall, New York.
\newblock Stochastic models with infinite variance.

\bibitem[Serfling, 2002]{serfling.2002}
Serfling, R. (2002).
\newblock Quantile functions for multivariate analysis: approaches and
  applications.
\newblock {\em Statist. Neerlandica}, 56(2):214--232.
\newblock Special issue: Frontier research in theoretical statistics, 2000
  (Eindhoven).

\bibitem[Sordo et~al., 2015]{sordo_etal.2015}
Sordo, M.~A., Su\'arez-Llorens, A., and Bello, A.~J. (2015).
\newblock Comparison of conditional distributions in portfolios of dependent
  risks.
\newblock {\em Insurance: Mathematics and Economics}, 61:62 -- 69.

\bibitem[Tibshirani, 1996]{tibshirani.1996}
Tibshirani, R. (1996).
\newblock Regression shrinkage and selection via the lasso.
\newblock {\em J. Roy. Statist. Soc. Ser. B}, 58(1):267--288.

\bibitem[Wang, 2010]{wang.2010}
Wang, H. (2010).
\newblock Sparse seemingly unrelated regression modelling: applications in
  finance and econometrics.
\newblock {\em Comput. Statist. Data Anal.}, 54(11):2866--2877.

\bibitem[Wang, 2015]{wang.2015}
Wang, H. (2015).
\newblock Scaling it up: Stochastic search structure learning in graphical
  models.
\newblock {\em Bayesian Anal.}, 10(2):351--377.

\bibitem[Zhang, 2010]{zhang.2010}
Zhang, C.-H. (2010).
\newblock Nearly unbiased variable selection under minimax concave penalty.
\newblock {\em Ann. Statist.}, 38(2):894--942.

\bibitem[Zolotarev, 1964]{zolotarev.1964}
Zolotarev, V.~M. (1964).
\newblock On the representation of stable laws by integrals.
\newblock {\em Trudy Mat. Inst. Steklov.}, 71:46--50.

\bibitem[Zou, 2006]{zou.2006}
Zou, H. (2006).
\newblock The adaptive lasso and its oracle properties.
\newblock {\em J. Amer. Statist. Assoc.}, 101(476):1418--1429.

\end{thebibliography}

\end{document}